\documentclass{amsart}

\usepackage{amsmath,amssymb,amscd,amsthm,latexsym,accents,stmaryrd}
\usepackage{amsfonts}
\usepackage[latin1]{inputenc}

\usepackage{graphicx}
\usepackage{float}
\usepackage{subfig}
\usepackage{caption}

\usepackage[all]{xy}
\usepackage{enumerate}
\usepackage{amsaddr}
\usepackage{tikz}
\usetikzlibrary{decorations.text}
\usetikzlibrary{patterns}
\usepackage{mathtools}
\usetikzlibrary{automata,positioning}

\newtheorem{theor}{Theorem}[subsection]
\newtheorem{lem}[theor]{Lemma}
\newtheorem{prop}[theor]{Proposition}

\theoremstyle{definition}

\newtheorem{ex}[theor]{Example}

\newcommand{\we}{\stackrel{\sim}{\rightarrow}}

\def\R{{\mathbb{R}}}

\def\im{{\rm{im}\,}}

\def\length{{\rm{length}}}
\def\A{{\mathcal{A}}}
\def\B{{\mathcal{B}}}
\def\C{{\mathcal{C}}}

\def\deg{{\rm{deg}}}

\def\star{{\rm{star}}}

\begin{document}

\title{Topological abstraction of higher-dimensional automata}

\author{Thomas Kahl}

\address{Centro de Matem\'atica,
Universidade do Minho, Campus de Gualtar,\\
4710-057 Braga,
Portugal. 
}

\email{kahl@math.uminho.pt
}

\subjclass[2010]{68Q85, 68Q45, 55N35, 55U99}

\keywords{Higher-dimensional automata, topological abstraction, trace category, homology graph, cube collapse}

\begin{abstract}
Higher-dimensional automata constitute a very  expressive model for concurrent systems. In this paper, we discuss ``topological abstraction" of higher-dimensional automata, i.e., the replacement of HDAs by smaller ones that can be considered equivalent from both a computer scientific and a topological point of view. By definition, topological abstraction  preserves the homotopy type, the trace category, and the homology graph of an HDA. We establish conditions under which cube collapses yield topological abstractions of HDAs.
\end{abstract}

\maketitle

\section{Introduction}

\subsection{Higher-dimensional automata}

A higher-dimensional automaton (HDA) is an automaton with nicely incorporated squares and higher-dimensional cubes, which represent independence of actions. More formally, an HDA  over a monoid $M$ is a precubical set with initial and final states and with 1-cubes labeled by elements of $M$ such that opposite edges of 2-cubes have the same label. Higher-dimensional automata provide a powerful model for concurrent systems in which many classical formalisms to describe such systems can be interpreted \cite{CridligGoubault, GaucherProcess, vanGlabbeek, GoubaultDomains,  GoubaultJensen,  GoubaultMimram}. The concept of higher-dimensional automaton goes back to Pratt \cite{Pratt}. The notion defined above is essentially the one introduced by van Glabbeek (see \cite{vanGlabbeek}).

Let us indicate by means of an example how HDAs can be used to model concurrent systems whose component processes are given by program graphs or transition diagrams in the sense of \cite{BaierKatoen, MannaPnueli}. The construction is based on the concurrent interpretation \cite{vanGlabbeek} of the transition system semantics for such systems  \cite{BaierKatoen, MannaPnueli}. Consider the program graph on two boolean variables $x$ and $y$ depicted in Figure \ref{PG}. 
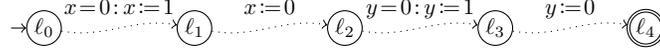
\begin{figure}[t]
\begin{tikzpicture}[initial text={},on grid] 
     
 \node[state,initial by arrow, initial where=left, initial distance=0.2cm,minimum size=0pt,inner sep =1pt,fill=white] (q_0)   {\scalebox{0.85}{$\ell_0$}}; 
 
 \node[state,minimum size=0pt,inner sep =1pt,fill=white] (q_1) [right=of q_0,xshift=1cm] {\scalebox{0.85}{$\ell_1$}};
    
 \node[state,minimum size=0pt,inner sep =1pt,fill=white] (q_2) [right=of q_1,xshift=1cm] {\scalebox{0.85}{$\ell_2$}};
   
 \node[state,minimum size=0pt,inner sep =1pt,fill=white] (q_3) [right=of q_2,xshift=1cm] {\scalebox{0.85}{$\ell_3$}};
   
 \node[state,accepting,minimum size=0pt,inner sep =1pt,fill=white] (q_4) [right=of q_3,xshift=1cm] {\scalebox{0.85}{$\ell_4$}};

    \path[->] 
    (q_0) edge[above,dotted,inner sep =5pt,out=350,in=170] node {\scalebox{0.85}{$x\!=\!0\colon\!x\!\coloneqq \!1$}} (q_1)
    (q_1) edge[above,dotted,inner sep =5pt,out=350,in=170] node {\scalebox{0.85}{$x\!\coloneqq \!0$}} (q_2)
    (q_2) edge[above,dotted,out=350,in=170] node {\scalebox{0.85}{$y\!=\!0\colon\!y\!\coloneqq \!1$}} (q_3)
    (q_3) edge[above,dotted,out=350,in=170] node {\scalebox{0.85}{$y\!\coloneqq \!0$}} (q_4);

\end{tikzpicture} 
\caption{A program graph on two boolean variables $x,y$. In the first and third instructions, the assignment action is only executable when the guard condition indicated before the colon holds.}\label{PG}
\end{figure}
The nodes of the graph symbolize  locations in a program, and the arrows represent  instructions, which are guarded or unguarded assignments of values to the variables. Now consider the concurrent system consisting of two processes executing the program given by this program graph, and suppose that the shared variables $x$ and $y$ are both initially zero. An HDA representing the accessible (reachable) part of this system is depicted in Figure \ref{ex1}. 
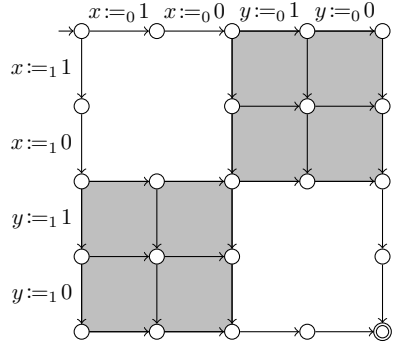
\begin{figure}[] 
\begin{tikzpicture}[initial text={},on grid]

\path[draw, fill=lightgray] (0,0)--(2,0)--(2,2)--(0,2)--cycle
(2,2)--(4,2)--(4,4)--(2,4)--cycle;

 \node[state,minimum size=0pt,inner sep =2pt,fill=white,label={[label distance = -0.1cm]330:\scalebox{0.85}{}}] (q_0)   {}; 
    
   \node[state,minimum size=0pt,inner sep =2pt,fill=white,label={[label distance = -0.1cm]330:\scalebox{0.85}{}}] (q_1) [right=of q_0,xshift=0cm] {};

\node[state,minimum size=0pt,inner sep =2pt,fill=white,label={[label distance = -0.1cm]330:\scalebox{0.85}{}}] (q_2) [right=of q_1,xshift=0cm] {};   
   
   \node[state,minimum size=0pt,inner sep =2pt,fill=white,label={[label distance = -0.1cm]330:\scalebox{0.85}{}}] [above=of q_0, yshift=0cm] (q_3)   {};
   
   \node[state,minimum size=0pt,inner sep =2pt,fill=white,label={[label distance = -0.1cm]330:\scalebox{0.85}{}}] (q_4) [right=of q_3,xshift=0cm] {};
   
   \node[state,minimum size=0pt,inner sep =2pt,fill=white,label={[label distance = -0.1cm]330:\scalebox{0.85}{}}] (q_5) [right=of q_4,xshift=0cm] {}; 
   
      \node[state,minimum size=0pt,inner sep =2pt,fill=white,label={[label distance = -0.1cm]330:\scalebox{0.85}{}}] (q_6) [above=of q_3,yshift=0cm] {};
      
      \node[state,minimum size=0pt,inner sep =2pt,fill=white,label={[label distance = -0.1cm]330:\scalebox{0.85}{}}] (q_7) [above=of q_4,yshift=0cm] {};
      
      \node[state,minimum size=0pt,inner sep =2pt,fill=white,label={[label distance = -0.1cm]330:\scalebox{0.85}{}}] (q_8) [above=of q_5,yshift=0cm] {};
      
      \node[state,minimum size=0pt,inner sep =2pt,fill=white,label={[label distance = -0.1cm]330:\scalebox{0.85}{}}] (q_9) [above=of q_6,yshift=0cm] {};
      
      \node[state,minimum size=0pt,inner sep =2pt,fill=white,label={[label distance = -0.1cm]330:\scalebox{0.85}{}}] (q_10) [above=of q_8,yshift=0cm] {};
      
      \node[state,minimum size=0pt,inner sep =2pt,initial,initial distance=0.2cm,fill=white,label={[label distance = -0.1cm]330:\scalebox{0.85}{}}] (q_11) [above=of q_9,yshift=0cm] {};
      
      \node[state,minimum size=0pt,inner sep =2pt,fill=white,label={[label distance = -0.1cm]330:\scalebox{0.85}{}}] (q_12) [right=of q_11,xshift=0cm] {};
      
      \node[state,minimum size=0pt,inner sep =2pt,fill=white,label={[label distance = -0.1cm]330:\scalebox{0.85}{}}] (q_13) [above=of q_10,yshift=0cm] {};

 \node[state,minimum size=0pt,inner sep =2pt,fill=white,label={[label distance = -0.1cm]330:\scalebox{0.85}{}}] (p_0) [right=of q_2,xshift=0cm]  {}; 
    
   \node[state,minimum size=0pt,inner sep =2pt,accepting,fill=white,label={[label distance = -0.1cm]330:\scalebox{0.85}{}}] (p_1) [right=of p_0,xshift=0cm] {};

   \node[state,minimum size=0pt,inner sep =2pt,fill=white,label={[label distance = -0.1cm]330:\scalebox{0.85}{}}] (p_4) [above=of p_1,xshift=0cm] {};

      \node[state,minimum size=0pt,inner sep =2pt,fill=white,label={[label distance = -0.1cm]330:\scalebox{0.85}{}}] (p_6) [above=of p_0,yshift=1cm] {};
      
      \node[state,minimum size=0pt,inner sep =2pt,fill=white,label={[label distance = -0.1cm]330:\scalebox{0.85}{}}] (p_7) [above=of p_4,yshift=0cm] {};
      
      \node[state,minimum size=0pt,inner sep =2pt,fill=white,label={[label distance = -0.1cm]330:\scalebox{0.85}{}}] [above=of p_7, yshift=0cm] (p_8)   {};
      
      \node[state,minimum size=0pt,inner sep =2pt,fill=white,label={[label distance = -0.1cm]330:\scalebox{0.85}{}}] (p_9) [above=of p_6,yshift=0cm] {};

      \node[state,minimum size=0pt,inner sep =2pt,fill=white,label={[label distance = -0.1cm]330:\scalebox{0.85}{}}] (p_11) [above=of p_9,yshift=0cm] {};
      
      \node[state,minimum size=0pt,inner sep =2pt,fill=white,label={[label distance = -0.1cm]330:\scalebox{0.85}{}}] (p_12) [right=of p_11,xshift=0cm] {};

    \path[->] 
    (q_0) edge[above]  node {} (q_1)
    (q_1) edge[above]  node {} (q_2)
	(q_2) edge[above]  node {} (p_0)    
    (p_0) edge[above]  node {} (p_1)
    
	(q_3) edge[left]  node {\scalebox{0.85}{$y\!\coloneqq_1 \! 0$}} (q_0)
    (q_5) edge[right]  node {} (q_2)
    (q_4) edge[left]  node {} (q_1)    
    (p_4) edge[left]  node {} (p_1)
    
    (q_3) edge[above]  node {} (q_4)
    (q_4) edge[above]  node {} (q_5)
    
    (q_6) edge[left]  node {\scalebox{0.85}{$y\!\coloneqq_1 \! 1$}} (q_3)
    (q_7) edge[right]  node {} (q_4)
    (q_8) edge[left]  node {} (q_5)    
    (p_7) edge[left]  node {} (p_4)
    
    (q_6) edge[above]  node {} (q_7)
    (q_7) edge[above]  node {} (q_8)
    (q_8) edge[above]  node {} (p_6)
    (p_6) edge[above]  node {} (p_7)
    
    (q_9) edge[left]  node {\scalebox{0.85}{$x\!\coloneqq_1 \! 0$}} (q_6)
    (q_10) edge[right]  node {} (q_8)
    (p_9) edge[left]  node {} (p_6)    
    (p_8) edge[left]  node {} (p_7)
    
    (q_10) edge[above]  node {} (p_9)
    (p_9) edge[above]  node {} (p_8)
    
    (q_11) edge[left]  node {\scalebox{0.85}{$x\!\coloneqq_1 \! 1$}} (q_9)
    (q_13) edge[right]  node {} (q_10)
    (p_11) edge[left]  node {} (p_9)    
    (p_12) edge[left]  node {} (p_8)
    
    (q_11) edge[above]  node {\scalebox{0.85}{$x\!\coloneqq_0 \! 1$}} (q_12)
    (q_12) edge[above]  node {\scalebox{0.85}{$x\!\coloneqq_0 \! 0$}} (q_13)
    (q_13) edge[above]  node {\scalebox{0.85}{$y\!\coloneqq_0 \! 1$}} (p_11)
    (p_11) edge[above]  node {\scalebox{0.85}{$y\!\coloneqq_0 \! 0$}} (p_12)
    ;

\end{tikzpicture}  
\caption{HDA model of the accessible part of the system consisting of two copies of the process given by the program graph in Figure \ref{PG}. Parallel arrows are supposed to have the same label.}\label{ex1}
\end{figure}
The vertices of the HDA correspond to the accessible global states of the system, which are quadruples whose components are  local states of the processes and values of the variables. The HDA has an initial and a final state, which correspond to the global states  $(\ell_0,\ell_0,0,0)$ and  $(\ell_4,\ell_4,0,0)$ respectively. The directed edges starting in a given vertex represent the actions of the processes that are enabled in the corresponding global state. These actions are indicated in the transition labels, indexed by the respective process IDs. The squares are introduced in order to indicate independence of actions. The concept of independence represented by the squares is closely related to the ones used in the context of partial order reduction (see e.g. \cite{BaierKatoen, ClarkeGrumbergPeled, Godefroid, Peled}): two actions are independent in a state where both are enabled if they can be executed one after the other in either order and the two sequential executions of the actions lead to the same state. In general, an HDA may also contain higher-dimensional cubes in order to indicate the independence of more than two actions.

\subsection{Topological abstraction}
Concurrent systems can be modeled mathematically at different abstraction levels. A small and abstract model of a  system is, of course, easier to understand and analyze than a big one that comes with a lot of implementation details and which, because of the state explosion problem, might not even fit into the memory of any available  computer. On the other hand, a model of a high abstraction level cannot be expected to capture all properties of the system. It should, however, contain the essential information on the system. In this paper, we adopt the point of view that the essential information on a system is its  overall topological structure and discuss what one may call topological abstraction, namely the replacement of a model by a more abstract but topologically equivalent one. To make this more precise, let us briefly describe the three aspects of the topological structure of an HDA that we will require to be preserved under topological abstraction.

1. Consider the system represented by the HDA in Figure \ref{ex1}. According to the independence square in the upper right corner, the last action of process $0$ and the first action of process $1$ are independent in the starting state of that square, and it is irrelevant for the system which of the actions comes first in a sequential execution of both. In this sense, the two adjacent execution paths from the initial to the final state passing through the endpoints of the upper right square can be considered equivalent. The equivalence relation generated by adjacency is closely related to trace equivalence, as defined in Mazurkiewicz trace theory \cite{Mazurkiewicz, Mazurkiewicz2, WinskelNielsen}, and it can also be considered a  concept of homotopy for directed, non-reversible paths, which is why it is called \emph{directed homotopy} or  \emph{dihomotopy} \cite{FajstrupGR, Goubault}. The dihomotopy classes of execution paths between important states of an HDA, such as initial and final states, define the \emph{trace category} of the HDA \cite{weakmor}. The trace category is the first ingredient of the topological structure of an HDA that we will demand to be invariant under topological abstraction. Topological abstraction is thus compatible with the principle of partial order reduction to replace models of concurrent systems by  smaller ones with fewer representatives of the equivalence classes of execution paths. We remark that the trace category is essentially the same as Bubenik's fundamental bipartite graph  \cite{BubenikExtremal}. The trace category of the HDA in Figure \ref{ex1} has four morphisms, i.e., dihomotopy classes of paths, from the initial to the final state, which can be represented by the four maximal paths going through the middle point. 

2. An important topological feature of the HDA in Figure \ref{ex1} is that it has two holes representing the fact that the shared variables $x$ and $y$ are accessed in mutual exclusion. This kind of topological information on an HDA is contained in its homology.  We will demand that topological abstraction preserves the homotopy type of an HDA. This  ensures in particular that a topological abstraction of an HDA has the same homology as the original one.

3. Since the 1-skeleton of an HDA, i.e., the underlying transition system, is a directed graph, the homology of an HDA comes equipped with a supplementary structure. In the HDA in Figure \ref{ex1}, for example, the hole in the upper left corner ``comes before'' the hole in the lower right corner. This reflects the fact that both processes access the variables in the same order. We use the terminology that the homology class corresponding to the upper left hole \emph{points} to the homology class corresponding to the lower right hole. This pointing relation turns the homology of an HDA into a directed graph, which is called the \emph{homology graph} of the HDA \cite{hgraph}. The homology graph is the third piece of the topological structure of an HDA that is required to be invariant under topological abstraction. The homology graph permits one to distinguish, for instance, the HDA in Figure \ref{ex1} and the homotopy equivalent HDA representing the accessible part of the concurrent system consisting of one process given by the program graph in Figure \ref{PG} and one process given by the same program graph with the variables $x$ and $y$ switched.

Figure \ref{abs} shows two topological abstractions of the HDA in Figure \ref{ex1}. 
\begin{figure} 
\subfloat[]
{ 

\begin{tikzpicture}[initial text={},on grid]

\path[draw, fill=lightgray] (0,0)--(2,0)--(2,2)--(0,2)--cycle
(2,2)--(4,2)--(4,4)--(2,4)--cycle;

 \node[state,minimum size=0pt,inner sep =2pt,fill=white] (q_0)   {};

\node[state,minimum size=0pt,inner sep =2pt,fill=white] (q_2) [right=of q_0,xshift=1cm] {};

      \node[state,minimum size=0pt,inner sep =2pt,fill=white] (q_6) [above=of q_0,yshift=1cm] {};

      \node[state,minimum size=0pt,inner sep =2pt,fill=white] (q_8) [above=of q_2,yshift=1cm] {};

      \node[state,minimum size=0pt,inner sep =2pt,initial,initial distance=0.2cm,fill=white] (q_11) [above=of q_6,yshift=1cm] {};

      \node[state,minimum size=0pt,inner sep =2pt,fill=white] (q_13) [above=of q_8,yshift=1cm] {};

   \node[state,minimum size=0pt,inner sep =2pt,accepting,fill=white] (p_1) [right=of q_2,xshift=1cm] {};

      \node[state,minimum size=0pt,inner sep =2pt,fill=white] (p_7) [above=of p_1,yshift=1cm] {};

      \node[state,minimum size=0pt,inner sep =2pt,fill=white] (p_12) [right=of q_13,xshift=1cm] {};

    \path[->] 
    (q_0) edge[above]  node {} (q_2)
    (q_2) edge[above]  node {} (p_1)
    
	(q_6) edge[left]  node[align=left]{\scalebox{0.85}{$y\!\coloneqq_1 \! 1;$}\\\scalebox{0.85}{$y\!\coloneqq_1 \! 0$}} (q_0)
    (q_8) edge[right]  node {} (q_2)
    (p_7) edge[left]  node {} (p_1)
    
    (q_6) edge[above]  node {} (q_8)
    (q_8) edge[above]  node {} (p_7)
    
    (q_11) edge[left]  node[align=left]{\scalebox{0.85}{$x\!\coloneqq_1 \! 1;$}\\\scalebox{0.85}{$x\!\coloneqq_1 \! 0$}} (q_6)
	(q_13) edge[right]  node {} (q_8)
    (p_12) edge[left]  node {} (p_7)
   
    (q_11) edge[above]  node[align=left]{\scalebox{0.85}{$x\!\coloneqq_0 \! 1;x\!\coloneqq_0 \! 0$}} (q_13)
    (q_13) edge[above]  node[align=left]{\scalebox{0.85}{$y\!\coloneqq_0 \! 1;y\!\coloneqq_0 \! 0$}} (p_12)
    ;
\end{tikzpicture}
\label{abs1}
}
\subfloat[]
{
\begin{tikzpicture}[initial text={},on grid]

    \node[state,minimum size=0pt,accepting,inner sep =2pt,fill=white] (p_12) [,xshift=-0.5cm] {};

      \node[state,minimum size=0pt,inner sep =2pt,fill=white] (q_8) [above=of p_12,left=of p_12,xshift=-0.5cm,yshift=2cm] {};

      \node[state,minimum size=0pt,inner sep =2pt,initial,initial distance=0.2cm,fill=white] (q_0)   [above=of q_8,left=of q_8,xshift=-0.5cm,yshift=2cm] {};

    \path[->] 
    (q_0) edge[bend left,right]   node[align=center] {\scalebox{0.85}{$x\!\coloneqq_0 \! 1;x\!\coloneqq_0 \! 0;$}\\\scalebox{0.85}{$\quad x\!\coloneqq_1 \! 1;x\!\coloneqq_1 \! 0$}} (q_8)

	(q_0) edge[bend right, left] node[align=center] {\scalebox{0.85}{$x\!\coloneqq_1 \! 1;x\!\coloneqq_1 \! 0;$}\\\scalebox{0.85}{$\quad x\!\coloneqq_0 \! 1;x\!\coloneqq_0 \! 0$}}  (q_8)
    
    (q_8) edge[bend left,right]  node[align=center] {\scalebox{0.85}{$y\!\coloneqq_0 \! 1;y\!\coloneqq_0 \! 0;$}\\\scalebox{0.85}{$\quad y\!\coloneqq_1 \! 1;y\!\coloneqq_1 \! 0$}} (p_12)
    
	(q_8) edge[bend right,left]  node[align=center] {\scalebox{0.85}{$y\!\coloneqq_1 \! 1;y\!\coloneqq_1 \! 0;$}\\\scalebox{0.85}{$\quad y\!\coloneqq_0 \! 1;y\!\coloneqq_0 \! 0$}} (p_12)
   ;
\end{tikzpicture}
\label{abs2}
}
\caption{Topological abstractions of the HDA in Figure \ref{ex1}} \label{abs}
\end{figure}
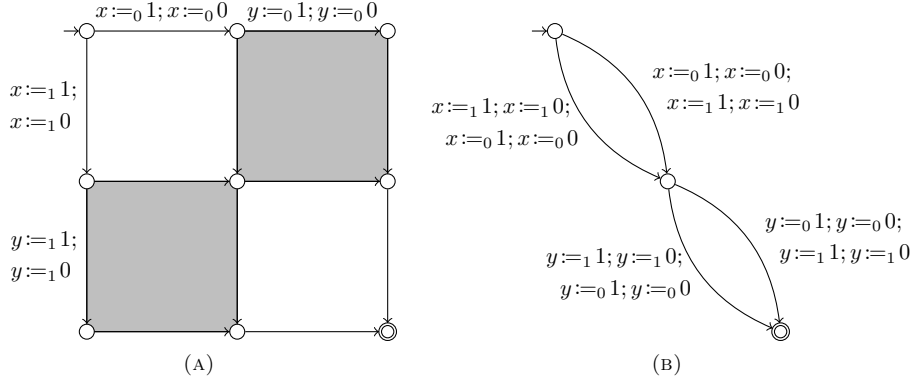 
The three HDAs are topologically equivalent in the sense described above:  they have the same trace category, the same homotopy type, and the same homology graph. The reader should note  that other concepts of topological equivalence for HDAs and related models of concurrency have been defined in the literature (see e.g. \cite{BubenikW, GaucherFlowModel, GaucherGoubault, reldi, Krishnan,  RaussenInvariants}). The HDA in Figure \ref{abs2} is more abstract than the one in Figure \ref{abs1}, and it is actually the most abstract HDA topologically equivalent to the HDA in Figure \ref{ex1}. The concept of topological abstraction gives rise to a preorder relation for HDAs, which is quite different from classical preorder relations such as simulation. Some related  preorders have been studied in \cite{weakmor}. One of them is \emph{homeomorphic abstraction}, which is related to $\mbox{T-}$homotopy equivalence in the sense of Gaucher and Goubault \cite{GaucherT, GaucherGoubault} and which is essentially always stronger than topological abstraction. For example, the HDA in Figure \ref{abs1} is a homeomorphic abstraction of the HDA in Figure \ref{ex1}.

\subsection{Cube collapses}
An important question is, of course, how to compute a  topological abstraction of an HDA. In this paper, we establish conditions under which it is possible to collapse a cube in an HDA in order to obtain a topological abstraction of it. We consider two kinds of cube collapses, namely elementary collapses, where one removes a cube and a free face of it, and vertex star collapses, where the star of a vertex of a cube is eliminated.  Combined with cube merging operations for the computation of  homeomorphic abstraction, cube collapses can be used to develop algorithms to compute topological abstractions of HDAs. In the last section of this paper, we will illustrate this possibility informally by means of the example of Peterson's mutual exclusion algorithm. A detailed description of a formal algorithm will be given elsewhere.

A first attempt to use cube collapsing operations in order to reduce higher-dimensional automata to topologically equivalent smaller ones is described in \cite{dicubes2d}. In that paper, the at least intuitively stronger concept of directed homotopy equivalence is considered instead of topological abstraction. The main shortcoming of that approach  is that it is restricted to two-dimensional HDAs and cannot be extended to higher dimensions. The approach of the present  paper, based on the concept of topological abstraction, is not bound to these limitations.

\subsection{Outline of the paper}

The main results of the paper are contained in Sections \ref{TC}, \ref{HG}, and \ref{topabs}. Section \ref{secHDA} contains preliminaries about HDAs and precubical sets. Section \ref{wr} is devoted to weak regularity, which is a convenient technical condition on precubical sets. In the final section, we illustrate our results with the example of Peterson's mutual exclusion algorithm. 

The subject of Section \ref{TC} is the trace category. We establish conditions under which the trace category is invariant under cube collapses. We also discuss dihomotopy invariant properties of HDAs and make precise how they are determined by the trace category. In Section \ref{HG}, we investigate conditions under which cube collapses preserve the homology graph. In Section \ref{topabs}, we define the preorder relation of topological abstraction and relate it to homeomorphic abstraction. Based on our results on the trace category and the homology graph, we give conditions under which cube collapses yield topological abstractions of HDAs.  

\section{Precubical sets and HDAs} \label{secHDA}

This section briefly presents some fundamental concepts and facts about precubical sets and higher-dimensional automata.

\subsection{Precubical sets} \label{precubs}

A \emph{precubical set} is a graded set $P = (P_n)_{n \geq 0}$ with  \emph{boundary} or \emph{face operators} $d^k_i\colon P_n \to P_{n-1}$ $(n>0,\;k= 0,1,\; i = 1, \dots, n)$ satisfying the relations $d^k_i\circ d^l_{j}= d^l_{j-1}\circ d^k_i$ $(k,l = 0,1,\; i<j)$ \cite{FahrenbergThesis, FajstrupGR, GaucherGoubault, Goubault, weakmor, hgraph}. If $x\in P_n$, we say that $x$ is of  \emph{degree} or \emph{dimension} $n$ and write $\deg(x) = n$. The elements of degree $n$ are called the \emph{$n$-cubes} of $P$. The elements of degree $0$ are also called the \emph{vertices} or the \emph{nodes} of $P$. We say that a face $d^k_ix$ of a cube $x$ is  a \emph{front (back) face} of $x$ if $k=0$ $(k=1)$. A morphism of precubical sets is a morphism of graded sets that is compatible with the boundary operators. 

Let $k$ and $l$ be two integers such that $k \leq l$. The \emph{precubical interval}  $\llbracket k,l \rrbracket$ is the  precubical set defined by $\llbracket k,l \rrbracket_0 = \{k,\dots , l\}$, $\llbracket k,l \rrbracket_1 =  \{[k,k+ 1], \dots , [l- 1,l]\}$, $d_1^0[j-1,j] = j-1$, $d_1^1[j-1,j] = j$ and $\llbracket k,l \rrbracket_{\geq 2} = \emptyset$. We shall use the abbreviations $\llbracket k,l \llbracket = \llbracket k,l\rrbracket \setminus \{l\}$ and $\rrbracket k,l \rrbracket = \llbracket k,l\rrbracket \setminus \{k\}$. Infinite precubical intervals such as $\llbracket 0, +\infty \llbracket$ are defined analogously.

The category of precubical sets is a monoidal category. The \emph{precubical $n$-cube} is the $n$-fold tensor product $\llbracket 0,1\rrbracket^ {\otimes n}$. The only element of degree $n$ in $\llbracket 0,1\rrbracket^ {\otimes n}$ will be denoted by $\iota_n$. Given an element $x$ of degree $n$ of a precubical set $P$, the unique morphism of precubical sets $\llbracket 0,1\rrbracket ^{\otimes n}\to P$ such that $\iota_n \mapsto x$ is denoted by $x_{\sharp}$. We say that $x$ is \emph{regular} (or \emph{non-self-linked} \cite{FajstrupGR}) if $x_{\sharp}$ is injective.

A \emph{path of length $k$} in a precubical set $P$ is a morphism of precubical sets $\omega \colon \llbracket 0,k \rrbracket \to P$. The set of paths in $P$ is denoted by $P^{\mathbb I}$. If $\omega \in P^{\mathbb I}$ is a path of length $k$, we write $\length(\omega) = k$. If it is defined, the \emph{concatenation} of two paths $\omega$ and $\nu$ is denoted by $\omega \cdot \nu$. Note that for any path $\omega \in P^{\mathbb I}$ of length $k \geq 1$ there exists a unique sequence $(x_1, \dots , x_k)$ of elements of $P_1$ such that $d_1^0x_{j+1} = d_1^1x_j$ for all $1\leq j < k$ and $\omega = x_{1\sharp} \cdots x_{k\sharp}$. If there exists a path in $P$ from a vertex $v$ to a vertex $w$, we say that $w$ is \emph{accessible} from $v$ and write $v \to_P w$. Occasionally, we also consider \emph{infinite paths} in $P$, which are morphisms of precubical sets $\llbracket 0, +\infty \llbracket \to P$. 

Two paths $\omega$ and $\nu $ in a precubical set $P$ are said to be \emph{elementarily dihomotopic} (or \emph{contiguous} or \emph{adjacent}) if there exist paths $\alpha, \beta \in P^ {\mathbb I}$ and an element $z \in P_2$ such that $d_1^0d_1^0z = \alpha (\length(\alpha))$, $d_1^1d_1^1z = \beta (0)$ and $\{\omega ,\nu \} = \{{\alpha \cdot (d_1^0z)_{\sharp} \cdot (d_2^ 1z)_{\sharp} \cdot \beta}, {\alpha \cdot (d_2^0z)_{\sharp} \cdot (d_1^ 1z)_{\sharp} \cdot \beta} \}$ (see   \cite{FajstrupGR, Goubault}). The \emph{dihomotopy} relation, denoted by $\sim$, is the equivalence relation generated by elementary dihomotopy.

The \emph{geometric realization} of a precubical set $P$ is the quotient space \[|P|=(\coprod _{n \geq 0} P_n \times [0,1]^n)/\sim\] where the sets $P_n$ are given the discrete topology and the equivalence relation is given by
\[(d^k_ix,u) \sim (x,\delta_i^k(u)), \quad  x \in P_{n+1},\; u\in [0,1]^n,\; i \in  \{1, \dots, n+1\},\; k \in \{0,1\}\] (see \cite{FahrenbergThesis, FajstrupGR,  GaucherGoubault, Goubault}). Here, the continuous maps $\delta_i^ k \colon [0,1]^ n \to [0,1]^ {n+1}$ are defined by $\delta_i^ k(u_1, \dots, u_n) = (u_1, \dots, u_{i-1},k,u_i, \dots, u_n)$.

The \emph{star} of an element $x$ of a precubical set $P$ is the set 
$$\star_P (x) = \{y \in P : x \in y_{\sharp}(\llbracket 0,1\rrbracket ^{\otimes \deg(y)})\}.$$
If no confusion is possible, we may omit the subscript and write $\star(x)$ instead of $\star_P(x)$. The graded set $P\setminus \star(x)$ is a precubical subset of $P$. We say that a face $d^k_ix$ of a cube $x$ of $P$ is \emph{free} if $\star(d^k_ix) = \{x,d^k_ix\}$. If $x$ is a regular cube of $P$ with free face $d^k_ix$, then the inclusion $|P \setminus \star(d^k_ix)| \hookrightarrow |P|$ is a homotopy equivalence.

\subsection{Higher-dimensional automata} \label{HDAdef}

Let $M$ be a monoid. A \emph{higher-dimensional automaton over} $M$ (abbreviated $M\mbox{-}$HDA or simply HDA) is a tuple $\A = (P,I,F, \lambda)$ where $P$ is a precubical set, $I \subseteq P_0$ is a set of \emph{initial states}, $F \subseteq P_0$ is a set of \emph{final} of \emph{accepting  states}, and $\lambda \colon P_1 \to M$ is a map, called the \emph{labeling function}, such that $\lambda (d_i^0x) = \lambda (d_i^1x)$ for all $x \in P_2$ and $i \in \{1,2\}$ (cf. \cite{vanGlabbeek, weakmor, hgraph}). A \emph{morphism} from an $M$-HDA $\B = (Q,J, G,\mu)$ to an $M$-HDA $\A = (P,I, F,\lambda)$ is a morphism of precubical sets  $f\colon Q \to P$  such that $f(J) \subseteq I$, $f(G) \subseteq F$ and  $\lambda(f(x)) = \mu(x)$ for all $x \in Q_1$.

An \emph{execution} in an $M$-HDA   $\A = (P,I, F,\lambda)$ is a finite or infinite path that starts in an initial state. A finite execution is \emph{successful} if it terminates in a final state.

The \emph{extended labeling function} of an $M$-HDA $\A = (P,I,F, \lambda)$ is the map $\overline{\lambda} \colon P^{\mathbb I} \to M$ defined as follows: If $\omega = x_{1\sharp} \cdots x_{k\sharp}$ for a sequence $(x_1, \dots , x_k)$ of elements of $P_1$ such that $d_1^0x_{j+1} = d_1^1x_j$ $(1\leq j < k)$, then we set $\overline{\lambda} (\omega) = \lambda (x_1) \cdots \lambda (x_k)$; if $\omega$ is a path of length $0$, then we set  $\overline{\lambda} (\omega ) = 1$. The \emph{language accepted by} $\A$, $L(\A)$, is the set of labels of successful executions, i.e., $$L(\A) = \{\overline{\lambda} (\omega) : \omega \in P^{\mathbb I},\; \omega (0) \in I,\; \omega(\length(\omega)) \in F\}.$$ If $M$ is a finitely generated free monoid and $\A = (P,I,F, \lambda)$ is an $M$-HDA such that $P_0$ and $P_1$ are finite, then $L(\A)$ is both a rational and a recognizable (cf. e.g. \cite{Diekert, Sakarovitch}) subset of $M$.

An $M$-HDA $\A = (P.I.F,\lambda)$ is called \emph{accessible} if for every vertex $w$ there exists an initial state $v\in I$ such that $v\to_P w$. The $M$-HDA $\A$ is called \emph{coaccessible} if for every vertex $v$ there exists a final state $w \in F$ such that $v\to_P w$.

\subsection{Weak morphisms \cite{weakmor, hgraph}} 
A \emph{weak morphism} from a precubical set $Q$ to a precubical set $P$ is a continuous map $f\colon |Q| \to |P|$ such that the following two conditions hold:
\begin{enumerate}
\item For every vertex $v\in Q_0$ there exists a (necessarily unique) vertex $f_0(v)\in P_0$ such that $f([v,()]) = [f_0(v),()]$.
\item For all integers $n, k_1, \dots, k_n\geq 1$ and every morphism of precubical sets $\xi \colon \llbracket 0,{k_1} \rrbracket\otimes \cdots \otimes \llbracket 0,{k_n} \rrbracket \to Q$, there exist integers $l_1, \dots, l_n\geq 1$, a morphism of precubical sets $\chi \colon \llbracket 0,{l_1} \rrbracket\otimes \cdots \otimes \llbracket 0,{l_n} \rrbracket \to P$, and a homeo\-morphism 
\begin{eqnarray*}\lefteqn{\phi\colon  |\llbracket 0,{k_1} \rrbracket\otimes \cdots \otimes \llbracket 0,{k_n} \rrbracket| = [0,k_1] \times \cdots \times [0,k_n]}\\ &\to& |\llbracket 0,{l_1} \rrbracket\otimes \cdots \otimes \llbracket 0,{l_n} \rrbracket|= [0,l_1] \times \cdots \times [0,l_n] \end{eqnarray*} 
such that $f\circ |\xi| = |\chi|\circ \phi$ and $\phi$ is a dihomeomorphism, i.e., $\phi$ and $\phi^{-1}$ preserve the natural partial order of $\R^n$. 
\end{enumerate}
We remark that weak morphisms are stable under composition. Note also that the integers $l_1, \dots, l_n\geq 1$, the morphism of precubical sets $\chi$, and the dihomeomorphism $\phi$ in condition (2) above are unique and that $\phi$ is itself a weak morphism \cite[2.3.5]{weakmor}. In the case of the morphism of precubical sets $x _{\sharp} \colon \llbracket 0,1\rrbracket ^ {\otimes n}\to Q$ corresponding to an element $x \in Q_n$ $(n>0)$, we shall use the notation $R_x = \llbracket 0,{l_1} \rrbracket\otimes \cdots \otimes \llbracket 0,{l_n} \rrbracket$, $\phi_x = \phi$, and $x_{\flat} = \chi$. 

Consider a weak morphism of precubical sets $f \colon |Q| \to |P|$. Let $X$ be a precubical subset  of $Q$. Then there exists a unique precubical subset $f(X)$ of $P$ such that $f(|X|) = |f(X)|$ \cite{hgraph}. Let $\omega \colon \llbracket 0,k \rrbracket \to Q$ $(k \geq 0)$ be a path. If $k > 0$, we denote by $f^{\mathbb I}(\omega)$ the unique path $\nu \colon \llbracket 0,{l} \rrbracket \to P$ for which there exists a dihomeomorphism $\phi \colon |\llbracket 0,k \rrbracket| = [0,k] \to |\llbracket 0,{l} \rrbracket| = [0,l]$ such that $f\circ |\omega | = |\nu |\circ \phi$. If $k= 0$, $f^{\mathbb I}(\omega)$ is defined to be the path in $P$ of length $0$ given by $f^{\mathbb I}(\omega)(0) = f_0(\omega(0))$. Note that the path 
$f^{\mathbb I}(\omega )$ leads from $f_0(\omega(0))$ to $f_0(\omega(k))$ and that the map $f^{\mathbb I}\colon Q^{\mathbb I} \to P^{\mathbb I}$ is  compatible with composition, concatenation, and dihomotopy \cite[2.5, 3.6]{weakmor}.
\begin{sloppypar}
A \emph{weak morphism} from an $M$-HDA $\B = {(Q,J, G,\mu)}$  to an $M$-HDA $\A = (P,I, F,\lambda)$  is a weak morphism $f\colon |Q| \to |P|$  such that $f(J) \subseteq I$, $f(G) \subseteq F$, and  $\overline{\lambda}\circ f^{\mathbb I} = \overline{\mu}$. One easily checks that the existence of a weak morphism from an $M$-HDA $\B$ to an $M$-HDA $\A$ implies that $L(\B) \subseteq L(\A)$ \cite[2.6.2]{weakmor}. One also easily verifies the following fact:
\end{sloppypar}

\begin{prop} \label{coacc}
Let $f$ be a weak morphism from an $M$-HDA $\B$ to an $M$-HDA $\A$. If $\B$ is (co)accessible and for every vertex ${a \in P_0}$ there exists a vertex $b\in Q_0$ such that $f_0(b) \to_P a$ $(a\to_P f_0(b))$, then $\A$ is (co)accessible. \hfill $\square$
\end{prop}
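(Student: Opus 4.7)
The plan is to treat the two cases (accessibility and coaccessibility) separately, each reducing to a straightforward argument that combines the given hypothesis with the corresponding property of $\B$ via the path transport $f^{\mathbb I}$.

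For the accessible case, I would start with an arbitrary vertex $a \in P_0$. The hypothesis supplies a vertex $b \in Q_0$ and a path in $P$ witnessing $f_0(b) \to_P a$; call it $\tau$. Since $\B$ is accessible, there exists $j \in J$ and a path $\omega$ in $Q$ from $j$ to $b$. I then push $\omega$ forward to the path $f^{\mathbb I}(\omega)$ in $P$; by the property of weak morphisms recalled just before the proposition, $f^{\mathbb I}(\omega)$ leads from $f_0(j)$ to $f_0(b)$. Because $f$ is a weak morphism of HDAs, $f(J) \subseteq I$, which, translated at the vertex level, means $f_0(j) \in I$. Concatenating $f^{\mathbb I}(\omega)$ with $\tau$ produces a path in $P$ from the initial vertex $f_0(j)$ to $a$, so $\A$ is accessible.

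The coaccessible case is strictly dual. Given $a \in P_0$, the hypothesis provides $b \in Q_0$ and a path $\tau$ in $P$ with $a \to_P f_0(b)$. Coaccessibility of $\B$ yields $g \in G$ and a path $\omega$ in $Q$ from $b$ to $g$; then $f^{\mathbb I}(\omega)$ is a path in $P$ from $f_0(b)$ to $f_0(g) \in F$, and $\tau$ concatenated with $f^{\mathbb I}(\omega)$ witnesses $a \to_P f_0(g)$.

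There is no genuine obstacle here: the proposition is essentially a bookkeeping consequence of the fact that $f^{\mathbb I}$ sends paths to paths respecting source and target vertices (via $f_0$) and that the definition of a weak morphism of HDAs enforces $f(J) \subseteq I$ and $f(G) \subseteq F$. The only subtlety worth double-checking is that concatenation of paths in $P$ is legitimate in both cases, which is immediate since the common endpoint lies in $P_0$.
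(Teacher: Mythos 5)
Your proof is correct and is exactly the routine verification the paper has in mind: the paper states Proposition \ref{coacc} without proof (``one also easily verifies''), and your argument---transporting the path in $Q$ via $f^{\mathbb I}$, using that $f^{\mathbb I}(\omega)$ runs from $f_0(\omega(0))$ to $f_0(\omega(\length(\omega)))$ and that $f(J)\subseteq I$ (resp.\ $f(G)\subseteq F$) places the relevant endpoint in $I$ (resp.\ $F$), then concatenating with the hypothesized path---is the intended one. Your handling of the dual coaccessible case and the check that the concatenation is legitimate are likewise fine.
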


\section{Weak regularity} \label{wr}

Many interesting precubical sets---such as, for example, the directed circle, which consists of one vertex and one edge---are not regular. In many cases, they are, however, still weakly regular (as defined below). Properties of regular precubical sets and HDAs can often be generalized to weakly regular ones.

\subsection{Weakly regular elements} We say that an element $x$ of degree $n$ of a precubical set is \emph{weakly regular} if the restrictions of $x_{\sharp}$ to the graded subsets $\llbracket 0,1 \llbracket^ {\otimes n}$ and $\rrbracket 0,1 \rrbracket^ {\otimes n}$ are injective. Every element of degree $\leq 1$ is weakly regular. We remark that if $n > 0$, $\llbracket 0,1 \llbracket^ {\otimes n}= \star((0,\dots,0))$ and $\rrbracket 0,1 \rrbracket^ {\otimes n} = \star((1,\dots,1))$.

\begin{prop} \label{wreg2}
Let $P$ be a precubical set and $x \in P_2$ be an element. Then $x$ is weakly regular if and only if $d_1^0x \not= d_2^0x$ and $d_1^1x \not= d_2^1x$. 
\end{prop}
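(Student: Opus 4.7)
The plan is to unpack the definition of weak regularity for the 2-cube and reduce it to a direct enumeration. Since both ``$\llbracket 0,1 \llbracket^{\otimes 2} = \star((0,0))$'' and ``$\rrbracket 0,1 \rrbracket^{\otimes 2} = \star((1,1))$'' are finite (as remarked right before the proposition), the statement ``$x_{\sharp}$ restricted to each of these is injective'' is just a finite check about collisions among the images.

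First I would write down the stars explicitly inside $\llbracket 0,1 \rrbracket^{\otimes 2}$. Namely, $\star((0,0))$ consists of the vertex $(0,0)$, the two edges $[0,1]\otimes 0$ and $0\otimes [0,1]$ adjacent to it, and the unique 2-cube $\iota_2$; symmetrically, $\star((1,1))$ consists of $(1,1)$, the edges $[0,1]\otimes 1$ and $1\otimes [0,1]$, and $\iota_2$. Then I would compute the images of these under $x_{\sharp}$ using only the defining property $x_{\sharp}(\iota_2) = x$ and the fact that $x_{\sharp}$ commutes with face maps. This gives, for $\star((0,0))$, the four images $d_1^0 d_1^0 x$, $d_2^0 x$, $d_1^0 x$, and $x$, of degrees $0$, $1$, $1$, $2$ respectively; and analogously $d_1^1 d_1^1 x$, $d_2^1 x$, $d_1^1 x$, $x$ for $\star((1,1))$.

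The key observation is that morphisms of precubical sets preserve degree, so images in different degrees can never coincide. Hence, within $\star((0,0))$, the only potential collision is between the two degree-$1$ elements $d_1^0 x$ and $d_2^0 x$; injectivity is therefore equivalent to $d_1^0 x \neq d_2^0 x$. The same argument applied to $\star((1,1))$ gives equivalence with $d_1^1 x \neq d_2^1 x$. Combining the two equivalences yields the proposition.

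There is no real obstacle here; the only points that require minor care are verifying the content of the two stars inside $\llbracket 0,1 \rrbracket^{\otimes 2}$ (easily read off from the definition of the tensor product) and identifying $x_{\sharp}$ on edges as $d_i^k x$. Both are entirely routine given the material in Section \ref{precubs}.
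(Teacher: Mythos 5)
Your proposal is correct and matches the paper's proof essentially verbatim: the paper likewise identifies $\llbracket 0,1 \llbracket^{\otimes 2}$ and $\rrbracket 0,1 \rrbracket^{\otimes 2}$ with the four-element stars $\star((k,k)) = \{(k,k),\,(k,[0,1]),\,([0,1],k),\,([0,1],[0,1])\}$ and observes that injectivity of $x_{\sharp}$ there reduces to $x_{\sharp}(k,[0,1]) = d_1^kx \not= d_2^kx = x_{\sharp}([0,1],k)$, the only possible collision being between the two degree-$1$ elements. Your explicit remark that degree preservation rules out all other collisions is a point the paper leaves implicit, but the argument is the same.
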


\begin{proof}
For $k \in \{0,1\}$, $\star((k,k)) = \{(k,k), (k,[0,1]), ([0,1],k), ([0,1],[0,1])\}.$ The restriction of $x_{\sharp}$ to $\star((k,k))$ is injective if and only if $d_1^kx = x_{\sharp}(k,[0,1]) \not= x_{\sharp}([0,1],k) = d_2^ kx.$
\end{proof}

\subsection{The edge $\boldsymbol{e^ k_ix}$} \label{eki}

Let $x$ be an element of degree $n > 0$ of a precubical set $P$, and let $k\in\{0,1\}$ and $i \in \{1, \dots, n\}$. We define the element $e^ k_ix \in P_1$ by 
$$e^ k_ix = \left\{\begin{array}{ll} x,& n = 1,\\d_1^{1-k} \cdots d_{i-1}^ {1-k}d_{i+1}^{1-k}\cdots d_n^{1-k}x, & n > 1. \end{array} \right.$$ Note that the element $e^0_ix$ is an edge of $x$ leading from the final vertex of the face $d^ 0_ix$ to the final vertex of $x$, i.e., we have $d_1^0e^ 0_ix = d_1^1\cdots d_1^1d^0_ix$ and $d_1^1e^0_ix = d_1^ 1\cdots d_1^1x$. Similarly, $e^1_ix$ is an edge of $x$ leading from the initial vertex of $x$ to the initial vertex of the face $d^ 1_ix$, i.e., we have $d_1^ 0e^ 1_ix = d_1^ 0\cdots d_1^0x$ and $d_1^1e^ 1_ix = d_1^ 0\cdots d_1^0d_i^ 1x$.

\subsection{Weakly regular precubical sets and HDAs} 
\begin{sloppypar}
A precubical set or an $M\mbox{-}$HDA is said to be \emph{weakly regular} if all its elements are weakly regular. 
\end{sloppypar}

\begin{prop}\label{carwreg}
A precubical set $P$ is weakly regular if and only if all elements of degree $2$ are weakly regular. 
\end{prop}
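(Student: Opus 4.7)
My plan is to prove the contrapositive of the non-trivial direction. The forward implication is immediate from the definition, so I focus on showing that if every 2-cube of $P$ is weakly regular then so is every cube: I argue that a failure of weak regularity at any cube of degree $n\geq 2$ forces a failure at one of its 2-faces.

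Fix $x\in P_n$ with $n\geq 2$ and suppose the restriction of $x_\sharp$ to $\llbracket 0,1\llbracket^{\otimes n}=\star((0,\dots,0))$ is not injective; the case of $\rrbracket 0,1\rrbracket^{\otimes n}$ will be handled symmetrically, replacing front faces by back faces throughout. The elements of $\star((0,\dots,0))$ are naturally indexed by subsets $S\subseteq\{1,\dots,n\}$: to each $S$ one associates the cube $y_S$ of degree $|S|$ having $[0,1]$ in the positions of $S$ and $0$ in the remaining positions, and the attached morphism $y_{S\sharp}\colon\llbracket 0,1\rrbracket^{\otimes|S|}\to\llbracket 0,1\rrbracket^{\otimes n}$ is the canonical coordinate inclusion. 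If $S\neq T$ but $x_\sharp(y_S)=x_\sharp(y_T)$, the degrees must match, so $|S|=|T|=k$, and by uniqueness of the morphism attached to the common image cube $c$ one obtains $x_\sharp\circ y_{S\sharp}=c_\sharp=x_\sharp\circ y_{T\sharp}$.

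Let $\sigma_S$ and $\sigma_T$ be the strictly increasing enumerations of $S$ and $T$, let $m$ be the smallest index at which they disagree, and put $i=\sigma_S(m)$, $j=\sigma_T(m)$, so $i\neq j$. Evaluating the equality $x_\sharp\circ y_{S\sharp}=x_\sharp\circ y_{T\sharp}$ on the $m$-th standard edge at the origin of $\llbracket 0,1\rrbracket^{\otimes k}$ then yields $x_\sharp(f_i)=x_\sharp(f_j)$, where $f_\ell$ denotes the edge of $\llbracket 0,1\rrbracket^{\otimes n}$ at $(0,\dots,0)$ in direction $\ell$. Finally I would examine the 2-face $z\in(\llbracket 0,1\rrbracket^{\otimes n})_2$ carrying $[0,1]$ in positions $i,j$ and $0$ elsewhere: the two front faces of $z$ are precisely $f_i$ and $f_j$, so the two front faces of the 2-cube $x_\sharp(z)\in P_2$ coincide, and Proposition \ref{wreg2} delivers the required failure of weak regularity, contradicting the hypothesis on 2-cubes.

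The only delicate point is the last bookkeeping identification, namely matching the abstract operators $d_1^0,d_2^0$ on the 2-cube $x_\sharp(z)$ with the intended pair of edges $x_\sharp(f_i),x_\sharp(f_j)$ in $P$. This is settled by expanding $z$ as an iterated tensor product and repeatedly applying the commutation relations $d^k_i\circ d^l_j=d^l_{j-1}\circ d^k_i$ for $i<j$, but it is the step at which the argument would most easily be mishandled.
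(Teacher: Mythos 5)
Your proof is correct and takes essentially the same route as the paper: both arguments pass from a coincidence $x_\sharp(a)=x_\sharp(b)$ of $k$-cubes in the corner star to a coincidence of two origin edges in distinct directions $i\neq j$, and then contradict the weak regularity of the image of the $2$-cube spanning directions $i$ and $j$ via Proposition \ref{wreg2}. The paper extracts those edges with the operators $e^1_r$ instead of evaluating the factorizations $x_\sharp\circ y_{S\sharp}=c_\sharp=x_\sharp\circ y_{T\sharp}$, but this is the same computation, and your flagged ``delicate point'' is in fact immediate: $d^0_jx_\sharp(z)=x_\sharp(d^0_jz)$ for $j=1,2$ since $x_\sharp$ is a morphism of precubical sets, and one does not even need to know which front face of $z$ is $f_i$ and which is $f_j$, as their images coincide.
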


\begin{proof}
The ``only if'' part is obvious. Suppose that all elements of degree $2$ are weakly regular. Then all elements of degrees $0$, $1$, and $2$ are weakly regular. Consider an element $x$ of degree $n > 2$. Suppose that the restriction of $x_{\sharp}$ to $\llbracket 0,1 \llbracket^ {\otimes n}$ is not injective.  Let $a$ and $b$  be elements of $\llbracket 0,1 \llbracket^ {\otimes n}$ such that $a \not= b$ and $x_{\sharp}(a) = x_{\sharp}(b)$. Then $a$ and $b$ are of the same degree $0<k<n$. Write $a = (a_1, \dots, a_n)$ and $b = (b_1, \dots, b_n)$ and consider the uniquely determined sequences $1\leq i_1 < \cdots < i_k \leq n$ and $1\leq j_1 < \cdots < j_k \leq n$ such that $$a_i = \left\{\begin{array}{ll}[0,1], & i \in \{i_1,\dots ,i_k\},\\0, & i \notin \{i_1,\dots ,i_k\} \end{array}\right. \quad \mbox{and} \quad b_i = \left\{\begin{array}{ll}[0,1], & i \in \{j_1,\dots ,j_k\},\\0, & i \notin \{j_1,\dots ,j_k\}. \end{array}\right.$$ For $r\in \{1, \dots, k\}$, 
$$(e_r^1a)_i = \left\{\begin{array}{ll}[0,1], & i = i_r,\\0, & i \not= i_r \end{array}\right. \quad \mbox{and} \quad (e_r^1b)_i = \left\{\begin{array}{ll}[0,1], & i = j_r,\\0, & i \not= j_r. \end{array}\right.$$  Since $a\not =b$, there exists an $r \in \{1, \dots, k\}$ such that $i_r\not= j_r$. Hence $e_r^1a \not= e_r^1b$. We may suppose that $i_r < j_r$. Consider the element $c \in \llbracket 0,1 \llbracket^ {\otimes n}_2$ defined by $$c_i = \left\{\begin{array}{ll}[0,1], & i \in \{i_r,j_r\},\\0, & i \notin \{i_r,j_r\}. \end{array}\right.$$ Then $d_1^0c = e_r^1b$ and $d_2^0c = e_r^1a$. Moreover, $x_{\sharp}(c)$ is weakly regular. Therefore $e_r^1x_{\sharp}(b) = x_{\sharp}(e_r^1b) = x_{\sharp}(d_1^0c)= d_1^0x_{\sharp}(c) \not= d_2^0x_{\sharp}(c) = x_{\sharp}(d_2^0c) = x_{\sharp}(e_r^1a) = e_r^1x_{\sharp}(a)$. This is impossible because $x_{\sharp}(a) = x_{\sharp}(b)$. It follows that the restriction of $x_{\sharp}$ to $\llbracket 0,1 \llbracket^ {\otimes n}$ is injective. An analogous argument shows that the restriction of $x_{\sharp}$ to $\rrbracket 0,1 \rrbracket^ {\otimes n}$ is injective. Thus $x$ is weakly regular.
\end{proof}

\subsection{Weak morphisms and weak regularity} 

One easily sees that if the target of a morphism of precubical sets is regular, then so is the source. Here, we show the corresponding statement for weak morphisms and weak regularity.

\begin{prop} \label{weakreg}
Let $f\colon |Q| \to |P|$ be a weak morphism of precubical sets such that $P$ is weakly regular. Then $Q$ is weakly regular.
\end{prop}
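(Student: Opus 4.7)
The plan is to combine Propositions \ref{carwreg} and \ref{wreg2} to reduce the goal to the following: for every $x \in Q_2$, one has $d_1^0 x \neq d_2^0 x$ and $d_1^1 x \neq d_2^1 x$. These two inequalities are established by symmetric arguments focused respectively on the corners $(0,0)$ and $(1,1)$ of $\llbracket 0,1 \rrbracket^{\otimes 2}$, so I describe only the first. Suppose for contradiction that $d_1^0 x = d_2^0 x = e$ for some $x \in Q_2$.

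Applying condition (2) of the definition of weak morphism to $x_\sharp$ yields integers $l_1, l_2 \geq 1$, a morphism $x_\flat \colon \llbracket 0,l_1 \rrbracket \otimes \llbracket 0,l_2 \rrbracket \to P$, and a dihomeomorphism $\phi_x \colon [0,1]^2 \to [0,l_1] \times [0,l_2]$ (which is itself a weak morphism) satisfying $f \circ |x_\sharp| = |x_\flat| \circ \phi_x$. The next step is a structural analysis of $\phi_x$ on the corners. Since $\phi_x$ is a weak morphism it carries vertices to vertices, while order preservation in both directions forces the down-set and the up-set of $\phi_x(0,1)$, and of $\phi_x(1,0)$, to be one-dimensional. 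A short check shows that the only vertices of $[0,l_1] \times [0,l_2]$ with both a one-dimensional down-set and a one-dimensional up-set are $(0,l_2)$ and $(l_1,0)$, so $\{\phi_x(0,1), \phi_x(1,0)\} = \{(0,l_2),(l_1,0)\}$. Because $\phi_x$ carries down-sets to down-sets, $\phi_x(\{0\} \times [0,1])$ equals the down-set of $\phi_x(0,1)$, which is either $\{0\} \times [0,l_2]$ or $[0,l_1] \times \{0\}$; similarly, $\phi_x([0,1] \times \{0\})$ is the other. Hence $\phi_x$ maps the two front edges of $[0,1]^2$ homeomorphically onto the two minimal boundary edges of the target rectangle, in one of two possible pairings.

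Next, because $d_1^0 x = d_2^0 x = e$, both restrictions of $|x_\sharp|$ to the two front edges of $[0,1]^2$ coincide with the single map $|e_\sharp| \colon [0,1] \to |Q|$. Hence for any $t \in (0,1)$ the point $p = |e_\sharp|(t) = |x_\sharp|(0,t) = |x_\sharp|(t,0)$ satisfies $f(p) = |x_\flat|(\phi_x(0,t)) = |x_\flat|(\phi_x(t,0))$. For sufficiently small $t$, the previous paragraph places $\phi_x(0,t)$ in the interior of one of the 1-cubes $(0,[0,1])$ or $([0,1],0)$ of the grid and $\phi_x(t,0)$ in the interior of the other, so $f(p)$ lies simultaneously in the interiors of the two 1-cells $|x_\flat(0,[0,1])|$ and $|x_\flat([0,1],0)|$ of $|P|$. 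Since distinct cells of the CW-complex $|P|$ have disjoint interiors, the 1-cubes $x_\flat(0,[0,1])$ and $x_\flat([0,1],0)$ of $P$ must coincide. These are $d_1^0$ and $d_2^0$ of the 2-cube $x_\flat(([0,1],[0,1])) \in P_2$, so that 2-cube violates Proposition \ref{wreg2}, contradicting the weak regularity of $P$.

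The main obstacle I expect is the structural step: turning the abstract data of a dihomeomorphism that is also a weak morphism into the concrete geometric statement that it matches (possibly swapping) the two minimal front edges of $[0,1]^2$ with the two minimal front edges of $[0,l_1] \times [0,l_2]$. Once that is in hand, the rest is a local cellular transfer along $f \circ |x_\sharp| = |x_\flat| \circ \phi_x$ that uses only disjointness of cell interiors in $|P|$.
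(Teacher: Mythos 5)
Your proof is correct and takes essentially the same route as the paper: after the reduction via Propositions \ref{carwreg} and \ref{wreg2} to showing $d^0_1x \neq d^0_2x$ and $d^1_1x \neq d^1_2x$ for $2$-cubes of $Q$, you use the weak-morphism data $f\circ|x_\sharp| = |x_\flat|\circ\phi_x$ and the fact that $\phi_x$ matches the two front edges of $[0,1]^2$ with the two minimal edges of $R_x$ to transfer the coincidence of front faces of $x$ to a coincidence $d^0_1y = d^0_2y$ for the corner $2$-cube $y = x_\flat([0,1],[0,1])$, contradicting the weak regularity of $P$ --- exactly the paper's argument. The only local differences are that you prove the edge-pairing property of $\phi_x$ from scratch by an order-theoretic down-set/up-set argument where the paper simply cites \cite[2.1.2]{weakmor}, and that you deduce $x_\flat(0,[0,1]) = x_\flat([0,1],0)$ from disjointness of open cells of $|P|$ at a single interior point, where the paper instead compares the full image paths $f^{\mathbb I}((d^0_1x)_\sharp) = f^{\mathbb I}((d^0_2x)_\sharp)$ via a commutative diagram and extracts their first edges.
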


\begin{proof}
Consider an element $x \in Q_2$. We show that $d^0_1x \not= d^0_2x$. The proof of the fact that $d^1_1x \not= d^1_2x$ is analogous. Suppose that $d^0_1x = d^0_2x$. Write $R_x = \llbracket 0,k\rrbracket \otimes \llbracket 0, l\rrbracket$. By \cite[2.1.2]{weakmor}, we have either $\phi_x(\{0\}\times [0,1]) = \{0\}\times [0,l]$ or $\phi_x(\{0\}\times [0,1]) = [0,k]\times \{0\}$. We may suppose that $\phi_x(\{0\}\times [0,1]) = \{0\}\times [0,l]$. Then $\phi_x([0,1]\times \{0\}) = [0,k]\times \{0\}$. Consider the element $y = x_{\flat}([0,1],[0,1])$ and form the following commutative diagram:
\[\xymatrix{
[0,1] \ar[d]_{\approx} & [0,l] \ar[d]_{\approx} & [0,1] \ar[d]^{\approx} \ar@{_{(}->}[l]\\
\{0\}\times [0,1] \ar@{^{(}->}+<0ex,-2ex>;[d] \ar[r]^{\phi_x}_{}
&  \{0\}\times [0,l]  \ar@{^{(}->}+<0ex,-2ex>;[d]  
& \{0\}\times [0,1] \ar@{^{(}->}+<0ex,-2ex>;[d] \ar@{_{(}->}[l]_{} 
\\ 
[0,1]^2 \ar[d]^{}_{|x_{\sharp}|} \ar[r]^{\phi_x}_{}
&  [0,k]\times [0,l]  \ar[d]^{}_{|x_{\flat}|}  
& [0,1]^2 \ar[dl]^{|y_{\sharp}|}_{}\ar@{_{(}->}[l]
\\
|Q| \ar[r]^{}_{f} 
& |P|. 
& 
}\]
The column on the left hand side is $|(d^0_1x)_{\sharp}|$. The middle column is thus $|f^{\mathbb I}((d^0_1x)_{\sharp})|$. The composition on the right hand side is $|(d^0_1y)_{\sharp}|$. Construct a similar diagram with $|(d^0_2x)_{\sharp}|$ on the left, $|f^{\mathbb I}((d^0_2x)_{\sharp})|$ in the middle, and $|(d^0_2y)_{\sharp}|$ on the right. Since $d^0_1x = d^0_2x$, we have $l = k$ and $f^{\mathbb I}((d^0_1x)_{\sharp}) = f^{\mathbb I}((d^0_2x)_{\sharp})$. It follows that $|(d^0_1y)_{\sharp}| = |(d^0_2y)_{\sharp}|$ and hence that $d^0_1y = d^0_2y$. Since $P$ is weakly regular, this is impossible.
\end{proof}

\section{The trace category} \label{TC}

In this section, we define the trace category of an HDA and establish criteria for its invariance under cube collapses. We also show that certain dihomotopy invariant properties are preserved by weak morphisms that induce full functors of trace categories.

\subsection{The trace category of an HDA}
The \emph{fundamental category} of a precubical set $P$ is the category $\vec \pi_1(P)$ whose objects are the vertices of $P$ and whose morphisms are the dihomotopy classes of paths in $P$ (cp. \cite{Goubault, GrandisBook}). Given a weak morphism of precubical sets $f\colon |Q| \to |P|$, the functor $f_*\colon \vec \pi_1(Q) \to \vec \pi_1(P)$ is defined  by $f_*(v) = f_0(v)$ $(v\in Q_0)$ and  $f_*([\omega]) = [f^{\mathbb I}(\omega)]$ $(\omega \in Q^{\mathbb I})$. 

A vertex $v$ of a precubical set $P$ is said to be \emph{maximal (minimal)} if there is no element $x \in P_1$ such that $d_1^0x = v$ $(d_1^1x = v)$. The sets of maximal and minimal elements of $P$ are denoted by $m_0(P)$ and $m_1(P)$ respectively. The \emph{trace category} of an $M$-HDA $\A = (P,I,F,\lambda)$, $TC(\A)$, is the full subcategory of $\vec \pi_1(P)$ generated by $I\cup F\cup m_0(P) \cup m_1(P)$ \cite{weakmor}. The definition of the trace category of an HDA is a variant  of Bubenik's definition of the fundamental bipartite graph of a d-space \cite{BubenikExtremal}. Note that if $f$ is a weak morphism from an $M$-HDA $\B = (Q,J, G,\mu)$ to an $M\textrm{-}$HDA $\A = (P,I, F,\lambda)$ such that $$f_0(m_0(Q) \cup m_1(Q)) \subseteq I\cup F\cup m_0(P) \cup m_1(P),$$ then the functor $f_*\colon \vec \pi_1(Q) \to \vec \pi_1(P)$ restricts to a functor $f_*\colon TC(\B) \to TC(\A)$. 

Given an $M$-HDA $\A = (P,I, F,\lambda)$ and a precubical subset $Q \subseteq P$ such that $Q_0 = P_0$, one has $I \cup F \subseteq Q_0$ and can form the $M$-HDA $\B = (Q,I,F,\lambda|_{Q_1})$. If $Q_{\leq 1} = P_{\leq 1}$, then $m_j(Q) = m_j(P)$ $(j = 0,1)$ and the inclusion $\iota \colon |Q| \hookrightarrow |P|$  induces a full functor $\iota_*\colon TC(\B) \to TC(\A)$ that is the identity on objects. We clearly have the following proposition, which in particular implies that the trace category is invariant under elementary collapses of cubes of dimension $\geq 4$:

\begin{prop} 
If $Q_{\leq 2} = P_{\leq 2}$, then the functor $\iota_*\colon TC(\B) \to TC(\A)$ is an isomorphism.
\hfill $\square$
\end{prop}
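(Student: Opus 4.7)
The plan is to reduce the claim to faithfulness. Since $\iota_*$ is already known to be the identity on objects and to be full (by the discussion preceding the proposition), it suffices to show that $\iota_*$ is injective on morphism sets; that is, if two paths $\omega,\nu \in Q^{\mathbb I}$ are dihomotopic when regarded as paths in $P$, then they are already dihomotopic in $Q$.

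First I would observe that the set of paths of $P$ and the set of paths of $Q$ coincide. A path $\omega\colon \llbracket 0,k\rrbracket \to P$ of length $k\geq 1$ is uniquely determined by its associated sequence $(x_1,\dots,x_k)$ of $1$-cubes with $d_1^0 x_{j+1} = d_1^1 x_j$, and a path of length $0$ is determined by a vertex. Since $Q_{\leq 1} = P_{\leq 1}$, every such datum in $P$ is a datum in $Q$, so $P^{\mathbb I} = Q^{\mathbb I}$. In particular, $\iota^{\mathbb I}\colon Q^{\mathbb I}\to P^{\mathbb I}$ is the identity.

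Next I would verify that the elementary dihomotopy relation on $Q^{\mathbb I}$ agrees with its restriction from $P^{\mathbb I}$. By definition, an elementary dihomotopy is witnessed by two paths $\alpha,\beta$ and a $2$-cube $z$ satisfying certain face conditions. Since $Q_2 = P_2$ and $Q_{\leq 1} = P_{\leq 1}$, every such witness $(\alpha,\beta,z)$ in $P$ is also a witness in $Q$. Hence the elementary dihomotopy relations in $P$ and $Q$ coincide, and taking transitive closures, so do the dihomotopy relations. Consequently, for any two vertices $v,w$ and any two paths from $v$ to $w$, these paths are dihomotopic in $P$ if and only if they are dihomotopic in $Q$; this is exactly faithfulness of $\iota_*$.

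Combining faithfulness with the already-established fullness and the fact that $\iota_*$ is bijective on objects (the sets $I\cup F \cup m_0(P)\cup m_1(P)$ and $I\cup F \cup m_0(Q)\cup m_1(Q)$ coincide since $Q_{\leq 1}=P_{\leq 1}$), we conclude that $\iota_*\colon TC(\B)\to TC(\A)$ is an isomorphism of categories. There is no real obstacle here: the entire argument is a matter of unwinding the definitions of paths and of elementary dihomotopy, both of which only refer to cells of dimension at most $2$.
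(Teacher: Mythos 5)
Your proof is correct and is exactly the unwinding that the paper intends: the paper states this proposition without proof (``We clearly have\dots''), precisely because paths, elementary dihomotopies, and the object set $I\cup F\cup m_0\cup m_1$ all depend only on cells of dimension $\leq 2$, so $Q_{\leq 2}=P_{\leq 2}$ makes the two trace categories literally coincide. Your verification that $P^{\mathbb I}=Q^{\mathbb I}$ and that the elementary dihomotopy witnesses $(\alpha,\beta,z)$ transfer is the right justification of this.
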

\subsection{Divisibility of dihomotopy classes} Let $P$ be a precubical set. We say that a dihomotopy class $[\omega] \in \vec \pi_1 (P)$ is \emph{left} or \emph{$0$-divisible} by a dihomotopy class $[\alpha]$ if there exists a dihomotopy class $[\beta]$ such that $[\omega] = [\alpha]\cdot [\beta]$. Similarly, $[\omega]$ is \emph{right} or \emph{$1$-divisible} by  $[\alpha]$ if there exists a dihomotopy class $[\beta]$ such that $[\omega] = [\beta]\cdot [\alpha]$. 

\subsection{Invariance of the trace category under elementary $\boldsymbol{3}$-cube collapses} Let $\A = (P,I, F,\lambda)$ be an $M$-HDA, and let $x$ be a regular $3$-cube with free face $d^ k_ix$. Consider the precubical subset $Q = P \setminus \star(d^ k_ix)$ of $P$ and the $M$-HDA $\B = (Q,I,F,\lambda|_{Q_1})$. Since $Q_{\leq 1} = P_{\leq 1}$, the inclusion $\iota \colon |Q| \hookrightarrow |P|$ induces a full functor $\iota_*\colon TC(\B) \to TC(\A)$ that is the identity on objects. Recall from \ref{eki} that $e^k_ix$ is the edge of $x$ such that $d^ k_1e^k_ix = d^{1-k}_1d^{1-k}_1d^k_ix$ and $d^ {1-k}_1e^k_ix = d^{1-k}_1d^{1-k}_1d^{1-k}_1x$.

\begin{theor} \label{TC3}
\begin{sloppypar}
Suppose that for every path $\omega \in P^ {\mathbb I}$ with $\omega (k\cdot \length(\omega)) = d^ {1-k}_1d^{1-k}_1d^k_ix$ and $\omega ((1-k)\cdot \length(\omega)) \in {I \cup F \cup m_0(P) \cup m_1(P)}$, $[\omega]$ is $k$-divisible by $[(e^k_ix)_{\sharp}]$. Then $\iota_*\colon TC(\B) \to TC(\A)$ is an isomorphism.
\end{sloppypar}
\end{theor}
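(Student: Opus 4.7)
Since $\iota_*$ is the identity on objects and already known to be full, it suffices to establish faithfulness: given $\omega_1, \omega_2 \in Q^{\mathbb I}$ between vertices in $I \cup F \cup m_0(P) \cup m_1(P)$ that are dihomotopic in $P$, to produce a dihomotopy between them inside $Q$. Because $Q_{\leq 1} = P_{\leq 1}$, every path in $P$ is also a path in $Q$, so $Q^{\mathbb I} = P^{\mathbb I}$; the only difference between the dihomotopy relations in $P$ and in $Q$ is the single elementary dihomotopy generated by the removed $2$-cube $y = d^k_i x$, which identifies the two length-$2$ boundary paths $p = (d_1^0 y)_{\sharp} \cdot (d_2^1 y)_{\sharp}$ and $p' = (d_2^0 y)_{\sharp} \cdot (d_1^1 y)_{\sharp}$ of $y$.

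The core of the argument is a local key lemma: once extended on the appropriate side by the edge $e = e^k_i x$, the paths $p$ and $p'$ become dihomotopic in $Q$. For $k=0$ this says that $p \cdot (e^k_i x)_{\sharp}$ is dihomotopic to $p' \cdot (e^k_i x)_{\sharp}$ in $Q$, with the dual statement on the left for $k=1$. This is a computation inside the regular $3$-cube $x$: its six maximal paths from minimum to maximum vertex correspond to the six orderings of the three axis directions, and the six $2$-faces of $x$ induce six elementary dihomotopies among them via adjacent transpositions, forming a hexagonal cycle; removing the single face $y$ deletes one edge of this cycle but leaves it connected, so all six maximal paths remain mutually dihomotopic inside $Q$. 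Armed with this lemma, I would process a dihomotopy chain $\omega_1 = \sigma_0 \sim \cdots \sim \sigma_n = \omega_2$ in $P$ and rewrite each elementary $y$-step as a chain of dihomotopies in $Q$. In the case $k=0$, a $y$-step has the form $\sigma_j = \alpha \cdot p \cdot \beta$ and $\sigma_{j+1} = \alpha \cdot p' \cdot \beta$, where $\beta$ runs from the maximum of $y$ to a special vertex; the divisibility hypothesis applied to $\beta$ produces $\beta'$ with $\beta \sim (e^k_i x)_{\sharp} \cdot \beta'$ in $P$, and splicing this on both sides of the $y$-step yields
\[
\sigma_j \ \sim\ \alpha \cdot p \cdot (e^k_i x)_{\sharp} \cdot \beta' \ \sim\ \alpha \cdot p' \cdot (e^k_i x)_{\sharp} \cdot \beta' \ \sim\ \sigma_{j+1},
\]
with the middle step the key lemma and the outer ones coming from the divisibility; the case $k=1$ is symmetric, with divisibility applied to $\alpha$.

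The principal obstacle is to guarantee that the outer rewrites $\beta \sim (e^k_i x)_{\sharp} \cdot \beta'$ can themselves be realized inside $Q$, that is, that the dihomotopy certifying divisibility does not itself employ $y$. Since $\beta$ starts at the maximum of $y$, an elementary $y$-step inside any dihomotopy of $\beta$ would force some intermediate path to contain $p$ or $p'$ as a subpath and therefore to pass through the minimum of $y$ after starting at the maximum—a backward excursion across the boundary of $y$. Ruling this out uses the directed, free-face structure around the regular $3$-cube $x$; the cleanest way to make it precise is an induction on the total number of $y$-steps in the outer chain for $(\omega_1,\omega_2)$, coupled with an inner minimality argument selecting a dihomotopy chain realizing the divisibility with the fewest $y$-steps and isolating a first offending $y$-step to derive a contradiction from the directedness around $y$. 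The $k=1$ case is dual, with $\alpha$ ending at the minimum of $y$.
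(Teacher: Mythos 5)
Your skeleton coincides with the paper's: reduce to faithfulness, observe that since $Q_{\leq 1}=P_{\leq 1}$ the only elementary dihomotopy lost in $Q$ is the one mediated by the removed square $y=d^k_ix$, splice the divisibility hypothesis around each such $y$-step, and use as key lemma that after extending by $e^k_ix$ the two boundary paths of $y$ become dihomotopic in $Q$. Your justification of that lemma (the six maximal chains of the regular $3$-cube $x$ form a hexagonal cycle under elementary dihomotopies indexed bijectively by the six $2$-faces, so deleting the edge corresponding to $y$ leaves the cycle connected) is a correct expansion of the paper's one-line assertion that, $x$ being regular, $(e^1_ix)_{\sharp}\cdot(d^0_1z)_{\sharp}\cdot(d^1_2z)_{\sharp}\sim(e^1_ix)_{\sharp}\cdot(d^0_2z)_{\sharp}\cdot(d^1_1z)_{\sharp}$ in $Q$.

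The genuine gap is in your treatment of the ``principal obstacle,'' i.e., realizing the divisibility dihomotopy inside $Q$. Your proposed contradiction---a path starting at the maximum vertex of $y$ cannot later pass through its minimum vertex---presupposes that the underlying directed graph is acyclic, which is nowhere assumed; precubical sets in this paper routinely contain directed cycles (in the example of Section \ref{anotherex} one has $a\to_P b$ for \emph{all} vertices $a,b$, and the Peterson HDAs of Section \ref{example} are cyclic as well). So a ``backward excursion'' across $y$ is entirely possible, and an elementary $y$-step can genuinely occur inside a dihomotopy certifying divisibility. Your fallback---induction on the number of $y$-steps with an inner minimality argument---does not close either: eliminating one $y$-step by splicing in divisibility replaces it with a new $P$-dihomotopy whose own number of $y$-steps is uncontrolled, so there is no well-founded measure. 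The paper's resolution is different and simpler: induct on the \emph{length} of the paths. Since dihomotopy preserves length and endpoints, in an elementary step with $z=d^k_ix$ (say $k=1$) the prefix $\alpha$ is strictly shorter than $\omega$, and the divisibility hypothesis gives $\alpha\sim\gamma\cdot(e^1_ix)_{\sharp}$ in $P$ between two strictly shorter paths starting at the same distinguished vertex; the induction hypothesis itself then upgrades this to a dihomotopy in $Q$, with no structural control of $y$-steps needed. Replacing your directedness argument by this length induction repairs the proof; the rest of your argument then matches the paper's.
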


\begin{proof}
We have to show that $\iota_*$ is faithful. We only treat the case $k=1$ and leave the analogous case $k=0$ to the reader. We show that any two paths starting in a vertex of $I\cup F \cup m_0(P)\cup m_1(P)$ that are dihomotopic in $P$ are dihomotopic in $Q$. We proceed by induction on the length of the paths. It is clear that the assertion holds for paths of length $0$. Consider two paths $\omega, \nu \in P^{\mathbb I}$ of length $l > 0$ beginning in a vertex of ${I \cup F \cup m_0(P) \cup m_1(P)}$ that are dihomotopic in $P$, and suppose that the assertion holds for any two such paths of length $< l$. We may suppose that $\omega$ and $\nu$ are elementarily dihomotopic in $P$. Then there exist paths $\alpha, \beta \in P^ {\mathbb I} = Q^ {\mathbb I}$ and an element $z \in P_2$ such that $d_1^0d_1^0z = \alpha (\length(\alpha))$, $d_1^1d_1^1z = \beta (0)$ and $\{\omega ,\nu \} = \{{\alpha \cdot (d_1^0z)_{\sharp} \cdot (d_2^ 1z)_{\sharp} \cdot \beta}, {\alpha \cdot (d_2^0z)_{\sharp} \cdot (d_1^ 1z)_{\sharp} \cdot \beta }\}$.
If $z\not=d_i^1x$, then $z \in Q_2$ and $\omega$ and $\nu$ are elementarily dihomotopic in $Q$. Suppose that $z=d_i^1x$. By the hypothesis of the theorem, there exists a path $\gamma$ from $\alpha(0)$ to $d^0_1d^0_1d^0_1x$ such that $\alpha$ is dihomotopic in $P$ to $\gamma \cdot (e^1_ix)_{\sharp}$. By the inductive hypothesis, $\alpha \sim \gamma \cdot (e^1_ix)_{\sharp}$ in $Q$. Since $x$ is regular,  $(e^1_ix)_{\sharp}\cdot (d_1^0z)_{\sharp} \cdot (d_2^ 1z)_{\sharp}\sim (e^1_ix)_{\sharp} \cdot (d_2^0z)_{\sharp} \cdot (d_1^ 1z)_{\sharp}$ in $Q$. Hence  $\gamma \cdot (e^1_ix)_{\sharp}\cdot (d_1^0z)_{\sharp} \cdot (d_2^ 1z)_{\sharp} \cdot \beta \sim \gamma \cdot (e^1_ix)_{\sharp} \cdot (d_2^0z)_{\sharp} \cdot (d_1^ 1z)_{\sharp}\cdot \beta$ in $Q$. Thus $\omega \sim \nu$ in $Q$.
\end{proof}

\subsection{Invariance under elementary $\boldsymbol{2}$-cube collapses}
Let $\A = (P,I, F,\lambda)$ be an $M$-HDA, and let $x$ be a regular $2$-cube with free face $d^ k_ix$. Consider the precubical subset $Q = P \setminus \star(d^ k_ix)$ of $P$ and the $M$-HDA $\B = (Q,I,F,\lambda|_{Q_1})$. 

\begin{sloppypar}
\begin{theor} \label{TC2}
Suppose that 
\begin{enumerate}[(i)]
\item there exists an edge $y \in P_1\setminus \{d^k_ix\}$ such that $d^{1-k}_1y = d^{1-k}_1d^k_ix$; and
\item for every path $\omega \in Q^ {\mathbb I}$ with $\omega (k\cdot \length(\omega)) = d^ {1-k}_1d^k_ix$ and ${\omega ((1-k)\cdot \length(\omega))} \in {I \cup F \cup m_0(P) \cup m_1(P) \cup \{d^k_1d^ k_1x\}}$, $[\omega]$ is uniquely $k$-divisible in $\vec \pi_1(Q)$ by $[(e^k_ix)_{\sharp}]$.
\end{enumerate}
Then the inclusion $\iota \colon |Q| \hookrightarrow |P|$ induces an isomorphism $\iota_*\colon TC(\B) \to TC(\A)$.
\end{theor}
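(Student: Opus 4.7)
The plan is to show that $\iota_*$ is an isomorphism by verifying it is the identity on objects, is full, and is faithful. For the object part, since $Q_0 = P_0$ the sets $I$ and $F$ coincide, so it suffices to check $m_0(Q) = m_0(P)$ and $m_1(Q) = m_1(P)$. Only the two endpoints of $d^k_i x$ could possibly change status when the edge is removed; regularity of the 2-cube $x$ (via Proposition~\ref{wreg2} together with the cubical identities) guarantees that the two other faces of $x$ still supply the needed incoming/outgoing edges on one endpoint, while hypothesis~(i) supplies the missing edge on the other endpoint.

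Treating $k=1$ (the case $k=0$ is symmetric), for fullness I would argue by induction on the number of $d^1_i x$-occurrences in a path $\omega \in P^{\mathbb I}$ starting at a vertex of $I \cup F \cup m_0(P) \cup m_1(P)$. Isolating the first such occurrence gives $\omega = \alpha \cdot (d^1_i x)_{\sharp} \cdot \beta$ with $\alpha \in Q^{\mathbb I}$ ending at $d^0_1 d^1_i x$ and starting at the relevant vertex. Hypothesis~(ii) then yields $\alpha \sim \gamma \cdot (e^1_i x)_{\sharp}$ in $Q$, and the square dihomotopy of $x$ replaces $(e^1_i x)_{\sharp} \cdot (d^1_i x)_{\sharp}$ by the pair of edges of $x$ on the other side of the square, both distinct from $d^1_i x$ by regularity; this yields a $P$-dihomotopic path with strictly fewer $d^1_i x$-occurrences, and the induction closes.

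For faithfulness I would take $\omega, \nu \in Q^{\mathbb I}$ starting at a relevant vertex with $\omega \sim \nu$ in $P$ and consider a chain $\omega = \omega_0, \omega_1, \dots, \omega_m = \nu$ of pairwise elementarily $P$-dihomotopic paths. The easy observation is that any elementary $P$-dihomotopy between two $Q$-paths is already a $Q$-dihomotopy: the four edges of the dihomotopy 2-cube $z$ all lie in $Q_1$, forcing $z\ne x$ and hence $z\in Q_2$. The serious difficulty is that intermediate $\omega_j$ in the chain may use $d^k_i x$, so the chain need not lie in $Q^{\mathbb I}$. To get around this I would associate to each $\omega_j$ a normal form $\tilde\omega_j \in Q^{\mathbb I}$ obtained by iterating the fullness rewrite on the first occurrence of $d^k_i x$. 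The uniqueness clause of hypothesis~(ii) ensures $\tilde\omega_j$ is well-defined up to $Q$-dihomotopy, and the inclusion of $d^k_1 d^k_i x$ among the admissible starting vertices in (ii) is precisely what allows the rewrite to continue after an earlier $d^k_i x$ has been peeled off and the remaining sub-path starts from the far endpoint of the removed edge.

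It then remains to verify that $\tilde\omega_j \sim \tilde\omega_{j+1}$ in $Q$ for each elementary step, by case analysis on the dihomotopy 2-cube $z$. If $z\neq x$, then $z\in Q_2$ and the rewrite processes the bracketing parts of the two decompositions identically while leaving the central $z$-dihomotopy—already in $Q$—intact. If $z=x$, then $\omega_j$ and $\omega_{j+1}$ differ precisely by the square dihomotopy of $x$ that the normal form itself uses to eliminate $d^k_i x$: rewriting the path that carries $(d^k_i x)_{\sharp}$ returns (after one step) the other, and by uniqueness in~(ii) the $\gamma$-piece produced by that rewrite must coincide modulo $\sim_Q$ with the $\alpha$-piece common to both paths, so the elementary step at $x$ is absorbed by the rewrite. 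Chaining the resulting $Q$-dihomotopies along the whole chain gives $\omega = \tilde\omega_0 \sim \tilde\omega_1 \sim \dots \sim \tilde\omega_m = \nu$ in $Q$. The compatibility check in the case $z=x$, where tracking the absorption of the square by the normal form and invoking uniqueness in (ii) is essential, is the main obstacle.
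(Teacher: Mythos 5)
Your proposal is correct and follows essentially the same route as the paper: the object part is the paper's Lemma~\ref{minmax}, and your normal-form rewriting (decomposing at the occurrences of $d^k_ix$, replacing $(e^k_ix)_{\sharp}\cdot(d^k_ix)_{\sharp}$ by the opposite edges of the square $x$, with the uniqueness clause of~(ii) guaranteeing well-definedness and absorbing the elementary step when $z=x$, and the extra start vertex $d^k_1d^k_1x$ allowing the rewrite to continue past earlier occurrences) is exactly the content of the paper's Lemma~\ref{aprime}. The only cosmetic difference is that you phrase the conclusion as fullness plus faithfulness, whereas the paper uses the same construction to define an explicit inverse functor $\rho([\omega]) = [\omega']$.
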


\begin{proof}
It follows from Lemma \ref{minmax} below that $\iota$ induces a functor $\iota_*\colon TC(\B) \to TC(\A)$ that is the identity on objects.

Define a functor $\rho \colon TC(\A) \to TC(\B)$ as follows: For vertices $v \in {I \cup F \cup m_0(P)\cup m_1(P)}$, set $\rho (v) = v$. For paths $\omega$ in $P$ beginning and ending in $I \cup F \cup  m_0(P)\cup m_1(P)$, set $\rho ([\omega]) = [\omega']$ where $\omega'$ is the path constructed in Lemma \ref{aprime} below. One easily checks that $\rho = \iota_*^{-1}$. 
\end{proof}
\end{sloppypar}

\begin{lem} \label{minmax}
Under the first hypothesis of Theorem \ref{TC2}, $m_0(Q) = m_0(P)$ and $m_1(Q) = m_1(P)$.
\end{lem}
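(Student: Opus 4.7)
The plan is to exploit the fact that removing the free face $d^k_i x$ of the regular $2$-cube $x$ changes $P$ only in degrees $1$ and $2$: since $\star(d^k_i x) = \{x, d^k_i x\}$, we have $Q_0 = P_0$, $Q_1 = P_1 \setminus \{d^k_i x\}$, and $Q_n = P_n$ for $n \neq 2$ (with $Q_2 = P_2 \setminus \{x\}$). In particular, the inclusion $m_j(P) \subseteq m_j(Q)$ for $j = 0,1$ is immediate from $Q_1 \subseteq P_1$: a vertex with no outgoing (resp.\ incoming) edge in $P$ certainly has none in $Q$.

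For the reverse inclusion $m_j(Q) \subseteq m_j(P)$, I would take $v \in P_0$ with $v \notin m_j(P)$ and exhibit an edge $z' \in Q_1$ with $d^j_1 z' = v$. By hypothesis there is some $z \in P_1$ with $d^j_1 z = v$; if $z \neq d^k_i x$ there is nothing to prove, so the only case to handle is $z = d^k_i x$, i.e.\ $v = d^j_1 d^k_i x$. At this point I would split on whether $j = k$ or $j = 1-k$.

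In the case $j = k$, the vertex $v = d^k_1 d^k_i x$ is the initial vertex of $x$ if $k = 0$ and the final vertex of $x$ if $k = 1$. Writing $\{i, j'\} = \{1, 2\}$, the precubical identity $d^k_1 d^k_2 = d^k_1 d^k_1$ gives $d^j_1 d^k_{j'} x = d^k_1 d^k_i x = v$, and by the regularity of $x$ the edge $d^k_{j'} x$ is distinct from $d^k_i x$ and therefore lies in $Q_1$. In the complementary case $j = 1-k$, no edge of $x$ other than $d^k_i x$ has its $d^j_1$-face equal to $v$; this is exactly where hypothesis (i) is needed. It supplies an edge $y \in P_1 \setminus \{d^k_i x\}$ with $d^{1-k}_1 y = d^{1-k}_1 d^k_i x = v$, and since $j = 1 - k$ and $y \neq d^k_i x$, the edge $y$ belongs to $Q_1$ and satisfies $d^j_1 y = v$.

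The only real subtlety is the case analysis: hypothesis (i) is tailored to provide exactly the edge one needs in the ``outside-$x$'' case $j = 1-k$, while the ``inside-$x$'' case $j = k$ is handled directly by the regularity of $x$ together with the cubical relation $d^k_1 d^k_2 = d^k_1 d^k_1$. Once this dichotomy is recognized, the remaining verifications are purely mechanical.
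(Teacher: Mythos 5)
Your proof is correct and takes essentially the same route as the paper's: the inclusion $m_j(P) \subseteq m_j(Q)$ is immediate, the case $j = k$ is handled via the precubical identity $d^k_1 d^k_1 x = d^k_1 d^k_2 x$ and regularity of $x$ (your $d^k_{j'}x$ is the paper's $d^k_{3-i}x$), and the case $j = 1-k$ invokes hypothesis (i) exactly as in the paper. The only blemish is the cosmetic slip ``$Q_n = P_n$ for $n \neq 2$,'' which should read $n \geq 3$ since $Q_1 \neq P_1$, as your own preceding clause makes clear.
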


\begin{proof}
Since $Q_0 = P_0$, we have $m_0(P) \subseteq m_0(Q)$ and $m_1(P) \subseteq m_1(Q)$. Consider a vertex $v\in m_k(Q)$. Since $d_1^kd_i^kx = d_1^kd_{3-i}^kx$ and $d_{3-i}^kx \in Q_1$, $v \not= d_1^kd_i^kx$. It follows that $v\in m_k(P)$ and hence that $m_k(Q) = m_k(P)$. Now consider a vertex $v \in P_0 = Q_0$ such that $v \notin m_{1-k}(P)$. Then there exists an element $a \in P_1$ such that $d_1^{1-k}a = v$. If $a \not= d_i^kx$, then $a \in Q_1$ and $v \notin m_{1-k}(Q)$. If $a = d_i^kx$, then condition (i) of Theorem \ref{TC2} ensures that there is an element $b \in Q_1$ such that $d_1^{1-k}b = d_1^{1-k}a = v$. Therefore also in this case $v \notin m_{1-k}(Q)$. Hence $m_{1-k}(Q) = m_{1-k}(P)$.
\end{proof}

\begin{lem} \label{aprime}
Under the second hypothesis of Theorem \ref{TC2}, for each path $\omega \in P^{\mathbb I}$ with ${\omega ((1-k)\cdot \length(\omega))} \in {I \cup F \cup m_0(P) \cup m_1(P)}$, there exists a path  $\omega'\in Q^{\mathbb I}$ such that 
$\omega' = \omega$ if $\omega \in Q^{\mathbb I}$, 
$\omega' \sim \omega$ in $P$, and
$\omega' \sim \nu'$ in $Q$ if $\omega \sim \nu$ in $P$.
\end{lem}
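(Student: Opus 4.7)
My plan is to construct $\omega'$ by factoring $\omega$ at its $d^k_ix$-occurrences, using hypothesis (ii) of Theorem \ref{TC2} to rewrite the pieces between consecutive occurrences so that each ends with $(e^k_ix)_{\sharp}$, and then using the $2$-cube $x$ to replace each block $(e^k_ix)_{\sharp}\cdot(d^k_ix)_{\sharp}$ by the complementary $L$-shape through $x$, which lies entirely in $Q$.  Restricting to $k=1$ (the case $k=0$ is symmetric), set $p:=(d^0_ix)_{\sharp}\cdot(d^1_{3-i}x)_{\sharp}$, which lies in $Q^{\mathbb I}$ and is elementarily dihomotopic in $P$ via $x$ to $(e^1_ix)_{\sharp}\cdot(d^1_ix)_{\sharp}$.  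Let $m$ be the number of occurrences of $d^1_ix$ in $\omega$ and factor
\[
\omega = \omega_0 \cdot (d^1_ix)_{\sharp} \cdot \omega_1 \cdot (d^1_ix)_{\sharp} \cdots (d^1_ix)_{\sharp} \cdot \omega_m
\]
with each $\omega_j \in Q^{\mathbb I}$.  For $0 \leq j < m$, the path $\omega_j$ ends at $d^0_1d^1_ix$ and begins either at $\omega(0) \in I \cup F \cup m_0(P) \cup m_1(P)$ (when $j=0$) or at $d^1_1d^1_1x$ (when $j \geq 1$), so hypothesis (ii) of Theorem \ref{TC2} applies and yields a path $\tilde\omega_j \in Q^{\mathbb I}$, unique up to dihomotopy in $Q$, with $\omega_j \sim \tilde\omega_j \cdot (e^1_ix)_{\sharp}$ in $Q$.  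Set
\[
\omega' := \tilde\omega_0 \cdot p \cdot \tilde\omega_1 \cdot p \cdots p \cdot \tilde\omega_{m-1} \cdot p \cdot \omega_m,
\]
with the convention $\omega' = \omega$ when $m = 0$.  Combining the $Q$-dihomotopies $\omega_j \sim \tilde\omega_j\cdot(e^1_ix)_{\sharp}$ with the $P$-dihomotopies $(e^1_ix)_{\sharp}\cdot(d^1_ix)_{\sharp}\sim p$ produces $\omega'\sim\omega$ in $P$, and clearly $\omega'=\omega$ whenever $\omega\in Q^{\mathbb I}$.

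For the consistency condition, it suffices to check it when $\omega$ and $\nu$ are elementarily dihomotopic in $P$ via some $2$-cube $z$.  If $z\neq x$, then $z\in Q_2$ because $d^k_ix$ is a free face only of $x$; the two $L$-shapes of $z$ therefore avoid $d^k_ix$, so the positions of the $d^1_ix$-edges in $\omega$ and $\nu$ coincide, the factorizations align, and there is at most one index $j_1$ with $\omega_{j_1}\neq\nu_{j_1}$, where $\omega_{j_1}\sim\nu_{j_1}$ by an elementary $Q$-dihomotopy via $z$.  The uniqueness clause of hypothesis (ii) propagates this to $\tilde\omega_j\sim\tilde\nu_j$ in $Q$ for each $j<m$, and $\omega_m\sim\nu_m$ in $Q$, so concatenation yields $\omega'\sim\nu'$ in $Q$.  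If $z=x$, then exactly one of $\omega,\nu$, say $\omega$, uses $(e^1_ix)_{\sharp}\cdot(d^1_ix)_{\sharp}$ at the $z$-position while $\nu$ uses $p$, so $\nu$ has one fewer $d^1_ix$-occurrence.  Writing $j_0$ for the index of the $z$-position in $\omega$'s factorization, the piece $\omega_{j_0}$ has the form $\omega_{j_0}'\cdot(e^1_ix)_{\sharp}$, and $\nu$'s new piece at position $j_0$ equals $\omega_{j_0}'\cdot p\cdot\omega_{j_0+1}$.  Applying hypothesis (ii) to $\omega_{j_0+1}$ to obtain $\omega_{j_0+1}\sim\tilde\omega_{j_0+1}\cdot(e^1_ix)_{\sharp}$ in $Q$, and invoking the uniqueness clause twice (once to get $\tilde\omega_{j_0}\sim\omega_{j_0}'$ in $Q$ from $\omega_{j_0}=\omega_{j_0}'\cdot(e^1_ix)_{\sharp}$, and once to get $\tilde\nu_{j_0}\sim\omega_{j_0}'\cdot p\cdot\tilde\omega_{j_0+1}$), gives $\tilde\nu_{j_0}\sim\tilde\omega_{j_0}\cdot p\cdot\tilde\omega_{j_0+1}$ in $Q$.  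Substituting back reconstructs $\omega'$ from $\nu'$ up to $Q$-dihomotopy.

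The main obstacle I anticipate is the $z=x$ case, where the factorizations of $\omega$ and $\nu$ have different lengths and $\tilde\nu_{j_0}$ must be identified with the longer concatenation $\tilde\omega_{j_0}\cdot p\cdot\tilde\omega_{j_0+1}$.  The uniqueness clause of hypothesis (ii) is indispensable here: without it one could not simultaneously control the various $\tilde\omega_j$'s and $\tilde\nu_j$'s, and the bookkeeping would collapse.  One also needs to verify the little endpoint lemmas at each application of hypothesis (ii) (the intermediate pieces indeed start at $d^1_1d^1_1x$ and end at $d^0_1d^1_ix$), but that is routine.
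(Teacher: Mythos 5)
Your proof is correct and takes essentially the same route as the paper's own argument: the same factorization of $\omega$ at the occurrences of $d^1_ix$, the same use of the unique-divisibility hypothesis to produce the rewritten pieces $\tilde\omega_j$ (the paper's $\bar\omega_j$), the same replacement of $(e^1_ix)_{\sharp}\cdot(d^1_ix)_{\sharp}$ by the complementary $L$-shape through $x$, and the same case split $z=x$ versus $z\neq x$ for the consistency property. The only cosmetic difference is that in the $z=x$ case the paper explicitly separates the boundary case $s=r_\omega-1$ (your $j_0+1=m$), where $\omega_{j_0+1}$ is the final piece and the comparison with the tail of $\omega'$ is made directly rather than via a second application of uniqueness; your write-up tacitly assumes $j_0+1<m$ there, but that boundary case is immediate.
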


\begin{proof}
As before, we only treat the case $k=1$ and leave the analogous case $k=0$ to the reader. 
Write $\omega$ uniquely as a concatenation $$\omega = \omega_0\cdot (d^1_ix)_{\sharp}\cdot  \omega_1 \cdots \omega_{r_{\omega}-1}\cdot (d^1_ix)_{\sharp} \cdot \omega_{r_{\omega}}$$
where each $\omega_j$ is a path in $Q$. By the second hypothesis of Theorem \ref{TC2}, there exist paths $\bar \omega_j \in Q^ {\mathbb I}$ $(j< r_{\omega})$ such that $\bar \omega_j\cdot (e^1_ix)_{\sharp} = \bar \omega_j\cdot (d^0_{3-i}x)_{\sharp}\sim \omega_j$ in $Q$. Set
$$\omega' = \bar \omega_0\cdot (d^0_{i}x)_{\sharp}\cdot (d^1_{3-i}x)_{\sharp}\cdot  \bar \omega_1 \cdots \bar \omega_{r_{\omega}-1}\cdot (d^0_{i}x)_{\sharp}\cdot (d^1_{3-i}x)_{\sharp} \cdot \omega_{r_{\omega}}.$$
Since $x$ is regular, $\omega'\in Q^{\mathbb I}$. It is clear that $\omega'$ satisfies the first two properties of the statement. In order to show the remaining  property, let $\omega,\nu\in P^{\mathbb I}$ be elementarily dihomotopic paths beginning in a vertex of ${I \cup F \cup m_0(P) \cup m_1(P)}$. Then there exist paths $\alpha, \beta \in P^ {\mathbb I}$ and an element $z \in P_2$ such that $d_1^0d_1^0z = \alpha (\length(\alpha))$, $d_1^1d_1^1z = \beta (0)$ and $\{\omega ,\nu \} = \{{\alpha \cdot (d_1^0z)_{\sharp} \cdot (d_2^ 1z)_{\sharp} \cdot \beta}, {\alpha \cdot (d_2^0z)_{\sharp} \cdot (d_1^ 1z)_{\sharp} \cdot \beta }\}$. 

Suppose first that $z = x$. Then we may assume that for some $0\leq s<r_{\omega}$, $$\alpha \cdot (d^0_{3-i}x)_{\sharp} = \omega_0\cdot (d^1_ix)_{\sharp}\cdot  \omega_1 \cdots \omega_{s-1}\cdot (d^1_ix)_{\sharp}\cdot \omega_{s}$$
and
$$\beta = \omega_{s+1}\cdot (d^1_ix)_{\sharp}\cdot  \omega_{s+2} \cdots \omega_{r_{\omega}-1}\cdot (d^1_ix)_{\sharp}\cdot \omega_{r_{\omega}}.$$ Then there exists a path $\phi\in Q^{\mathbb I}$ such that $\omega_s = \phi\cdot (d^0_{3-i}x)_{\sharp}$, 
$$\alpha  = \omega_0\cdot (d^1_ix)_{\sharp}\cdot  \omega_1 \cdots \omega_{s-1}\cdot (d^1_ix)_{\sharp}\cdot \phi,$$
and
$$\nu = \omega_0\cdot (d^1_ix)_{\sharp} \cdots  \omega_{s-1}\cdot (d^1_ix)_{\sharp}\cdot \phi\cdot (d^0_ix)_{\sharp}\cdot(d^1_{3-i}x)_{\sharp}\cdot \omega_{s+1}\cdot (d^1_ix)_{\sharp}\cdots   (d^1_ix)_{\sharp}\cdot \omega_{r_{\omega}}.$$ Thus, $r_{\nu} = r_{\omega}-1$ and 
$$\nu_j = \left\{\begin{array}{ll}
\omega_j, & j< s,\\
\phi\cdot (d^0_ix)_{\sharp}\cdot(d^1_{3-i}x)_{\sharp}\cdot \omega_{s+1}, & j = s,\\
\omega_{j+1}, & j > s.
\end{array} \right.$$ We have $\phi \sim \bar \omega_s$ in $Q$ and therefore $\nu_s \sim \bar \omega_s \cdot (d^0_ix)_{\sharp}\cdot (d^1_{3-i}x)_{\sharp}\cdot  \omega_{s+1} $ in $Q$. If $s < r_{\omega}-1$, this implies that $\bar \nu_s \sim \bar \omega_s \cdot (d^0_ix)_{\sharp}\cdot (d^1_{3-i}x)_{\sharp}\cdot \bar \omega_{s+1}$ in $Q$. It follows that $\nu' \sim \omega'$ in $Q$.

Suppose now that $z \not= x$. Then $r_{\omega} = r_{\nu}$ and there exists an index $s$ such that $\omega_j = \nu_j$ for all $j \not= s$ and $\omega_s$ and $ \nu_s$ are elementarily dihomotopic in $Q$. This implies that $\bar \omega_j \sim \bar \nu_j$ in $Q$ for all $j$ and hence that $\omega' \sim \nu'$ in $Q$.
\end{proof}

\subsection{The dihomotopy cancellation property} \label{cancel}

The uniqueness requirement in a divisibility condition as the one considered in Theorem \ref{TC2} is fulfilled if the path representing the divisor class has the following dihomotopy cancellation property: We say that a path $\gamma$ in a precubical set has the \emph{right dihomotopy cancellation property} if for all paths $\omega$ and $\nu$ ending in $\gamma (0)$,
$$[\omega]\cdot [\gamma] = [\nu]\cdot [\gamma] \Rightarrow [\omega] = [\nu].$$
Similarly, one defines the \emph{left dihomotopy cancellation property}. We remark that a path whose dihomotopy class is weakly invertible or a Yoneda morphism in the sense of \cite{Components, Components2} has both the right and left dihomotopy cancellation properties. 

The next proposition gives two sufficient conditions for the right dihomotopy cancellation property. The  corresponding statement for the left dihomotopy cancellation property also holds.

\begin{prop} \label{dihocancel}
Let $\gamma $ be a path in a precubical set such that no dihomotopic path starts with a back face of a $2$-cube or no edge ending in $\gamma(0)$ is a front face of a $2$-cube. Then $\gamma$ has the right dihomotopy cancellation property.
\end{prop}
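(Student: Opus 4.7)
The plan is to induct on the length of a chain of elementary dihomotopies connecting $\omega\cdot\gamma$ to $\nu\cdot\gamma$, while maintaining a factorization at the split vertex $\gamma(0)$. Assume $[\omega]\cdot[\gamma]=[\nu]\cdot[\gamma]$ and fix a sequence of paths $\omega\cdot\gamma=\pi_0,\pi_1,\ldots,\pi_n=\nu\cdot\gamma$ in which consecutive paths are elementarily dihomotopic.

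I will prove by induction on $i$ the following invariant: $\pi_i$ admits a concatenation decomposition $\pi_i=\omega_i\cdot\gamma_i$ with $\length(\gamma_i)=\length(\gamma)$, and the factors satisfy $\omega_i\sim\omega$, $\gamma_i\sim\gamma$, and $\gamma_i(0)=\gamma(0)$. The base case $i=0$ is immediate with $\omega_0=\omega$ and $\gamma_0=\gamma$. For the inductive step, write $m=\length(\omega_i)$ and suppose the elementary dihomotopy carrying $\pi_i$ to $\pi_{i+1}$ swaps two adjacent edges at positions $j+1$ and $j+2$ via a $2$-cube $z$; by the defining formula recalled in Section \ref{precubs}, the first of these two edges is always a front face of $z$ and the second a back face of $z$. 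If $j+2\leq m$, the swap lies entirely in $\omega_i$: I define $\omega_{i+1}$ by applying the swap inside $\omega_i$ and set $\gamma_{i+1}=\gamma_i$. The case $j\geq m$ is symmetric. In both cases, endpoints, lengths, and dihomotopy classes are preserved, so the invariant carries through.

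The only delicate case is $j=m-1$, where the swap straddles the split point. Here the last edge of $\omega_i$ is the first of the two swapped edges, hence a front face of $z$, and it ends at $\gamma_i(0)=\gamma(0)$; the first edge of $\gamma_i$ is the second swapped edge, hence a back face of $z$, and $\gamma_i\sim\gamma$. Under the second hypothesis, the existence of a front face of a $2$-cube ending at $\gamma(0)$ is directly forbidden. Under the first hypothesis, $\gamma_i$ is a path dihomotopic to $\gamma$ that begins with a back face of a $2$-cube, again forbidden. Either way, the straddling case cannot occur, and the invariant propagates to $i+1$.

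Applying the invariant at $i=n$ gives $\nu\cdot\gamma=\omega_n\cdot\gamma_n$ with $\length(\gamma_n)=\length(\gamma)$, and uniqueness of the concatenation decomposition at position $\length(\nu)=\length(\omega)$ forces $\omega_n=\nu$ and $\gamma_n=\gamma$. Hence $\nu=\omega_n\sim\omega$, which is the desired cancellation. The main obstacle is verifying that the straddling case truly cannot arise; this hinges on the observation that in the definition of elementary dihomotopy, the first of the two exchanged edges is a front face and the second a back face, so the two hypotheses apply in exactly the right direction, acting on the last edge of $\omega_i$ and on $\gamma_i$ respectively.
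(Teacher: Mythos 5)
Your proof is correct and takes essentially the same approach as the paper's: both maintain a factorization at the split vertex $\gamma(0)$ along a chain of elementary dihomotopies, observing that in an elementary swap the first exchanged edge is a front face and the second a back face of the $2$-cube, so that the two hypotheses rule out exactly the straddling case. Your explicit bookkeeping of $\length(\gamma_i)$ and the concluding appeal to uniqueness of the concatenation decomposition merely spell out what the paper's proof leaves implicit.
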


\begin{proof}
Consider first a path $\phi$ ending in $\gamma(0)$, a path $\psi \sim \gamma$, and a path $\rho$ that is elementarily dihomotopic to $\phi\cdot \psi$. We show that there exists a factorization $\rho = \phi'\cdot \psi'$ with $\phi \sim \phi'$ and $\psi \sim \psi'$. Consider paths $\alpha, \beta$, a $2$-cube $z$, and $i\in \{1,2\}$ such that $d_1^0d_1^0z = \alpha (\length(\alpha))$, $d_1^1d_1^1z = \beta (0)$, $\phi\cdot \psi = \alpha \cdot (d_i^0z)_{\sharp} \cdot (d_{3-i}^ 1z)_{\sharp} \cdot \beta$, and $\rho= \alpha \cdot (d_{3-i}^0z)_{\sharp} \cdot (d_i^ 1z)_{\sharp} \cdot \beta$. Since $\psi$ does not start with the edge $d_{3-i}^ 1z$ or $\phi$ does not terminate with the edge $d^0_iz$, there exists either a path $\beta'$ such that $\phi = \alpha \cdot (d_i^0z)_{\sharp} \cdot (d_{3-i}^ 1z)_{\sharp} \cdot \beta'$ and $\beta = \beta'\cdot \psi$ or a path $\alpha'$ such that $\psi = \alpha' \cdot (d_i^0z)_{\sharp} \cdot (d_{3-i}^ 1z)_{\sharp} \cdot \beta$ and $\alpha = \phi\cdot \alpha'$. In the first case, $\rho= \alpha \cdot (d_{3-i}^0z)_{\sharp} \cdot (d_i^ 1z)_{\sharp} \cdot \beta' \cdot \psi$ and $\phi$ is elementarily dihomotopic to $\alpha \cdot (d_{3-i}^0z)_{\sharp} \cdot (d_i^ 1z)_{\sharp} \cdot \beta'$. In the second case, $\rho= \phi \cdot \alpha' \cdot (d_{3-i}^0z)_{\sharp} \cdot (d_i^ 1z)_{\sharp} \cdot \beta$ and $\psi$ is elementarily dihomotopic to $\alpha' \cdot (d_{3-i}^0z)_{\sharp} \cdot (d_i^ 1z)_{\sharp} \cdot \beta$. We therefore obtain a factorization $\rho = \phi'\cdot \psi'$ with $\phi \sim \phi'$ and $\psi \sim \psi'$, as required.

Let now $\omega$ and $\nu$ be paths ending in $\gamma(0)$ such that $[\omega]\cdot [\gamma] = [\nu]\cdot [\gamma]$. Let $\rho_0, \dots, \rho_r$ be paths such that $\rho_0 = \omega \cdot \gamma$, $\rho_r = \nu \cdot \gamma$, and for all $j \in \{0, \dots, r-1\}$, $\rho_j$ is elementarily dihomotopic to $\rho_{j+1}$. By what we have shown above, there exist paths $\omega_0, \dots, \omega_r$, $\gamma_0, \dots, \gamma_r$ such that $\omega_0 = \omega$, $\omega_r = \nu$, $\gamma_0 = \gamma_r = \gamma$, $\rho_j = \omega_j \cdot \gamma_j$, $\omega_j \sim \omega_{j+1}$, and $\gamma_j \sim \gamma_{j+1}$. In particular, $\omega \sim \nu$.
\end{proof}

\begin{ex} \label{excancel}
Consider a regular $2$-cube $x$ with free face $d^1_ix$ in a weakly regular $M$-HDA $\A = (P,I,F,\lambda)$, and suppose that the only edge ending in $d^0_1d^1_ix$ is $e^1_ix = d^0_{3-i}x$. Then $(e^1_ix)_{\sharp}$ has the right dihomotopy cancellation property both in $P$ and in $P \setminus \star(d^1_ix)$. Indeed, in both cases, the only path dihomotopic to $(e^1_ix)_{\sharp}$ is $(e^1_ix)_{\sharp}$ itself, and $e^1_ix$ is not the back face of any $2$-cube. Since every non-constant path ending in $d^0_1d^1_ix$ terminates with $(e^1_ix)_{\sharp}$, it follows that the second condition of Theorem \ref{TC2} is satisfied if $d^0_1d^1_ix \notin I\cup F$. 
\end{ex}

\subsection{Invariance of the trace category under vertex star collapses}

Let $x$ be a regular cube of degree $n\geq 2$ of an an $M$-HDA $\A = (P,I,F,\lambda)$, and let $k_1, \dots, k_n \in\{0,1\}$ such that at least one $k_i = 0$, at least one $k_i = 1$, and $\star(d^{k_n}_1\cdots d^{k_1}_1x) \subseteq x_{\sharp}(\llbracket 0,1 \rrbracket ^{\otimes n})$. Suppose that $d^{k_n}_1\cdots d^{k_1}_1x \notin I \cup F$. Consider the  precubical subset $Q = P \setminus \star(d^{k_n}_1\cdots d^{k_1}_1x)$ of $P$ and the $M$-HDA $\B = (Q, I, F, \lambda |_{Q_1})$, and let $\iota$ denote the inclusion $|Q| \hookrightarrow |P|$. The following result has recently been established by Misamore \cite{Misamore} in a slightly different setting:

\begin{theor} \label{omegaprime}
The functor $\iota_* \colon \vec \pi_1(Q) \to \vec \pi_1(P)$ is fully faithful. \hfill $\square$
\end{theor}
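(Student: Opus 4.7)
The plan is to use the fact that, since $x$ is regular, $v := d^{k_n}_1\cdots d^{k_1}_1 x$ is the corner $x_\sharp(k_1,\dots,k_n)$ of the standard $n$-cube spanned by $x$, and that by hypothesis $v$ is an ``interior'' corner: the assumption that some $k_i=0$ and some $k_j=1$ says that $v$ is neither the initial nor the final vertex of $x$, so there exist both incoming and outgoing coordinate edges at $v$ inside $x_\sharp(\llbracket 0,1\rrbracket^{\otimes n})$. Since $\star(v)$ is contained in this cube and $x_\sharp$ is injective, every cube of $\star(v)$ is an $x_\sharp(\epsilon_1,\dots,\epsilon_n)$ with $\epsilon_\ell\in\{k_\ell,[0,1]\}$. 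I would prove fullness and faithfulness of $\iota_*$ separately on the morphism sets $\vec\pi_1(Q)(u,w)\to\vec\pi_1(P)(u,w)$ for $u,w\in Q_0$.

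The key local lemma is a rerouting step: for every pair of an incoming edge $e_{\mathrm{in}}$ at $v$ (in some direction $i$ with $k_i=1$) and an outgoing edge $e_{\mathrm{out}}$ at $v$ (in some direction $j$ with $k_j=0$), we necessarily have $i\ne j$, and the $2$-cube obtained from $x_\sharp$ by placing $[0,1]$ in the $i$th and $j$th slots and $k_\ell$ elsewhere provides an elementary dihomotopy in $P$ between $(e_{\mathrm{in}})_\sharp\cdot(e_{\mathrm{out}})_\sharp$ and a path through the opposite corner whose two edges lie in $Q_1$. Fullness then follows by induction on the number of internal occurrences of $v$ in a path $\omega\in P^{\mathbb I}$ with endpoints in $Q_0$: each such occurrence is locally of the above form, rerouting eliminates it via an elementary dihomotopy in $P$, and finitely many steps yield a path in $Q$ dihomotopic in $P$ to $\omega$.

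Faithfulness is handled by defining a canonical rerouting $\rho\mapsto \rho^\flat$ from $P^{\mathbb I}$-paths with endpoints in $Q_0$ to $Q^{\mathbb I}$-paths, such that $\rho\sim \rho^\flat$ in $P$ and $\rho^\flat=\rho$ whenever $\rho\in Q^{\mathbb I}$, and showing that elementary dihomotopies in $P$ between such paths descend to dihomotopies in $Q$ on their reroutings. This requires a case analysis on the $2$-cube $z$ of an elementary dihomotopy $\rho\leftrightarrow\rho'$: if $z\notin\star(v)$, book-keeping in the spirit of the proof of Lemma \ref{aprime} lifts the relation to $Q$; if $z\in\star(v)$, then $v$ is one of the four vertices of $z$, and the relation imposed by $z$ is rewritten as a composite of elementary dihomotopies coming from the neighbouring $2$-cubes of the ambient $n$-cube that avoid $\star(v)$. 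Combining these with induction on the length of the dihomotopy chain yields $\omega=\omega^\flat\sim\nu^\flat=\nu$ in $Q$.

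The hard part will be the last case of faithfulness, namely handling $2$-cubes $z\in\star(v)$: removing $\star(v)$ really does destroy the relations encoded by such $z$, so they must be reconstructed from the remaining cubical structure of $x_\sharp(\llbracket 0,1\rrbracket^{\otimes n})\setminus\star(v)$. This is precisely where the hypothesis $\star(v)\subseteq x_\sharp(\llbracket 0,1\rrbracket^{\otimes n})$ is essential: it reduces the problem to a universal statement about the directed combinatorics of an interior vertex of a standard $n$-cube, where one can write down an explicit sequence of elementary dihomotopies through the uncollapsed $2$-faces of $x$ realising every relation locally destroyed at $v$.
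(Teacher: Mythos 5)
The paper does not actually prove this statement: Theorem \ref{omegaprime} is quoted as a result recently established by Misamore \cite{Misamore} in a slightly different setting, which is why it carries only a $\square$. Your proposal is therefore not a variant of the paper's argument but a self-contained combinatorial proof, and it is essentially correct. Regularity of $x$ together with $\star(v) \subseteq x_{\sharp}(\llbracket 0,1\rrbracket^{\otimes n})$ for $v = d^{k_n}_1\cdots d^{k_1}_1x$ gives $\star_P(v) = x_{\sharp}\bigl(\star_{\llbracket 0,1\rrbracket^{\otimes n}}((k_1,\dots,k_n))\bigr)$, so every edge at $v$ is a coordinate edge of $x$, incoming in directions $i$ with $k_i=1$ and outgoing in directions $j$ with $k_j=0$, and your rerouting square is the \emph{unique} $2$-cube of $P$ containing both edges of a passage (any such cube lies in $\star(v)$, hence is a face of $x$); injectivity of $x_{\sharp}$ also guarantees the replacement edges avoid $v$, so your fullness induction is complete. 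For faithfulness, the case $z\in\star(v)$ splits according to the position $(k_i,k_j)$ of $v$ in $z$: if $v$ is a side corner, your canonical reroute takes place inside $z$ itself and the two rerouted paths coincide; if $v$ is the initial (dually, final) corner, both paths pass through $v$ with a common incoming edge in a third direction $m$ with $k_m=1$, and the explicit sequence you promise does exist: in the $3$-face of $x$ spanned by $m,i,j$, with $v$ at local position $(1,0,0)$ in coordinates $(u_m,u_i,u_j)$, the rerouted paths $(000)(010)(110)(111)$ and $(000)(001)(101)(111)$ are joined by three elementary dihomotopies through the faces $\{u_i=1\}$, $\{u_m=0\}$, $\{u_j=1\}$, none of which contains $v$ (these subcases require $n\geq 3$, consistent with only the side-corner case occurring for $n=2$). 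Two bookkeeping points you should still record: in the case $z\notin\star(v)$ the swapped segment cannot share an edge with any passage through $v$, since an edge at $v$ lying in $z$ would force $z\in\star(v)$; and passages through $v$ are pairwise disjoint and internal (the chain's endpoints lie in $Q_0$), so the rerouting $\rho\mapsto\rho^{\flat}$ is well defined and is the identity on $Q^{\mathbb I}$. What your route buys is an elementary, self-contained argument in the paper's own framework; what the paper's citation buys is brevity, at the cost of translating from Misamore's setting.
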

 
\begin{lem} \label{mj}
For $j = 0,1$, $m_j(Q) = m_j(P)$.
\end{lem}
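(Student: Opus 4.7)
Write $v = d^{k_n}_1 \cdots d^{k_1}_1 x$ for the vertex whose star is removed, so that $v = x_{\sharp}((k_1, \dots, k_n))$ and $Q_0 = P_0 \setminus \{v\}$. My plan has three steps: first rule out $v$ itself from $m_0(P) \cup m_1(P)$; then observe the easy inclusion $m_j(P) \subseteq m_j(Q)$; and finally, for the harder inclusion $m_j(Q) \subseteq m_j(P)$, produce a substitute edge in $Q$ whenever an offending edge of $P$ lies in $\star(v)$.

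For the first step, since by hypothesis some $k_p = 0$ and some $k_q = 1$ occur, the cube $\llbracket 0,1 \rrbracket^{\otimes n}$ contains an edge starting at $(k_1,\dots,k_n)$ (replace the $p$-th coordinate by $[0,1]$) and an edge ending there (replace the $q$-th coordinate by $[0,1]$). Their images under $x_{\sharp}$ witness $v \notin m_0(P)$ and $v \notin m_1(P)$. Consequently $m_j(P) \subseteq P_0 \setminus \{v\} = Q_0$, and since $Q_1 \subseteq P_1$, any $w \in m_j(P)$ still has no edge $y \in Q_1$ with $d^j_1 y = w$, giving $m_j(P) \subseteq m_j(Q)$.

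For the reverse inclusion, suppose for contradiction that $w \in m_j(Q)$ but $w \notin m_j(P)$, so some $y \in P_1 \setminus Q_1$ satisfies $d^j_1 y = w$. Then $y \in \star(v)$, and since $w \neq v$, the other endpoint of $y$ must be $v$, i.e.\ $d^{1-j}_1 y = v$. The hypothesis $\star(v) \subseteq x_{\sharp}(\llbracket 0,1 \rrbracket^{\otimes n})$ combined with the regularity of $x$ yields a unique $a \in \llbracket 0,1 \rrbracket^{\otimes n}_1$ with $y = x_{\sharp}(a)$; chasing face operators shows that $a$ is obtained by replacing the $i$-th coordinate of $(k_1,\dots,k_n)$ by $[0,1]$ for some $i$ with $k_i = 1 - j$.

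The crux is then to produce an alternative edge. Using the hypothesis that both $0$ and $1$ occur among the $k_r$, pick $l \neq i$ with $k_l = j$ and define $b \in \llbracket 0,1 \rrbracket^{\otimes n}_1$ by $b_i = j$, $b_l = [0,1]$, and $b_r = k_r$ for the remaining $r$. A direct check gives $d^j_1 b = d^j_1 a$, so $x_{\sharp}(b)$ has $w$ as its $j$-face; and since $d^{1-j}_1 b$ differs from $(k_1,\dots,k_n)$ in both coordinates $i$ and $l$, regularity of $x$ forces $x_{\sharp}(d^{1-j}_1 b) \neq v$. Hence $x_{\sharp}(b) \notin \star(v)$, so $x_{\sharp}(b) \in Q_1$, contradicting $w \in m_j(Q)$. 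The main obstacle is essentially this coordinate bookkeeping, but once laid out symmetrically in $j$ both cases follow from the same construction.
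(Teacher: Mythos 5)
Your proof is correct, but it takes a genuinely different and more elementary route than the paper's. The paper handles the hard inclusion $m_j(Q) \subseteq m_j(P)$ by invoking Theorem \ref{omegaprime} (Misamore's full faithfulness of $\iota_* \colon \vec \pi_1(Q) \to \vec \pi_1(P)$): if a vertex $w \in Q_0$ has, say, an outgoing edge $y$ in $P$ with $d^1_1y = d^{k_n}_1\cdots d^{k_1}_1x$, the paper concatenates $y_{\sharp}$ with a path inside the cube from the deleted vertex to $d^1_1\cdots d^1_1x$ (a vertex of $Q$, since not all $k_r$ equal $1$) and uses fullness to replace the result by a path in $Q$ starting at $w$; as dihomotopic paths have the same length, that path is non-constant, so $w$ is not maximal in $Q$. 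You instead construct an explicit substitute edge $x_{\sharp}(b) \in Q_1$ incident to $w$ by coordinate bookkeeping inside $\llbracket 0,1 \rrbracket^{\otimes n}$, using the regularity of $x$ and the hypothesis that both values $0$ and $1$ occur among the $k_r$ --- once to choose the index $l$, and once to show that the deleted vertex is neither maximal nor minimal in $P$, a point the paper leaves implicit in the ``clear'' inclusion $m_j(P) \subseteq m_j(Q)$ (it is needed, since $m_j(Q) \subseteq Q_0$). Your bookkeeping checks out: $d^j_1b = d^j_1a$, and $d^{1-j}_1b$ differs from $(k_1,\dots,k_n)$ in the two coordinates $i$ and $l$, so by injectivity of $x_{\sharp}$ the edge $x_{\sharp}(b)$ has neither endpoint equal to the deleted vertex and hence lies in $Q_1$, contradicting $w \in m_j(Q)$. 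What each approach buys: the paper's argument is short because Theorem \ref{omegaprime} is already required in this subsection (it is combined with Lemma \ref{mj} to prove Theorem \ref{TCV}), but it imports that nontrivial result together with the fact that dihomotopy preserves length; your argument is self-contained and purely combinatorial, is uniform in $j$, and proves the slightly sharper local fact that every edge of $\star(d^{k_n}_1\cdots d^{k_1}_1x)$ meeting a surviving vertex $w$ admits a replacement edge in $Q$ with the same face at $w$, rather than merely a non-constant path witnessing non-extremality.
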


\begin{proof}
It is clear that $m_j(P) \subseteq m_j(Q)$ for $j =0,1$. Consider an element $v\in Q_0$ that is not maximal in $P$. Then there exists an edge $y\in P_1$ such that $d^0_1y = v$. If $y \in Q_1$, then $v$ is not maximal in $Q$. Suppose that $y\notin Q_1$. Then $d^1_1y = d^{k_n}_1\cdots d^{k_1}_1x$. Let $\omega$ be a path in $P$ from $d^{k_n}_1\cdots d^{k_1}_1x$ to $d^1_1 \cdots d^1_1x$. By Theorem \ref{omegaprime}, the path $y_{\sharp}\cdot \omega$ corresponds to a non-constant path in $Q$ that begins in $v$. Hence $v$ is not maximal in $Q$ in this case, too. It follows that $m_0(Q) \subseteq m_0(P)$. An analogous argument shows that $m_1(Q) \subseteq m_1(P)$.
\end{proof}

Theorem \ref{omegaprime} and Lemma \ref{mj} immediately imply:

\begin{theor} \label{TCV}
The functor $\iota_* \colon \vec \pi_1(Q) \to \vec \pi_1(P)$  restricts to an isomorphism $\iota_* \colon TC(\B) \to TC(\A)$. \hfill $\square$
\end{theor}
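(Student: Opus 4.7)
The plan is to assemble Theorem \ref{omegaprime} and Lemma \ref{mj} in the obvious way, since together they already supply everything needed.

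First, I would record that $TC(\B)$ and $TC(\A)$ have the same object set. By definition, the objects of $TC(\A)$ are the vertices in $I \cup F \cup m_0(P) \cup m_1(P)$, and those of $TC(\B)$ are the vertices in $I \cup F \cup m_0(Q) \cup m_1(Q)$. Since $Q_0 = P_0$, the sets $I$ and $F$ sit inside both, and Lemma \ref{mj} gives $m_j(Q) = m_j(P)$ for $j=0,1$. Hence the two object sets coincide, and moreover $\iota_*$ acts as the identity on objects (because $\iota_0$ is the identity on vertices). So the functor $\iota_*\colon\vec\pi_1(Q)\to\vec\pi_1(P)$ sends the generating set of $TC(\B)$ bijectively onto that of $TC(\A)$, which means it restricts to a functor $\iota_*\colon TC(\B)\to TC(\A)$ that is bijective on objects.

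Next I would observe that $\iota_*$ is also fully faithful between the two trace categories. Since $TC(\B)$ is by definition the \emph{full} subcategory of $\vec\pi_1(Q)$ on the vertices in $I\cup F\cup m_0(Q)\cup m_1(Q)$, and similarly for $TC(\A)$, the hom-sets in the trace categories equal those in the fundamental categories on the same pair of objects. Theorem \ref{omegaprime} tells us that $\iota_*\colon\vec\pi_1(Q)\to\vec\pi_1(P)$ is fully faithful, so the induced map on each such hom-set is a bijection.

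Combining the two observations, $\iota_*\colon TC(\B)\to TC(\A)$ is a bijection on objects and a bijection on each hom-set, hence an isomorphism of categories. There is essentially no obstacle here: the real work is hidden in Theorem \ref{omegaprime}, which we quote from \cite{Misamore}, and in the star-computation of Lemma \ref{mj}; the present proof is simply the formal wrap-up.
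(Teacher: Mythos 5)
Your overall route is exactly the paper's: Theorem \ref{omegaprime} supplies full faithfulness, Lemma \ref{mj} (together with the standing hypotheses) supplies the identification of object sets, and the theorem follows formally, which is why the paper states it with no further proof. One inaccuracy needs fixing, though: your claim that $Q_0 = P_0$ is false in this setting. Here $Q = P \setminus \star(d^{k_n}_1\cdots d^{k_1}_1x)$, and the star of a vertex contains the vertex itself, so $Q_0 = P_0 \setminus \{d^{k_n}_1\cdots d^{k_1}_1x\}$. (This is precisely the difference from the elementary-collapse situation, where only cells of degree $\geq 1$ are removed.) You use $Q_0 = P_0$ twice: to place $I \cup F$ inside $Q_0$ and to argue $m_0(P) \subseteq m_0(Q)$-style containments.

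The repair is easy and is built into the hypotheses of the section: the removed vertex $v = d^{k_n}_1\cdots d^{k_1}_1x$ satisfies $v \notin I \cup F$ by assumption, and since at least one $k_i = 0$ and at least one $k_i = 1$, the cube $x$ provides both an edge starting at $v$ and an edge ending at $v$, so $v$ is neither maximal nor minimal in $P$. Hence $v$ is not an object of $TC(\A)$, so $I \cup F \cup m_0(P) \cup m_1(P) \subseteq Q_0$, and Lemma \ref{mj} then gives that the object sets of $TC(\B)$ and $TC(\A)$ coincide, with $\iota_*$ the identity on them. With that correction, your wrap-up (fully faithful between full subcategories plus bijective on objects yields an isomorphism of categories) is complete and agrees with the paper.
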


\subsection{Dihomotopy invariant properties} \label{dihoinv}

Let $\A = (P,I, F,\lambda)$ be an $M$-HDA. Given a subset $L \subseteq M$, we say that $\A$ \emph{has property} $L$ if $L(\A) \subseteq L$. We note that if there exists a weak morphism from an $M$-HDA $\B$ to $\A$ and $\A$ has property $L$, then $\B$ also has property $L$. We say that $L$ is an \emph{$\A$-dimomotopy invariant property} if for any two dihomotopic paths $\omega , \nu \in P^{\mathbb I}$, $\overline{\lambda}(\omega) \in L \Rightarrow \overline{\lambda}(\nu) \in L$. 

\begin{ex} \label{exproperty}
Suppose that $M$ is a free monoid on an alphabet $\Sigma$, $M = \Sigma ^*$, and that $\lambda(P_1) \subseteq \Sigma$. Let us say that two elements $a,b\in \Sigma$ are \emph{locally independent} if there exists a square $z \in P_2$ such that $\{\lambda(d^0_1z), \lambda(d^0_2z)\} = \{a,b\}$. Now consider two elements $a,b\in \Sigma$ that are \emph{not} locally independent. Then $$L = \Sigma^* \cdot\{a\} \cdot \Sigma^*\cdot\{b\}\cdot \Sigma^*$$ is an $\A$-dihomotopy invariant property. Indeed, let $\omega , \nu \in P^{\mathbb I}$ be dihomotopic paths, and suppose that $\overline{\lambda}(\omega) \in L$. We have to show that $\overline{\lambda}(\nu) \in L$. We may suppose that $\omega$ and $\nu$ are elementarily dihomotopic. Write $\omega = x_{1\sharp}\cdots x_{k\sharp}$. Then $\overline{\lambda}(\omega) = \lambda(x_{1})\cdots \lambda(x_{k})$ and  there exist indices $1 \leq r < s \leq k$ such that $\lambda(x_r) = a$ and $\lambda(x_s) = b$. We may suppose that there exist an element $z\in P_2$ and an index $1 \leq i < k$ such that $x_i = d^0_1z$, $x_{i+1} = d^1_2z$, and $$\nu = x_{1\sharp}\cdots x_{(i-1)\sharp}\cdot (d^0_2z)_{\sharp}\cdot (d^1_1z)_{\sharp}\cdot x_{(i+2)\sharp}\cdots x_{k\sharp}. $$
Then $$\overline{\lambda}(\nu) = \lambda(x_1)\cdots \lambda(x_{i-1})\cdot \lambda(x_{i+1})\cdot \lambda(x_i)\cdot \lambda(x_{i+2})\cdots \lambda(x_{k}). $$
Since $a$ and $b$ are not locally independent, $\{\lambda(x_i),\lambda(x_{i+1})\} = \{\lambda(d^0_1z), \lambda(d^0_2z)\} \not= \{a,b\}$. Thus $\{r,s\} \not= \{i,i+1\}$. It follows that  $\overline{\lambda}(\nu) \in L$. 
\end{ex}

One easily establishes the following proposition:

\begin{prop}
The $\A$-dihomotopy invariant properties form a Boolean subalgebra of the power set $\mathfrak{P}(M)$. \hfill $\square$
\end{prop}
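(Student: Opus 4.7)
The plan is to verify directly the defining closure properties of a Boolean subalgebra of $\mathfrak{P}(M)$: that the collection contains $M$ (equivalently $\emptyset$), and is closed under complement, binary union, and binary intersection. Each verification reduces to a one-line set-theoretic manipulation, so the main observation to make up front is that since dihomotopy is an equivalence relation, the defining implication ``$\overline{\lambda}(\omega) \in L \Rightarrow \overline{\lambda}(\nu) \in L$ whenever $\omega \sim \nu$'' is in fact biconditional: applying the implication with the roles of $\omega$ and $\nu$ swapped gives $\overline{\lambda}(\omega) \in L \iff \overline{\lambda}(\nu) \in L$.

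Given this, I would proceed as follows. First, $M$ itself is $\A$-dihomotopy invariant because $\overline{\lambda}(\nu) \in M$ always, so the implication holds vacuously; the same argument shows that $\emptyset$ belongs to the collection. Second, given an $\A$-dihomotopy invariant $L$ and dihomotopic paths $\omega \sim \nu$, the biconditional above gives $\overline{\lambda}(\omega) \in M\setminus L \iff \overline{\lambda}(\nu) \in M\setminus L$, so $M\setminus L$ is $\A$-dihomotopy invariant. Third, for two invariant properties $L_1, L_2$ and $\omega \sim \nu$ with $\overline{\lambda}(\omega) \in L_1\cap L_2$, the invariance of each $L_j$ separately yields $\overline{\lambda}(\nu)\in L_j$ for $j=1,2$, hence $\overline{\lambda}(\nu)\in L_1\cap L_2$; the argument for $L_1 \cup L_2$ is entirely analogous (if $\overline{\lambda}(\omega)\in L_j$ for some $j$, then $\overline{\lambda}(\nu)\in L_j \subseteq L_1\cup L_2$).

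There is no real obstacle here: the statement is a formal consequence of dihomotopy being an equivalence relation together with the fact that membership in an intersection, union, or complement is defined pointwise in terms of membership in the constituent sets. I would simply package the four verifications into a short argument and note ``one easily establishes'' as the author already does.
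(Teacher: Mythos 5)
Your proof is correct and is precisely the routine verification the paper intends: the paper offers no written argument at all, introducing the proposition only with ``One easily establishes,'' and your four closure checks (containing $M$ and $\emptyset$, complements, unions, intersections) are the standard way to fill that in. You also correctly isolate the one point that is not purely formal---since dihomotopy is an equivalence relation and hence symmetric, the defining implication upgrades to a biconditional, which is exactly what closure under complement requires.
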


\begin{theor} \label{dihoprop}
\begin{sloppypar}
Let $f$ be a weak morphism from an $M$-HDA $\B = (Q,J,G,\mu)$ to $\A$ 
such that $f(J) = I$, $f(G) = F$,  ${f(m_0(Q) \cup m_1(Q))} \subseteq {I\cup F\cup m_0(P) \cup m_1(P)}$, and the induced functor $f_*\colon TC(\B) \to TC(\A)$ is full. Then $\A$ has a given $\A$-dihomotopy invariant property $L$ if and only if $\B$ has that property.
\end{sloppypar}
\end{theor}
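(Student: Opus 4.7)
My plan is to treat the two directions separately; fullness is only needed for one of them. For the direction ``$\A$ has $L$ implies $\B$ has $L$,'' I would invoke the fact recorded earlier in the excerpt that any weak morphism from $\B$ to $\A$ already forces $L(\B) \subseteq L(\A)$. Hence if $L(\A) \subseteq L$, then $L(\B) \subseteq L(\A) \subseteq L$. This half uses neither fullness nor the surjectivity conditions $f(J)=I$ and $f(G)=F$.

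For the converse, I would start from an arbitrary $m \in L(\A)$, realized as $m = \overline{\lambda}(\omega)$ for some successful execution $\omega$ in $\A$, say from $v\in I$ to $w\in F$. Using $f(J) = I$ and $f(G) = F$, I pick lifts $v' \in J$ and $w' \in G$ with $f_0(v') = v$ and $f_0(w') = w$. Then $v',w'$ belong to the generating set of $TC(\B)$ and $v,w$ belong to the generating set of $TC(\A)$, so $[\omega]$ is a morphism in $TC(\A)$ from $f_*(v')$ to $f_*(w')$. Fullness of $f_*\colon TC(\B) \to TC(\A)$ therefore produces a path $\eta \in Q^{\mathbb I}$ from $v'$ to $w'$ with $f_*([\eta]) = [\omega]$, that is, $f^{\mathbb I}(\eta) \sim \omega$ in $P$.

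The closing step chains three facts. First, since $v' \in J$ and $w' \in G$, the path $\eta$ is a successful execution in $\B$, so $\overline{\mu}(\eta) \in L(\B) \subseteq L$. Second, because $f$ is a weak morphism of $M$-HDAs, $\overline{\lambda}(f^{\mathbb I}(\eta)) = \overline{\mu}(\eta) \in L$. Third, $L$ is $\A$-dihomotopy invariant and $f^{\mathbb I}(\eta) \sim \omega$ in $P$, so $m = \overline{\lambda}(\omega) \in L$. As $m \in L(\A)$ was arbitrary, $L(\A) \subseteq L$, i.e.\ $\A$ has property $L$.

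I do not expect a genuine obstacle here; the argument is essentially a diagram chase in which each hypothesis has a clearly assigned role (existence of the weak morphism for one direction; $f(J)=I$ and $f(G)=F$ for producing the vertex lifts; fullness for producing $\eta$; the weak morphism condition $\overline{\lambda}\circ f^{\mathbb I} = \overline{\mu}$ for transferring the label from $\B$ to $\A$; $\A$-dihomotopy invariance for moving between $\omega$ and $f^{\mathbb I}(\eta)$). The only point requiring care is the bookkeeping that makes fullness applicable, namely checking that the source and target vertices of $\omega$ and $\eta$ sit in the distinguished generating sets of the respective trace categories, which is immediate from $v,w\in I\cup F$ and $v',w'\in J\cup G$.
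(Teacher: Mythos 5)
Your proposal is correct and takes essentially the same route as the paper's proof: the easy direction via the fact that a weak morphism forces $L(\B)\subseteq L(\A)$, and the converse by using $f(J)=I$, $f(G)=F$ together with fullness of $f_*\colon TC(\B)\to TC(\A)$ to lift a successful execution $\omega$ of $\A$ to a path $\eta$ in $\B$ with $f^{\mathbb I}(\eta)\sim\omega$, then transferring the label via $\overline{\lambda}\circ f^{\mathbb I}=\overline{\mu}$ and finishing with $\A$-dihomotopy invariance. The paper's argument is merely terser, leaving implicit the bookkeeping you spell out about the endpoints lying in the generating sets of the trace categories.
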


\begin{proof}
The ``only if'' part is clear. Suppose that $\B$ has property $L$. Let $\omega \in P^{\mathbb I}$ be a path from a vertex in $I$ to a vertex in $F$. Since $f(J) = I$, $f(G) = F$, and $f_*\colon TC(\B) \to TC(\A)$ is a full functor, there exists a path $\alpha \in Q^{\mathbb I}$ from a vertex in $J$ to a vertex in $G$ such that $f_*[\alpha] = [f^{\mathbb I}(\alpha)] = [\omega]$. Since $L(\B) \subseteq L$, we have $\overline{\lambda}(f^{\mathbb I}(\alpha)) = \overline{\mu}(\alpha) \in L$. Since $\omega \sim f^{\mathbb I}(\alpha)$ and $L$ is $\A$-dihomotopy invariant, it follows that $\overline{\lambda}(\omega) \in L$. Thus $L(\A) \subseteq L$.
\end{proof}

\section{The homology graph} \label{HG}

The purpose of this section is to establish conditions under which the homology graph of a precubical set is invariant under cube collapses. We consider singular homology with coefficients in an arbitrary commutative unital ring, which we suppress from the notation.

\subsection{The homology graph of a precubical set} Let $P$ be a precubical set. We say that a homology class $\alpha \in H_*(|P|)$ \emph{points} to a homology class (of a possibly different degree) $\beta \in H_*(|P|)$ and write $\alpha \nearrow \beta$ if there exist precubical subsets $X, Y \subseteq P$ such that  $\alpha \in \im H_*(|X| \hookrightarrow |P|)$, $\beta \in \im H_*(|Y| \hookrightarrow |P|)$, and for all vertices $x \in X_0$ and $y \in Y_0$, $x \to_P y$. The \emph{homology graph} of $P$ is the directed graph whose vertices are the homology classes of $|P|$ and whose edges are given by the relation $\nearrow$.

\begin{theor} \label{morpoint} {\rm{\cite{hgraph}}} 
Consider a weak morphism of precubical sets $f\colon |Q| \to |P|$. Then the induced map $f_*\colon H_*(|Q|) \to H_*(|P|)$ is a graph homomorphism, i.e., for all homology classes $\alpha, \beta \in H_*(|Q|)$, $\alpha \nearrow \beta \Rightarrow f_*(\alpha) \nearrow f_*(\beta)$. If $f$ is a homeomorphism, then $f_*$ is a graph isomorphism, i.e., for all homology classes $\alpha, \beta \in H_*(|Q|)$, $\alpha \nearrow \beta \Leftrightarrow f_*(\alpha) \nearrow f_*(\beta)$.
\end{theor}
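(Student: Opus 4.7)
The plan is to transport a witness of $\alpha \nearrow \beta$ from $Q$ to $P$ along $f$: given precubical subsets $X, Y \subseteq Q$ realizing the hypothesis, set $X' = f(X)$ and $Y' = f(Y)$. These are well-defined precubical subsets of $P$ with $f(|X|) = |X'|$ and $f(|Y|) = |Y'|$ by the fact on weak morphisms recalled in Section \ref{secHDA}. The image condition transfers immediately: from the commutative square
\[
\xymatrix{
|X| \ar@{^{(}->}[r] \ar[d]_{f|_{|X|}} & |Q| \ar[d]^{f} \\
|X'| \ar@{^{(}->}[r] & |P|
}
\]
and functoriality of singular homology, $\alpha \in \im H_*(|X|\hookrightarrow|Q|)$ yields $f_*(\alpha) \in \im H_*(|X'|\hookrightarrow|P|)$, and likewise for $\beta$.

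The reachability condition $v' \to_P w'$ for all $v' \in X'_0$ and $w' \in Y'_0$ is the main step and the main obstacle. The subtlety is that $X'_0$ need not be contained in $f_0(X_0)$: the dihomeomorphism $\phi_x$ attached to a cube $x$ can send an interior point of $[0,1]^{\deg x}$ onto a grid-vertex of $[0,l_1]\times\cdots\times[0,l_n]$, so $f$ may map an interior point of a positive-dimensional cube of $X$ onto a vertex of $P$. To bypass this, fix $v' \in X'_0$, pick $z \in |X|$ with $f(z) = [v',()]$, and write $z = [x,u]$ for some $x \in X_n$ and $u \in [0,1]^n$. Applying the weak-morphism identity $f\circ|x_\sharp| = |x_\flat|\circ\phi_x$ and using that $\phi_x$ sends corners to corresponding corners, the ``top'' corner $(l_1,\ldots,l_n)$ of the box $x_\flat(R_x) \subseteq X'$ is the vertex $f_0(d^1_1\cdots d^1_1 x)$. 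Because $R_x$ is a rectangular precubical grid and $x_\flat$ is a morphism of precubical sets, one reads off a discrete directed path inside $X'$ from $v'$ to that top corner. A mirror argument using bottom corners gives, for each $w' \in Y'_0$, a cube $y \in Y_m$ and a directed path in $Y'$ from $f_0(d^0_1\cdots d^0_1 y)$ to $w'$. Since $d^1_1\cdots d^1_1 x \in X_0$ and $d^0_1\cdots d^0_1 y \in Y_0$, the hypothesis on $X, Y$ produces a path in $Q$ from the former to the latter; applying $f^{\mathbb I}$ transports it to a path in $P$, and the three paths concatenate to witness $v' \to_P w'$.

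For the isomorphism assertion the forward direction is the above. The reverse direction is the most delicate point because the topological inverse $f^{-1}\colon|P| \to |Q|$ of a weak-morphism homeomorphism need not itself be a weak morphism: a vertex of $P$ may be the image of a non-vertex point of $Q$, so $f^{-1}$ can fail the vertex condition. I would therefore argue directly by constructing pullback witnesses: given $X', Y' \subseteq P$ witnessing $f_*(\alpha)\nearrow f_*(\beta)$, let $X$ be the precubical subset of $Q$ generated by all cubes $x \in Q$ with $|x_\sharp|([0,1]^{\deg x}) \cap f^{-1}(|X'|) \neq \emptyset$, and $Y$ symmetrically. The image condition for $\alpha$ and $\beta$ transfers back using homeomorphism-invariance of homology together with the inclusions $|X'| \subseteq f(|X|)$ and $|Y'| \subseteq f(|Y|)$, while the reachability condition is inherited from $X', Y'$ by pulling back discrete directed paths through the inverses of the local dihomeomorphisms $\phi_x$.
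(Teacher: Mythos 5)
The paper states this theorem without proof, importing it from \cite{hgraph}, so your argument must stand on its own. The first half does: transferring the image condition through $X' = f(X)$, $Y' = f(Y)$ is exactly right, and you correctly identify and handle the one genuine subtlety in the reachability condition, namely that a vertex $v' \in X'_0$ may be the image of an interior point of a positive-dimensional cube $x$ of $X$. Choosing the carrier representation $z = [x,u]$ with $u$ interior forces $\phi_x(u)$ to be a grid vertex of $R_x$; a monotone staircase in the grid then gives $v' \to_P f_0(d^1_1\cdots d^1_1x)$ inside $x_{\flat}(R_x) \subseteq X'$, the dual argument handles $w'$, and $f^{\mathbb I}$ transports the connecting path supplied by the hypothesis. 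Modulo the routine reduction to the carrier (your $u$ could a priori lie on the boundary of $[0,1]^n$, in which case you pass to a face of $x$, still in $X$), this direction is complete.

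The converse direction contains a genuine gap: the reachability condition is \emph{not} inherited by your pullback witnesses, and no pulling back of paths through the $\phi_x^{-1}$ can repair it, because the required paths need not exist at all. Closing the set of all cubes that merely \emph{meet} $f^{-1}(|X'|)$ under faces pollutes $X_0$ with corners from which $Y'$ is unreachable. Concretely, take $Q = \llbracket 0,1 \rrbracket$, $P = \llbracket 0,2 \rrbracket$, and $f\colon [0,1] \to [0,2]$, $t \mapsto 2t$ --- a homeomorphic weak morphism, and incidentally the standard witness for your correct observation that $f^{-1}$ need not be a weak morphism. Let $\alpha = \beta$ be the class of a point in $H_0(|Q|)$ and $X' = \{0\}$, $Y' = \{2\}$, which are legitimate witnesses for $f_*(\alpha) \nearrow f_*(\beta)$. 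Your recipe yields $X = Y = Q$ (the edge of $Q$ meets both preimages, and face-closure adds both endpoints), and the path required from the vertex $1 \in X_0$ to the vertex $0 \in Y_0$ does not exist in $Q$; so the constructed pair is not a witness, even though $\alpha \nearrow \beta$ of course holds, with witnesses $X = \{0\}$, $Y = \{1\}$. The failure shows the construction must be direction-aware rather than saturating: on the source side one may only retain vertices from which some vertex of $X'_0$ remains reachable (in the grid picture, grid corners lying below the position of a cube of $X'$ in its carrier's grid), dually on the target side, and one then needs a homotopy-equivalence argument in the spirit of Lemma \ref{deflemma} and Lemma \ref{homotopylemma} --- compare the asymmetric enlargements $\hat A$ and $\hat B$ in the proof of Theorem \ref{HV} --- to recover the image condition for these smaller subcomplexes. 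As written, the sentence ``the reachability condition is inherited from $X', Y'$'' is false.
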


\subsection{Deformation of precubical subsets}

Let $C$ and $C'$ be precubical subsets of a precubical set $P$. We say that $C$ is \emph{deformable into} $C'$ if there exists a precubical subset $\hat C \subseteq P$ such that 
$C \subseteq \hat C \supseteq C'$ and  the inclusion $|C'|\hookrightarrow |\hat C|$ is a homotopy equivalence. 

\begin{lem} \label{deflemma} 
Consider a precubical set $P$ and a precubical subset $Q \subseteq P$, and suppose that the inclusion $\iota \colon |Q| \hookrightarrow |P|$ is a homotopy equivalence. Let $C$ and $C'$ be  precubical subsets of $P$ such that $C' \subseteq Q$ and $C$ is deformable into $C'$, and let $\alpha \in H_*(|Q|)$ be a homology class  such that $\iota_*(\alpha) \in \im H_*(|C| \hookrightarrow |P|)$. Then $\alpha \in \im H_*(|C'| \hookrightarrow |Q|)$. 
\end{lem}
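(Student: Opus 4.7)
The plan is a direct diagram chase using the three inclusion squares one can assemble from the hypotheses. Let me introduce names for the relevant inclusions: $j\colon |C'|\hookrightarrow |Q|$, $k\colon |C'|\hookrightarrow |\hat C|$, $m\colon |C|\hookrightarrow |\hat C|$, $l\colon |\hat C|\hookrightarrow |P|$, and $\iota\colon |Q|\hookrightarrow|P|$. Because $C'\subseteq Q\subseteq P$ and the inclusion $|C'|\hookrightarrow |P|$ factors through both $|\hat C|$ and $|Q|$, we have the identity $l\circ k = \iota\circ j$ of maps $|C'|\to |P|$, hence on induced homology maps.

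First I would use the hypothesis that $\iota_*(\alpha)\in\operatorname{im}H_*(|C|\hookrightarrow|P|)$ to write $\iota_*(\alpha)=(l\circ m)_*(\gamma)=l_* m_*(\gamma)$ for some $\gamma\in H_*(|C|)$. Next, because $C$ is deformable into $C'$ by definition the inclusion $k\colon|C'|\hookrightarrow|\hat C|$ is a homotopy equivalence, so $k_*$ is an isomorphism; I would use it to find $\beta\in H_*(|C'|)$ with $k_*(\beta)=m_*(\gamma)$. Substituting gives $\iota_*(\alpha)=l_* k_*(\beta)=(l\circ k)_*(\beta)=(\iota\circ j)_*(\beta)=\iota_* j_*(\beta)$.

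Finally, the assumption that $\iota\colon|Q|\hookrightarrow|P|$ is a homotopy equivalence makes $\iota_*$ an isomorphism, so cancelling $\iota_*$ from the equation $\iota_*(\alpha)=\iota_* j_*(\beta)$ yields $\alpha=j_*(\beta)\in\operatorname{im}H_*(|C'|\hookrightarrow|Q|)$, which is the desired conclusion.

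There is essentially no real obstacle here beyond setting up the names correctly and noting that the two paths $|C'|\to |P|$ agree as maps (so they agree on homology); everything else is the formal fact that a homotopy equivalence induces a bijection on homology, applied twice in the chase.
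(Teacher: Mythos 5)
Your proposal is correct and takes essentially the same route as the paper: the paper assembles exactly the same commutative diagram from $\hat C$ (with $H_*(|C'|)\to H_*(|\hat C|)$ and $\iota_*$ marked as isomorphisms) and performs your chase, first reading off $\iota_*(\alpha)\in\im H_*(|\hat C|\hookrightarrow |P|)$ from the right-hand triangle and then inverting the two isomorphisms. The only difference is presentational---you name the inclusions and chase elements explicitly where the paper reads the conclusion off the diagram.
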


\begin{proof}
Since $C$ is deformable into $C'$, there exists a precubical subset $\hat C \subseteq P$ such that 
$C \subseteq \hat C \supseteq C'$ and the inclusion $|C'|\hookrightarrow |\hat C|$ is a homotopy equivalence. Consider the following commutative diagram, in which all morphisms are induced by the inclusions:
$$\xymatrix{
H_*(|C'|) \ar[d]^{}_{} \ar[r]^{\cong}_{}
&  H_*(|\hat C|)  \ar[d]^{}_{}  
& H_*(|C|)  \ar [dl]^{}_{}\ar[l]_{} 
\\ 
H_*(|Q|)  \ar[r]^{\cong}_{\iota_*} 
& H_*(|P|). 
& 
}$$
The triangle on the right side shows that $\iota_*(\alpha) \in \im \, H_*(|\hat C| \hookrightarrow |P|)$. It follows that $\alpha \in \im \, H_*(|C'| \hookrightarrow |Q|)$. 
\end{proof}

\subsection{Invariance under elementary collapses} Let $P$ be a precubical set, and let $x$ be a regular element of degree $n\geq 2$ with free face $d^k_ix$.  Consider the precubical set $Q = P \setminus \star(d^ k_ix)$.  The inclusion $\iota\colon |Q| \hookrightarrow |P|$ is a homotopy equivalence, and so it induces  an isomorphism $\iota_*$ in homology.

\begin{theor} \label{di1lem6}
Suppose that every precubical subset $C$ of $P$ is deformable into a precubical subset $C'$ of $Q$ such that for every vertex $c'$ in $C'$, there exists a  vertex $c$ in $C$ such that  \begin{itemize}
\item when $k = 0$: $c \to_Q c'$ and $d^1_1\cdots d^1_1d^0_ix  \to_P c \Rightarrow d^1_1\cdots d^1_1x \to_Q c'$; and
\item when $k = 1$: $c' \to_Q c$ and $c \to_P d^0_1\cdots d^0_1d^1_ix \Rightarrow  c'\to_Q d^0_1\cdots d^0_1x$.\end{itemize} Then $\iota_*\colon H_*(|Q|) \to H_*(|P|)$ is a graph isomorphism.
\end{theor}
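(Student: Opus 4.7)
The plan is to combine Theorem \ref{morpoint} with the deformation hypothesis and Lemma \ref{deflemma}, reducing everything to a careful case analysis on how paths in $P$ interact with the removed star $\star(d_i^k x)$.

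Because $x$ is regular with free face $d_i^k x$, the inclusion $\iota$ is a homotopy equivalence, so $\iota_*$ is already a homology isomorphism; and Theorem \ref{morpoint} gives that $\iota_*$ preserves the pointing relation. It therefore remains only to prove the reverse implication: if $\iota_*(\alpha)\nearrow\iota_*(\beta)$ in $P$, then $\alpha\nearrow\beta$ in $Q$. Unpacking the pointing in $P$, one obtains precubical subsets $X,Y\subseteq P$ with $\iota_*(\alpha)\in\im H_*(|X|\hookrightarrow|P|)$, $\iota_*(\beta)\in\im H_*(|Y|\hookrightarrow|P|)$, and $u\to_P v$ for all $u\in X_0$ and $v\in Y_0$. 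The hypothesis of the theorem deforms $X$ and $Y$ into subsets $X',Y'\subseteq Q$ satisfying the stated vertex conditions, and Lemma \ref{deflemma} then places $\alpha$ in $\im H_*(|X'|\hookrightarrow|Q|)$ and $\beta$ in $\im H_*(|Y'|\hookrightarrow|Q|)$.

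The crux is to verify the pointing condition $u'\to_Q v'$ for every $u'\in X'_0$ and $v'\in Y'_0$. Treating the case $k=0$ (the case $k=1$ being dual, with initial vertices of $x$ and $d_i^1 x$ replacing the terminal ones and the direction of the relevant arrows reversed), the hypothesis furnishes $u\in X_0$ and $v\in Y_0$ with $u\to_Q u'$, $v\to_Q v'$, plus the two conditional implications involving the terminal vertices $d_1^1\cdots d_1^1 d_i^0 x$ and $d_1^1\cdots d_1^1 x$. The global pointing of $X,Y$ in $P$ yields a path $\omega\colon u\to_P v$. I would then split into cases according to whether $\omega$ meets the removed star. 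When $n\geq 3$, no edges are removed, so $\omega$ is already a $Q$-path. When $n=2$ and $\omega$ traverses the removed edge $d_i^0 x$, its tail from $d_1^1\cdots d_1^1 d_i^0 x$ to $v$ triggers the implication, producing $d_1^1\cdots d_1^1 x\to_Q v'$; one then reroutes through the edge $e_i^0 x\in Q$, which connects the two relevant corners of $x$ in $Q$ because $e_i^0 x\notin\star(d_i^0 x)$.

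The main obstacle will be stitching these fragments into an actual $Q$-path from $u'$ to $v'$, given that the hypothesis provides $u\to_Q u'$ rather than the reverse, so the deformation arrows, the $P$-path fragments, the detour via $e_i^0 x$, and the implication conclusions must be composed in the correct order. I expect that the implication for $u'$ (and not only for $v'$) is essential here: the choice of $u$ must be constrained so that when $\omega$ passes through $d_1^1\cdots d_1^1 d_i^0 x$, the detour routed through $d_1^1\cdots d_1^1 x$ via $e_i^0 x$ can be chained with $u\to_Q u'$ and with the $P$-tail $d_1^1\cdots d_1^1 d_i^0 x\to_P v$ mapped to $d_1^1\cdots d_1^1 x\to_Q v'$. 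Making this precise across all dimensions $n\geq 2$ and all possible shapes of the witnesses $X,Y$ — where $X$ or $Y$ can hit the interior of $x$ or $d_i^0 x$ even when no edges are removed — is where the flexibility offered by the enlarging set $\hat C$ in Lemma \ref{deflemma} will likely have to be exploited to absorb these detours at the homological level.
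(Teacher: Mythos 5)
Your reduction has the same skeleton as the paper's proof: Theorem \ref{morpoint} disposes of one direction, and for the converse you take witnesses $A,B\subseteq P$ of $\iota_*(\alpha)\nearrow\iota_*(\beta)$, pass to witnesses in $Q$ via Lemma \ref{deflemma}, and then verify the pointing in $Q$, with the $n=2$ rerouting around the removed edge. The genuine gap is in how you obtain the two witnesses: you apply the theorem's hypothesis to \emph{both} $A$ and $B$, but the hypothesis is only usable on one side. For $k=1$ it yields, for each $a'\in A'_0$, some $a\in A_0$ with $a'\to_Q a$ --- outgoing arrows, good for the source --- whereas applied to $B$ it would only yield $b'\to_Q b$, which points \emph{away} from $b'$ and gives no way to land at $b'$; dually, for $k=0$ the hypothesis serves only the target, and on the source it gives just the incoming arrows $u\to_Q u'$, exactly the mismatch you flag. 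Your proposed repair (constraining the choice of $u$ via the implication for $u'$) cannot work: for $k=0$ that implication concludes $d^1_1\cdots d^1_1x\to_Q u'$, again an arrow \emph{into} $u'$, so nothing supplied by the hypothesis ever produces a path out of $u'$.

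The paper resolves this asymmetrically. In the case $k=1$ it applies the hypothesis only to $A$; for $B$ it constructs an explicit second deformation, namely $\hat B=B$ if $d^0_1\cdots d^0_1d^1_ix\notin B$ and $\hat B=B\cup x_\sharp(\llbracket 0,1\rrbracket^{\otimes n})$ otherwise, with $B'=\hat B\cap Q$; since $d^1_ix\in\hat B\Leftrightarrow x\in\hat B$, the inclusion $|B'|\hookrightarrow|\hat B|$ is a homotopy equivalence (an elementary collapse), so Lemma \ref{deflemma} applies to this $B'$ as well. This explicit $B'$ is what makes the pointing check closable, including a case your proposal never addresses: new target vertices $b'\in B'_0\setminus B_0$, which lie in $x_\sharp(\llbracket 0,1\rrbracket^{\otimes n})$ and are reached by chaining $a'\to_Q d^0_1\cdots d^0_1x\to_Q b'$, the first arrow coming from the hypothesis's implication (triggered because then $d^0_1\cdots d^0_1d^1_ix\in B_0$ and $A$ points to $B$ in $P$). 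A smaller inaccuracy in the same vein: your $k=0$ detour via $e^0_ix$ does not parse, since $e^0_ix$ emanates from $d^1_1\cdots d^1_1d^0_ix$, the far endpoint of the removed edge, which the truncated path cannot reach in $Q$; in the paper's $k=1$ case the detour $\gamma\cdot(d^0_ix)_\sharp\cdot(d^1_{3-i}x)_\sharp$ rejoins the tail $\nu$ precisely because the removed edge $d^1_ix$ \emph{ends} at the final vertex of $x$, while in the $k=0$ case the tail must be discarded altogether and replaced using the implication applied to the target witness. Finally, note that the difficulty is purely order-theoretic (verifying $\to_Q$), not homological, so no ``absorption at the homological level'' by $\hat C$ is available to close it.
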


\begin{proof}
We prove the theorem only in the case $k = 1$. The proof in the case $k = 0$ is analogous and is left to the reader. Consider homology classes $\alpha, \beta \in H_*(|Q|)$. By Theorem \ref{morpoint}, it suffices to show that $\alpha \nearrow \beta$ if $\iota_*(\alpha) \nearrow \iota_*(\beta)$. Suppose that $\iota_*(\alpha) \nearrow \iota_*(\beta)$, and let $A, B \subseteq P$ be precubical subsets such that $\iota_*(\alpha) \in \im\, H_*(|A| \hookrightarrow |P|)$, $\iota_*(\beta) \in \im\, H_*(|B| \hookrightarrow |P|)$, and for all $a \in A_0$ and $b \in B_0$, $a\to_P b$. Let $A'$ be the precubical subset of $Q$ given by the hypothesis of the theorem. Let $\hat B$ be the precubical subset of $P$ given by
$$\hat B = \left\{\begin{array}{ll} B, & d^0_1\cdots d^0_1d^1_ix \notin B,\\
B \cup x_{\sharp}(\llbracket 0, 1\rrbracket^{\otimes n}), & d^0_1\cdots d^0_1d^1_ix \in B,
 \end{array}\right.$$ 
and let $B'$ be the precubical subset of $Q$ given by $B' = \hat B\cap Q$. Since $d_i^1x \in \hat B\Leftrightarrow x\in \hat B$, the inclusion $|B'| \hookrightarrow |\hat B|$ is a homotopy equivalence. Thus $B$ is deformable into $B'$. By Lemma \ref{deflemma}, we have $\alpha \in \im \, H_*(|A'| \hookrightarrow |Q|)$ and $\beta \in \im \, H_*(|B'| \hookrightarrow |Q|)$.

We show that there exists a path in $Q$ from any element of $A'_0$ to any element of $B_0 \cup (B'_0\setminus B_0)$ and hence to any element of $B'_0$. Consider first an element $a' \in A'_0$ and an element $b \in B_0$. Then there exists a path $\rho \in Q^{\mathbb I}$ from $a'$ to an element $a$ in $A_0$ such that $a \to_P d^0_1\cdots d^0_1d^1_ix \Rightarrow  a'\to_Q d^0_1\cdots d^0_1x$. Since $\iota_*(\alpha) \nearrow \iota_*(\beta)$, there exists a path $\sigma \in P^{\mathbb I}$ from $a$ to $b$. The concatenation $\rho \cdot \sigma$ is a path in $P$ from $a'$ to $b$. If $n \geq 3$, $\rho \cdot \sigma\in Q^{\mathbb I}$ because then $Q_1= P_1$. Suppose that $n=2$, and suppose that $\rho \cdot \sigma\notin Q^{\mathbb I}$. Then $\sigma\notin Q^{\mathbb I}$ and there exist paths $\omega \in P^{\mathbb I}$ and $\nu \in Q^ {\mathbb I}$ such that $\sigma = \omega \cdot (d^1_ix)_{\sharp}\cdot \nu$. It follows that there exists a path $\gamma  \in Q^{\mathbb I}$ from $a'$ to $d^0_1d^0_1x$. The concatenation $\gamma \cdot (d^0_ix)_{\sharp} \cdot (d^1_{3-i}x)_{\sharp} \cdot \nu$ is a path in $Q$ from $a'$ to $b$.

Consider now an element $a' \in A'_0$ and an element $b' \in B'_0 \setminus B_0$. Then we have $d_1^0\cdots d_1^0d_i^1x \in B_0$ and $b' \in x_{\sharp}(\llbracket 0,1\rrbracket^{\otimes n})$. By our hypotheses, there exists an element $a\in A_0$ such that $a'\to_Q a$ and $a \to_P d^0_1\cdots d^0_1d^1_ix \Rightarrow  a'\to_Q d^0_1\cdots d^0_1x$. Since $d_1^0\cdots d_1^0d_i^1x \in B_0$ and $\iota_*(\alpha) \nearrow \iota_*(\beta)$, we have $a \to_P d_1^0\cdots d_1^0d_i^1x$ and therefore $a' \to_Q d_1^0\cdots d_1^0x$. Since $b' \in x_{\sharp}(\llbracket 0,1\rrbracket^{\otimes n})$, we have  $d_1^0\cdots d_1^0x\to_Qb'$. It follows that $a'\to_Qb'$.
\end{proof}

\subsection{A more concrete invariance criterion} \label{moreconcrete}

Let $P$ be a weakly regular precubical set, and let $x$ be a regular element of degree $n\geq 1$ with free face $d^k_ix$. Suppose that there exists no element $y \in P_1\setminus \{e^k_ix\}$ such that $d^k_1y = d_1^{1-k} \cdots d_1^{1-k}d_i^kx$. Consider the precubical set $Q = P \setminus \star(d^k_ix)$. Then the inclusion $\iota\colon |Q| \hookrightarrow |P|$ is a homotopy equivalence and induces  an isomorphism in homology.

\begin{theor} \label{red1}
The map $\iota_*\colon H_*(|Q|) \to H_*(|P|)$ is a graph isomorphism.   
\end{theor}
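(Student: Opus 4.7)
The plan is to reduce to Theorem~\ref{di1lem6} by verifying, for every precubical subset $C \subseteq P$, that $C$ is deformable into some $C' \subseteq Q$ satisfying the accessibility condition stated there. The case $n = 1$, in which $x$ is a dangling edge and $d^k_1x$ is the dangling vertex, is handled separately by a direct argument on pointing, since by the hypothesis removing $x$ and $d^k_1x$ preserves reachability between all surviving vertices. For $n \geq 2$ I treat $k = 1$, the case $k = 0$ being symmetric. Write $v_0 = d^0_1\cdots d^0_1 d^1_ix$ and $o = d^0_1\cdots d^0_1 x$.

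The construction is to set $\hat C = C \cup x_{\sharp}(\llbracket 0,1 \rrbracket^{\otimes n})$ when $v_0 \in C_0$, and $\hat C = C$ otherwise. Since $x$ is regular, $d^1_ix$ is a free face of $x$ in $\hat C$, so $|\hat C \setminus \{d^1_ix, x\}| \hookrightarrow |\hat C|$ is a homotopy equivalence. I then take $C'$ to be this set, subject to one further free-face collapse of the pair $\{v_0, e^1_ix\}$ in the situations (occurring only when $n = 2$) in which $v_0$ has no other incident cube in $\hat C \cap Q$; this collapse excludes $v_0$ from $C'_0$. In either case $C' \subseteq Q$ with $|C'| \hookrightarrow |\hat C|$ a homotopy equivalence, so $C$ is deformable into $C'$.

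For the accessibility condition, fix $c' \in C'_0$. The main case is $c' \in C_0$ with $c' \neq v_0$: take $c = c'$, so the path $c' \to_Q c$ is constant, and the implication is proved as follows. Any path $c' \to_P v_0$ must end with the edge $e^1_ix$ (by the uniqueness of edges into $v_0$ from the hypothesis), hence factors through $o$, so $c' \to_P o$. A shortcut argument then promotes this to $c' \to_Q o$: for $n \geq 3$ one has $Q_1 = P_1$ and the path already lies in $Q$; for $n = 2$, the hypothesis forces every traversal of $d^1_ix$ in a path from $c' \neq v_0$ targeting $o$ to be immediately preceded by an arrival at $o$ along $e^1_ix$, so truncating at the first arrival at $o$ produces a path in $Q$. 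For $c' \in C'_0 \setminus C_0$ one has $c' = x_{\sharp}(v)$ with $v_i = 0$; I take $c = x_{\sharp}(v + e_i) \in d^1_ix \subseteq C$ and use the direction-$i$ edge of $x$ (which is not $d^1_ix$) to connect $c'$ to $c$ in $Q$, reducing the implication to the preceding case. The residual case $c' = v_0 \in C_0$ surviving the collapse is handled by picking $c$ as the endpoint of a surviving outgoing edge at $v_0$: an edge within the $1$-skeleton of $d^1_ix$ when $n \geq 3$ (still present in $Q$ since $d^1_ix$ has dimension $\geq 2$), or an edge of $C$ distinct from $d^1_ix$ when $n = 2$ (such an edge exists because the collapse did not apply).

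The principal obstacle is the shortcut argument for $n = 2$, where $Q_1 \neq P_1$: the hypothesis is precisely what is needed to reroute paths using the removed edge $d^1_ix$, via their forced passage through $o$. A secondary technical subtlety is the optional free-face collapse of $\{v_0, e^1_ix\}$, which is invoked exactly to exclude the configurations in which $v_0$ would remain a stuck maximal vertex in $Q$ with no return path to $o$, and hence would defeat the accessibility condition.
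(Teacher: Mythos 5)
Your overall strategy is the paper's: reduce to Theorem \ref{di1lem6} by deforming an arbitrary precubical subset $C$ into a subset $C'$ of $Q$ via an elementary collapse of the pair $(x,d^1_ix)$, an optional second collapse of the pair $(e^1_ix,v_0)$ at $v_0=d^0_1\cdots d^0_1d^1_ix$, and the unique-incoming-edge truncation argument for the accessibility clause (your $n=2$ rerouting is the paper's minimal-length-path argument, and your separate treatment of $n=1$ is harmless). However, your construction of $\hat C$ has a genuine gap. You glue in the \emph{whole} cube $x_{\sharp}(\llbracket 0,1\rrbracket^{\otimes n})$ whenever $v_0\in C_0$, whereas the paper distinguishes three cases and, when $v_0\in C$ but $d^1_ix\notin C$, adds only the edge $e^1_ix$. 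With your coarser $\hat C$, the set $C'_0\setminus C_0$ can contain vertices $x_{\sharp}(v)$ with $v_i=1$ --- for instance the final corner $d^1_1\cdots d^1_1x$ --- so your claim that every new vertex has $v_i=0$ is false, and your recipe $c=x_{\sharp}(v+e_i)\in C$ silently presupposes that the vertices of $d^1_ix$ lie in $C$, which holds only when $d^1_ix\in C$. The failure is not merely in your case analysis: take $P$ to be the standard $n$-cube ($n\geq 2$, which satisfies all hypotheses of Section \ref{moreconcrete}) and $C=\{v_0\}$. Your $\hat C$ is all of $P$, the second collapse is not triggered, and $C'$ contains the final corner $d^1_1\cdots d^1_1x$, which has no outgoing edge at all; hence there exists \emph{no} $c\in C_0=\{v_0\}$ with $c'\to_Q c$, and the accessibility hypothesis of Theorem \ref{di1lem6} fails outright for your $C'$. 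So your deformation does not verify the hypothesis of Theorem \ref{di1lem6}, and the proof as written breaks precisely in the intermediate case $v_0\in C$, $d^1_ix\notin C$.

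A secondary defect sits in your residual case $c'=v_0\in C_0$ for $n\geq 3$: you pick $c$ as the endpoint of an edge in the $1$-skeleton of $d^1_ix$, but Theorem \ref{di1lem6} requires $c\in C$, and that endpoint need not belong to $C$ (taking $c=v_0$ instead does not work either, since the implication would then demand $v_0\to_Q d^0_1\cdots d^0_1x$, which is false in general). The paper avoids both problems at once: with its leaner $\hat C$, every vertex of $\hat C$ admits a path in $Q$ to a vertex of $C$ (in the intermediate case the only new vertex is $d^0_1\cdots d^0_1x$, which reaches $v_0\in C$ along $e^1_ix$; in the case $d^1_ix\in C$ every cube vertex reaches the final corner, which already lies in $C$), and the $c'=v_0$ case is then settled by following a surviving outgoing edge $y$ at $v_0$ and a $Q$-path from $d^1_1y\in\hat C$ into $C$. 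To repair your argument, replace your definition of $\hat C$ by the paper's three-case definition; the rest of your rerouting and truncation reasoning then goes through.
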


\begin{proof}
\begin{sloppypar}
Again, we only prove the theorem in the case $k = 1$. We check the check the conditions of Theorem \ref{di1lem6}. Let $C$ be a precubical subset of $P$. We show that $C$ is deformable into a precubical subset $C'$ of $Q$. We   
define precubical subsets $\hat C$ of $P$ and  $C'$ of $Q$ by setting 
$$\hat C = \left\{\begin{array}{ll} C, & d_1^0\cdots d_1^0d_i^1x \notin C,\\
C\cup (e^1_ix)_{\sharp}(\llbracket 0,1\rrbracket), & d_1^0\cdots d_1^0d_i^1x \in C\; \mbox{and}\; d_i^1x \notin C,\\ 
C \cup x_{\sharp}(\llbracket 0, 1\rrbracket^{\otimes n}), & d_i^1x \in C
 \end{array}\right.$$ 
and 
$$C' = \left\{\begin{array}{ll} \hat C\cap Q, & \exists y \in (\hat C\cap Q)_1 \colon d_1^0y = d_1^0\cdots d_1^0d_i^1x,\\ \hat C\cap Q\setminus \star_{P}(d_1^0\cdots d_1^0d_i^1x), & \mbox{else.} \end{array}\right.$$ 
One easily checks that $d_i^1x \in \hat C\Leftrightarrow x\in \hat C$. Therefore the inclusion $|\hat C\cap Q| \hookrightarrow |\hat C|$ is a homotopy equivalence. It follows that the inclusion $|C'| \hookrightarrow |\hat C|$ is a homotopy equivalence if there exists an element $y \in (\hat C\cap Q)_1$ such that $d_1^0y = d_1^0\cdots d_1^0d_i^1x$. Suppose that there is no such element. Then our hypotheses imply that  ${\hat C\cap Q\cap\star_{P}(d_1^0\cdots d_1^0d_i^1x)} \subseteq \{d_1^0\cdots d_1^0d_i^1x, e^1_ix\}$. One easily checks that 
$d_1^0\cdots d_1^0d_i^1x \in \hat C\cap Q \Leftrightarrow e^1_ix \in \hat C\cap Q$. It follows that the inclusion $|C'| \hookrightarrow |\hat C\cap Q|$ is a homotopy equivalence. Therefore the inclusion $|C'| \hookrightarrow |\hat C|$ is a homotopy equivalence in this case, too. Hence $C$ is deformable into $C'$.
\end{sloppypar}
It remains to check the accessibility condition of Theorem \ref{di1lem6}. Note first that from every vertex in $\hat C$, there exists a path in $Q$ to a vertex of $C$. Consider a vertex $c' \in C'$. Suppose first that $c'\not= d^0_1 \cdots d^0_1d^1_ix$. Since $c' \in \hat C$, there exists a vertex $c \in C$ such that $c'\to_Q c$. Suppose that  $c \to_P d^0_1 \cdots d^0_1d^1_ix$. Then $c' \to_P d^0_1 \cdots d^0_1d^1_ix$. By considering a path of minimal length between these vertices, we obtain $c' \to_Q d^0_1 \cdots d^0_1d^1_ix$. Since the only edge ending in $d^0_1 \cdots d^0_1d^1_ix$ is $e^1_ix$, $c' \to_Q d^0_1 \cdots d^0_1x$.  Suppose now that $c' = d^0_1 \cdots d^0_1d^1_ix$. Then there exists an edge $y \in (\hat C\cap Q)_1$ such that $d_1^0y = d_1^0\cdots d_1^0d_i^1x$. Then $d^1_1y \in \hat C$ and there exists a vertex $c\in C$ such that $d^1_1y\to_Q c$. Thus $c'\to_Q c$. Suppose that $c\to_P d_1^0\cdots d_1^0d_i^1x$. Then $d^1_1y \to_P d_1^0\cdots d_1^0d_i^1x$. Since $d^0_1y \not= d^0_1e^1_ix$, we have $y \not= e^1_ix$ and thus $d^1_1y \not= d_1^0\cdots d_1^0d_i^1x$. As before, it follows that $d^1_1y \to_Q d_1^0\cdots d_1^0x$. Hence $c' \to_Q d_1^0\cdots d_1^0x$. 
\end{proof}

\subsection{Another concrete example} \label{anotherex}
\begin{figure}
\begin{tikzpicture}[initial text={},on grid] 
\path[pattern=north west lines, pattern color=lightgray]
(0,-1)--(-0.5,-1.5)--(1.5,-1.5)--(1,-1)--cycle;

\path[pattern=north west lines, pattern color=lightgray]
(0,1)--(-0.5,0.75)--(-0.5,1.5)--(0.25,1.5)--cycle;

\path[pattern=north west lines, pattern color=lightgray]
(2,1)--(2.5,1.5)--(2.5,-0.5)--(2,0)--cycle;

\draw[dashed] (0,-1) to  (-0.5,-1.5);

\draw[dashed] (1.5,-1.5) to  (1,-1);

\draw[dashed] (0,1) to (-0.5,0.75);

\draw[dashed] (0.25,1.5) to  (0,1);

\draw[dashed] (2,1) to  (2.5,1.5);

\draw[dashed] (2.5,-0.5) to  (2,0);

\path[draw, fill=lightgray] (0,1)--(2,1)--(2,0)--(1,0)--(1,-1)--(0,-1)--cycle;

 \node[state,minimum size=0pt,inner sep =2pt,fill=white,label={[label distance = 0.3cm]315:\scalebox{1}{$z$}}] (p_0) at (0,0)  {}; 
    
\node[state,minimum size=0pt,inner sep =2pt,fill=white,label={[label distance = 0.3cm]315:\scalebox{1}{$y$}}] (q_0) [above=of p_0,xshift=0cm] {}; 	   
    
   \node[state,minimum size=0pt,inner sep =2pt,fill=white] (p_2) [right=of p_0,xshift=0cm] {};
   
   \node[state,minimum size=0pt,inner sep =2pt,fill=white,label={[label distance = 0.3cm]315:\scalebox{1}{$x$}}] (p_1) [above=of p_2,xshift=0cm] {}; 
   
   \node[state,minimum size=0pt,inner sep =2pt,fill=white] (q_1) [right=of p_1,xshift=0cm] {};
   
   \node[state,minimum size=0pt,inner sep =2pt,fill=white] (p_4) [right=of p_2,xshift=0cm] {};
   
   \node[state,minimum size=0pt,inner sep =2pt,fill=white] [below=of p_0, yshift=0cm] (p_3)   {};

   \node[state,minimum size=0pt,inner sep =2pt,fill=white] (p_5) [right=of p_3,xshift=0cm] {};

   \path[->] 
   	(q_0) edge[above] node {} (p_1)
   	(q_0) edge[above] node {} (p_0)
    (p_1) edge[above] node {} (p_2)
    (p_1) edge[above] node {} (q_1)
    (q_1) edge[above] node {} (p_4)
    (p_0) edge[above] node {} (p_2)
    (p_3) edge[below]  node {} (p_5)
    (p_0) edge[left]  node {} (p_3)
    (p_2) edge[right]  node {} (p_5)
    (p_2) edge[right]  node {} (p_4);
   
\end{tikzpicture}
\caption{A precubical set satisfying the conditions of Theorem \ref{di1lem6}}\label{squares}
\end{figure}
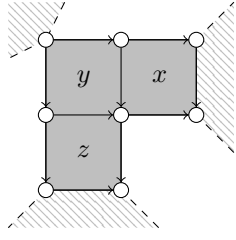

Consider the precubical set $P$ depicted in Figure \ref{squares}. The vertical edges of the squares are the boundaries with lower index $1$, and the horizontal edges are the boundaries with lower index $2$. The dashed areas indicate possible supplementary parts of $P$. We show that the conditions of  Theorem \ref{di1lem6} are satisfied for the $2$-cube $x$ and the free face $d^1_2x$, which is the lower horizontal boundary of $x$.  Let $C$ be any precubical subset of $P$. Define the precubical subset $\hat C\subseteq P$ by
$$\hat C = \left\{\begin{array}{ll} C \cup y_{\sharp}(\llbracket 0, 1\rrbracket^{\otimes 2}) \cup x_{\sharp}(\llbracket 0, 1\rrbracket^{\otimes 2}), & d_2^1x \in C,\\
C \cup y_{\sharp}(\llbracket 0, 1\rrbracket^{\otimes 2}), & d_2^1x \notin C,\; d_1^0d_2^1x, \in C,\\ 
C \cup (d^0_1y)_{\sharp}(\llbracket 0, 1\rrbracket), & d_1^0d_2^1x \notin C,\; d_1^0d_2^1y, \in C,\\
C, & \mbox{else}.
 \end{array}\right.$$ 
We note that $C \subseteq \hat C$ and that from every vertex in $\hat C$ there exists a path in $Q$ to a vertex of $C$. Consider the edges $d^0_1z$ and $d^1_1z$, which are the left and right vertical boundaries of $z$, and define the precubical subset $C'\subseteq Q$ by
$$C' = \left\{\begin{array}{ll} \hat C\cap Q, & d_1^1z \in C,\\
\hat C\cap Q \setminus \star_P (d^0_1d^1_2x), & d_1^1z \notin C,\; d_1^0z \in C,\\
(\hat C\cap Q \setminus \star_P (d^0_1d^1_2x))\setminus \star_P (d^0_1d^1_2y), & d_1^1z \notin C,\; d_1^0z \notin C.
 \end{array}\right.$$ 
Then $C'\subseteq \hat C$, and the inclusion $|C'|\hookrightarrow |\hat C|$ is a homotopy equivalence. Indeed, since $d^1_2x \in \hat C \Leftrightarrow x \in \hat C$, $d^1_2y \in \hat C \Leftrightarrow y \in \hat C$, $d^1_1d^1_1y \in \hat C \Leftrightarrow d^1_1y \in \hat C$, and $d^1_1d^0_1y \in \hat C \Leftrightarrow d^0_1y \in \hat C$, the inclusion $|C'| \hookrightarrow |\hat C|$ decomposes as a sequence of identities and elementary cube collapses:
$$C' = \left\{\begin{array}{ll} \hat C\setminus \{x,d^1_2x\}, & d_1^1z \in C,\\
((\hat C\setminus \{x,d^1_2x\})\setminus \{y,d^1_2y\})\setminus \{d^1_1y,d^1_1d^1_1y\}, & d_1^1z \notin C,\\ & \quad d_1^0z \in C,\\
(((\hat C\setminus \{x,d^1_2x\})\setminus \{y,d^1_2y\})\setminus \{d^1_1y,d^1_1d^1_1y\})\setminus \{d^0_1y,d^1_1d^0_1y\}, & \mbox{else}.
 \end{array}\right.$$ 
Let $c'$ be a vertex of $C'$. We have to show that there exists a vertex $c$ in $C$ such that $c' \to_Q c$ and $c\to_P d^0_1d^1_2x \Rightarrow c'\to_Q d^0_1d^0_1x$. Suppose that this is not the case. Since there exists a path in $Q$ from $c'$ to a vertex of $C$, we have $c'\to_P d^0_1d^1_2x$ but  $c' \not \to_Q d^0_1d^0_1x$.  Since for every vertex $w \in P \setminus y_{\sharp}(\llbracket 0,1 \rrbracket^{\otimes 2})$, $w \to_P d^0_1d^1_2x$  if and only if $w \to_Q d^0_1d^0_1x$, we have $c'\in y_{\sharp}(\llbracket 0,1 \rrbracket^{\otimes 2})$ and thus either $c' = d^1_1d^1_1y$ or $c' = d^1_1d^0_1y$. If $c' = d^1_1d^1_1y = d^0_1d^1_2x$, then $d^1_1z \in C$ and $(d^1_1z)_{\sharp}$ is a path in $Q$ from $c'$ to a vertex $c$ in $C$ such that $c\to_P d^0_1d^1_2x \Rightarrow c'\to_Q d^0_1d^0_1x$.  But we suppose that such a path does not exist. If $c' = d^1_1d^0_1y = d^0_1d^1_2y$, then $d^1_1z \in C$ or $d^0_1z \in C$ and again we obtain a path in $Q$ from $c'$ to a vertex $c$ in $C$ such that $c\to_P d^0_1d^1_2x \Rightarrow c'\to_Q d^0_1d^0_1x$. This is a contradiction. It follows that the conditions of Theorem \ref{di1lem6} are satisfied in the example and hence that the inclusion $|P\setminus \star(d^1_2x)| \hookrightarrow |P|$ induces an isomorphism of homology graphs. 

\subsection{Vertex star collapses} 

Let $x$ be a regular cube of degree $n\geq 2$ of a precubical set $P$, and let $k_1, \dots, k_n \in\{0,1\}$ such that at least one $k_i = 0$, at least one $k_i = 1$, and $\star(d^{k_n}_1\cdots d^{k_1}_1x) \subseteq x_{\sharp}(\llbracket 0,1 \rrbracket ^{\otimes n})$. Consider the precubical subset $Q = {P \setminus \star(d^{k_n}_1\cdots d^{k_1}_1x)}$ of $P$.

\begin{lem} \label{homotopylemma}
Let $C \subseteq P$ be a precubical subset such that one of the following two conditions holds:
\begin{enumerate}[(i)]
\item $\forall\, l_1, \dots, l_n \in \{0,1\} : d^{l_n}_1\cdots d^{l_1}_1x \in C \Rightarrow x_{\sharp}(\llbracket 0,l_1\rrbracket \otimes \cdots \otimes \llbracket 0,l_n\rrbracket ) \subseteq C$.
\item $\forall\, l_1, \dots, l_n \in \{0,1\} : d^{l_n}_1\cdots d^{l_1}_1x \in C \Rightarrow x_{\sharp}(\llbracket l_1,1 \rrbracket \otimes \cdots \otimes \llbracket l_n,1\rrbracket ) \subseteq C.$
\end{enumerate}
Then the inclusion $|C\cap Q| \hookrightarrow |C|$ is a homotopy equivalence.
\end{lem}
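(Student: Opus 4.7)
The plan is to reduce the statement to a local question about the cube $\llbracket 0,1\rrbracket^{\otimes n}$ and then show that the link of the vertex $v = d^{k_n}_1\cdots d^{k_1}_1 x$ in the corresponding restriction of $C$ is contractible via the Nerve Lemma. Since $x$ is regular, $x_\sharp$ is a combinatorial isomorphism onto its image, and the hypothesis $\star(v) \subseteq x_\sharp(\llbracket 0,1\rrbracket^{\otimes n})$ confines the cells where $|C|$ differs from $|C \cap Q|$ to this image. Let $D = x_\sharp^{-1}(C \cap x_\sharp(\llbracket 0,1\rrbracket^{\otimes n}))$, a precubical subset of $\llbracket 0,1\rrbracket^{\otimes n}$, and let $v' = (k_1, \dots, k_n)$. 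Then $|C| = |C \cap Q| \cup K$ and $|C \cap Q| \cap K = L$, where $K = |\star_D(v')|$ and $L$ is its link; since $K$ is geometrically a cone on $L$ with apex $v'$, it is contractible, and by the gluing lemma for CW pairs it suffices to show that $L$ is contractible. Conditions (i) and (ii) on $C$ translate to the vertex set $V(D) = D_0$ being down-closed or up-closed in $\{0,1\}^n$ respectively, and the coordinate-flipping symmetry of the cube reduces case (ii) to case (i).

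Under condition (i), cells of $\star_D(v')$ correspond bijectively to subsets $S \subseteq \{1,\dots,n\}$ such that the top vertex $v' \vee \chi_S$ lies in $V(D)$. Down-closedness implies this condition depends only on $\sigma := S \cap T$, where $T = \{i : k_i = 0\}$, and the admissible values of $\sigma$ form an abstract simplicial complex $\mathcal{S}$ on $T$. For each $\sigma \in \mathcal{S}$ set
\[
\hat K_\sigma := \{u \in [0,1]^n : u_i = 0 \text{ for all } i \in T \setminus \sigma\},
\]
a closed sub-box of dimension $|\sigma| + |T^c|$ containing $v'$. A direct computation identifies $K = \bigcup_{\sigma \in \mathcal{S}} \hat K_\sigma$. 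Let $N = \{u \in [0,1]^n : u_i = 1 - k_i \text{ for some } i\}$ denote the union of the faces of the cube opposite to $v'$; then $L = K \cap N = \bigcup_\sigma M_\sigma$ where $M_\sigma := \hat K_\sigma \cap N$.

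Three properties make $\{M_\sigma\}_{\sigma \in \mathcal{S}}$ a good closed cover of $L$. First, each $M_\sigma$ is homeomorphic to the link of a vertex in a standard cube, hence a disk and in particular contractible. Second, a coordinate-wise check gives $M_\sigma \cap M_{\sigma'} = M_{\sigma \cap \sigma'}$, so every nonempty finite intersection equals some $M_\tau$ and is contractible. Third, $M_\emptyset \subseteq M_\sigma$ for every $\sigma \in \mathcal{S}$, so every finite intersection is nonempty. The Nerve Lemma then yields that $L$ is homotopy equivalent to the nerve of the cover; since all intersections are nonempty, this nerve is the full simplex on $\mathcal{S}$, which is contractible. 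Hence $L$ is contractible, finishing the proof. The main obstacle is the combinatorial setup of the second paragraph---translating condition (i) into the simplicial complex $\mathcal{S}$ and verifying the intersection identity $M_\sigma \cap M_{\sigma'} = M_{\sigma \cap \sigma'}$ cleanly from the definitions; once these identities are in hand, the cone structure of $K$ and the Nerve Lemma close out the proof routinely.
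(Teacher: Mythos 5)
Your proof is correct, but it takes a genuinely different route from the paper's. The paper argues by pure combinatorial collapsing: after disposing of the trivial case $d^{k_n}_1\cdots d^{k_1}_1x \notin C$, it fixes an index $i$ with $k_i=1$, pairs each cell $y$ of the star with the face $\hat y$ obtained by setting the $i$-th coordinate to $1$, and removes such pairs from $C$ one at a time; a maximality argument (take a maximal set $Z$ of removable pairs and show that a maximal-degree cell outside $Z$ could still be added) verifies that each removal is an elementary collapse of a regular cell across a free face, invoking condition (i) exactly where you do, namely to certify that a star cell belongs to $C$ once its top vertex does. The paper thus exhibits $|C\cap Q|\hookrightarrow |C|$ as a finite sequence of elementary collapses, which is stronger than a homotopy equivalence (it is a simple-homotopy equivalence) and never leaves the combinatorial category. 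You instead localize along $x_\sharp$ (legitimate: regularity makes $|x_\sharp|$ a homeomorphism onto its image, and $\star(v)\subseteq x_\sharp(\llbracket 0,1\rrbracket^{\otimes n})$ confines the difference between $|C|$ and $|C\cap Q|$ to that image), write $|C|=|C\cap Q|\cup K$ with $K=\mathrm{cone}_{v'}(L)$, and compute the link once and for all via the nerve lemma. Your identifications check out: under (i) one has $D=\bigcup_{l\in D_0}\llbracket 0,l_1\rrbracket\otimes\cdots\otimes\llbracket 0,l_n\rrbracket$, so a star cell lies in $D$ if and only if its top vertex does --- note this uses the full strength of condition (i), not merely down-closedness of $D_0$ as your opening sentence suggests, though your actual bijection is stated under (i) and is fine; $M_\sigma\cap M_{\sigma'}=M_{\sigma\cap\sigma'}$ follows from $\hat K_\sigma\cap\hat K_{\sigma'}=\hat K_{\sigma\cap\sigma'}$; and the hypothesis that some $k_i=1$ gives $\dim \hat K_\sigma\geq 1$, which is what makes each $M_\sigma$ a nonempty disk and $K$ a genuine cone rather than the single point $v'$ --- worth saying explicitly, as is the trivial case $v'\notin D$ (there $K=\emptyset$, the cone description fails, but $C\cap Q=C$). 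One further small point: for case (ii), the coordinate flip is a homeomorphism of realizations but not a morphism of precubical sets; since your argument takes place entirely in $[0,1]^n$ after localization, this is harmless. In sum, the paper's scheme buys elementarity and collapsibility; yours buys a transparent identification of the local topology (a cone over a link covered by disks with simplex nerve) and avoids the bookkeeping of the maximal-$Z$ induction.
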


\begin{proof}
We prove the lemma under hypothesis (i). The proof under hypothesis (ii) is analogous and is left to the reader. If $d^{k_n}_1\cdots d^{k_1}_1x \notin C$, the lemma holds because then $C\cap  \star_P(d^{k_n}_1\cdots d^{k_1}_1x) = \emptyset$. Suppose that $d^{k_n}_1\cdots d^{k_1}_1x \in C$. Write $Y = \star_{\llbracket 0,1\rrbracket^{\otimes n}}((k_1, \dots, k_n)) = \{y \in \llbracket 0,1\rrbracket ^{\otimes n} :  y_j \in \{k_j,[0,1]\}\}$. We have $x_{\sharp}(Y) = \star_P(d^{k_n}_1\cdots d^{k_1}_1x)$. Fix an index $i$ such that $k_i = 1$. For $y \in Y$, define $\hat y \in Y$ by $$\hat y_j = \left\{\begin{array}{ll} y_j, & j\not= i, \\ 1, & j = i.\end{array}\right.$$ Let $Z$ be a maximal subset of $Y$ such that 
\begin{enumerate}[(i)]
\item for all $y \in Y$, $y \in Z \Leftrightarrow \hat y \in Z$;
\item $C \setminus x_{\sharp}(Z)$ is a precubical subset of $C$; and
\item the inclusion $|C\setminus x_{\sharp}(Z)| \hookrightarrow |C|$ is a homotopy equivalence.
\end{enumerate} 
Note that such a maximal subset of $Y$ exists because the empty set satisfies these conditions. Therefore the result follows if we can show that $Z = Y$.  Suppose that this is not the case. Let $y \in Y\setminus Z$ be an element of maximal degree. By condition (i), $\hat y \in Y \setminus Z$ and $y_i = [0,1]$. Let $r$ be the number of indexes $j\in \{1, \dots, i\}$ such that $y_j = [0,1]$. Then $\hat y = d^1_ry$. 

Suppose for a moment that $d^1_rx_{\sharp}(y) = x_{\sharp}(\hat y)\notin C \setminus x_{\sharp}(Z)$. Then by condition (ii), $x_{\sharp}(y) \notin C \setminus x_{\sharp}(Z)$. Thus $C \setminus x_{\sharp} (Z \cup \{y, \hat y\}) = C \setminus x_{\sharp}(Z)$. Hence $Z \cup \{y, \hat y\}$ satisfies conditions (i)-(iii). Since $Z$ is maximal, this is impossible. It follows that $d^1_rx_{\sharp}(y) \in C \setminus x_{\sharp}(Z)$.

\begin{sloppypar}
We show that also $x_{\sharp}(y) \in C \setminus x_{\sharp}(Z)$. Set $$l_j = \left\{\begin{array}{ll} 1, & y_j = [0,1], \\ y_j, & \mbox{else.} \end{array}\right.$$ Then $(l_1, \dots, l_n) = d^1_1\cdots d^1_1y = d^1_1\cdots d^1_1 d^1_ry$ and therefore  $d^{l_n}_1\cdots d^{l_1}_1x = d^1_1\cdots d^1_1 d^1_rx_{\sharp}(y) \in C$. Moreover,  $x_{\sharp}(y) \in x_{\sharp}(\llbracket 0,l_1\rrbracket \otimes \cdots \otimes \llbracket 0,l_n\rrbracket )$. Hence $x_{\sharp}(y) \in C$, and thus $x_{\sharp}(y) \in C \setminus x_{\sharp}(Z)$.
\end{sloppypar}

We show that $\star_{C \setminus x_{\sharp}(Z)}(d^1_rx_{\sharp}(y)) = \{d^1_rx_{\sharp}(y),x_{\sharp}(y)\}$. Suppose that this is not the case. Suppose first that there exists an element $c \in (C\setminus x_{\sharp}(Z))\setminus\{x_{\sharp}(y)\}$ that has $d^1_rx_{\sharp}(y) = x_{\sharp}(\hat y)$ in its boundary. Then $c \in x_{\sharp}(Y)$ and there exists an element $y' \in Y\setminus Z$ such that $y' \not= y$ and $x_{\sharp}(y') = c$. Since $y' \not= y$, $y'_i = 1$ and there exists an element $y'' \in Y$ such that $\hat y'' = y'$. By condition (i) above, $y'' \in Y \setminus Z$. Since $y$ is of maximal degree, this is impossible. Suppose now that there exists an element $c \in C\setminus x_{\sharp}(Z)$ that has $x_{\sharp}(y)$ in its boundary. Then $c \in  x_{\sharp}(Y)$ and there exists an element $y' \in Y\setminus Z$ such that $x_{\sharp}(y') = c$. Since $y$ is of maximal degree, this is impossible. It follows that $\star_{C \setminus x_{\sharp}(Z)}(d^1_rx_{\sharp}(y)) = \{d^1_rx_{\sharp}(y),x_{\sharp}(y)\}$.

Therefore $C \setminus x_{\sharp}(Z \cup \{y,\hat y\}) = (C \setminus x_{\sharp}(Z)) \setminus \star_{C \setminus x_{\sharp}(Z)}(d^1_rx_{\sharp}(y))$ is a precubical subset of $C$. Since $x$ is regular, so is $x_{\sharp}(y)$. Hence the inclusion $$|(C \setminus x_{\sharp}(Z)) \setminus \star_{C \setminus x_{\sharp}(Z)}(d^1_rx_{\sharp}(y))|\hookrightarrow |(C \setminus x_{\sharp}(Z))|$$ is a homotopy equivalence. It follows that 
the set $Z \cup \{y,\hat y\}$ satisfies conditions (i)-(iii) above. Since $Z$ is maximal, this is impossible. Thus $Z = Y$.
\end{proof}

\begin{sloppypar}
\begin{theor} \label{HV}
The inclusion $\iota \colon |Q| \hookrightarrow |P|$ induces a graph isomorphism ${\iota_*\colon H_*(|Q|) \to H_*(|P|)}$.
\end{theor}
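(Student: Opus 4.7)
The plan is to use Lemma \ref{homotopylemma} twice: once to show that $\iota$ is a homotopy equivalence, and once to deform any witness of the pointing relation in $P$ into a witness in $Q$. The homology isomorphism follows from Lemma \ref{homotopylemma} applied with $C=P$, since $P$ trivially satisfies condition (i). By Theorem \ref{morpoint}, $\iota_*$ is a graph homomorphism, so the real work is to prove that $\iota_*(\alpha)\nearrow \iota_*(\beta)$ implies $\alpha \nearrow \beta$.

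Suppose then that $\iota_*(\alpha)\nearrow \iota_*(\beta)$, witnessed by precubical subsets $A,B\subseteq P$ with $\iota_*(\alpha)\in \im H_*(|A|\hookrightarrow |P|)$, $\iota_*(\beta)\in \im H_*(|B|\hookrightarrow |P|)$, and $a\to_P b$ for all $a\in A_0,\, b\in B_0$. I would enlarge $A$ and $B$ by closing them under appropriate sub-cubes of $x$: let $\hat A$ be the union of $A$ with all $x_{\sharp}(\llbracket 0,l_1\rrbracket \otimes \cdots \otimes \llbracket 0,l_n\rrbracket)$ such that $d^{l_n}_1\cdots d^{l_1}_1 x \in A$, and let $\hat B$ be the union of $B$ with all $x_{\sharp}(\llbracket l_1,1\rrbracket \otimes \cdots \otimes \llbracket l_n,1\rrbracket)$ such that $d^{l_n}_1\cdots d^{l_1}_1 x \in B$. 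A short check shows that these are precubical subsets of $P$ satisfying conditions (i) and (ii) of Lemma \ref{homotopylemma}, respectively. Setting $A'=\hat A\cap Q$ and $B'=\hat B\cap Q$, Lemma \ref{homotopylemma} yields that $|A'|\hookrightarrow |\hat A|$ and $|B'|\hookrightarrow |\hat B|$ are homotopy equivalences, so $A$ is deformable into $A'$ and $B$ into $B'$. Lemma \ref{deflemma} then gives $\alpha \in \im H_*(|A'|\hookrightarrow |Q|)$ and $\beta \in \im H_*(|B'|\hookrightarrow |Q|)$.

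It remains to verify the accessibility condition: for any $a'\in A'_0$ and $b'\in B'_0$, one has $a'\to_Q b'$. I would first establish $a'\to_P b'$. If $a'\in A_0$ this is immediate from the original assumption; otherwise $a'=x_{\sharp}(y)$ with $y_j\le l_j$ for some $d^{l_n}_1\cdots d^{l_1}_1 x \in A_0$, and moving upward inside $x$ gives $a'\to_P d^{l_n}_1\cdots d^{l_1}_1 x$, after which the original hypothesis takes over. The case $b'\notin B_0$ is handled dually, moving downward from the relevant vertex of $B$ to $b'=x_{\sharp}(z)$ with $z_j\ge m_j$. Once $a'\to_P b'$ is established, Theorem \ref{omegaprime} (the fully-faithfulness of $\iota_*$ on fundamental categories) upgrades this to $a'\to_Q b'$, because both $a'$ and $b'$ are vertices of $Q$.

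The main obstacle is ensuring that the enlargements $\hat A$ and $\hat B$ are set up so that the two dual hypotheses of Lemma \ref{homotopylemma} apply to the source and target subsets respectively. The asymmetry between conditions (i) and (ii) must be paired with the directional asymmetry of $\nearrow$, which is why one must close $A$ downward and $B$ upward inside $x$; getting this pairing correct is the conceptual core, and once it is in place the rest of the argument reduces to routine applications of Lemma \ref{deflemma} and Theorem \ref{omegaprime}.
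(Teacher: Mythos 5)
Your proposal is correct and takes essentially the same route as the paper's own proof: the same downward closure $\hat A$ and upward closure $\hat B$ paired with conditions (i) and (ii) of Lemma \ref{homotopylemma}, followed by the same applications of Lemma \ref{deflemma}, the chain $a' \to_P a \to_P b \to_P b'$, and Theorem \ref{omegaprime} to upgrade accessibility to $Q$. The only cosmetic difference is that you make explicit the instantiation $C = P$ for the homotopy equivalence of $\iota$, which the paper leaves implicit.
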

\end{sloppypar}

\begin{proof}
By Lemma \ref{homotopylemma}, $\iota$ is a homotopy equivalence. Therefore it induces an isomorphism in homology. 
Let $\alpha$ and $\beta$ be two homology classes in $H_*(|Q|)$ such that $\iota_*(\alpha) \nearrow \iota_*(\beta)$. 
By Theorem \ref{morpoint}, it suffices to show that $\alpha \nearrow \beta$. Let $A, B \subseteq P$ be precubical subsets such that $\iota_*(\alpha) \in \im\, H_*(|A| \hookrightarrow |P|)$, $\iota_*(\beta) \in \im\, H_*(|B| \hookrightarrow |P|)$, and for all $a \in A_0$ and $b \in B_0$, $a\to_P b$. Consider the precubical subsets $\hat A, \hat B \subseteq P$ given by $$\hat A = A \cup \bigcup \limits_{d^{m_n}_1\cdots d^{m_1}_1x \in A }  x_{\sharp}(\llbracket 0,m_1\rrbracket \otimes \cdots \otimes \llbracket 0,m_n\rrbracket )$$ and $$\hat B = B \cup \bigcup \limits_{d^{m_n}_1\cdots d^{m_1}_1x \in B }  x_{\sharp}(\llbracket m_1,1\rrbracket \otimes \cdots \otimes \llbracket m_n,1\rrbracket ).$$
Then $\hat A$ and $\hat B$ satisfy conditions (i) and (ii) of Lemma \ref{homotopylemma} respectively. Set $A' = \hat A \cap Q$ and $B' = \hat B \cap Q$. By Lemma \ref{homotopylemma}, the inclusions $|A'| \hookrightarrow |\hat A|$ and $|B'| \hookrightarrow  |\hat B|$ are homotopy equivalences. 
By Lemma \ref{deflemma}, $\alpha \in \im\, H_*(|A'| \hookrightarrow |Q|)$ and  $\beta \in \im\, H_*(|B'| \hookrightarrow |Q|)$. Consider vertices $a' \in  A'_0$ and $b' \in B'_0$. By definition of $\hat A$ and $\hat B$, there exist vertices $a \in A_0$ and $b\in B_0$ such that $a' \to_P a$ and $b \to _P  b'$. Since $\iota_*(\alpha) \nearrow \iota_*(\beta)$, we have $a \to_P b$ and hence $a' \to_P b'$. By Theorem  \ref{omegaprime}, $a' \to_Q b'$. This shows that $\alpha \nearrow \beta$.
\end{proof}

\section{Topological abstraction} \label{topabs}

We are now ready to define the preorder relation of topological abstraction. We relate it to homeomorphic abstraction and discuss topological abstraction through cube collapses. As before, we consider homology with coefficients in an arbitrary commutative unital ring.

\subsection{The preorder relation of topological abstraction} \label{topabdef}
Consider two $M\textrm{-}$HDAs $\A = (P,I,F,\lambda)$ and $\B = (Q,J,G,\mu)$. We write $\B \stackrel{\sim}{\to} \A$ and say that $\B$ is a \emph{topological abstraction} of $\A$, or that $\A$ is a \emph{topological refinement} of $\B$, if there exists a weak morphism $f$ from $\B$ to $\A$ such that 
\begin{enumerate}
\item $f(J) = I$, $f(G) = F$, $f(m_0(Q)) = m_0(P)$, $f(m_1(Q)) = m_1(P)$; 
\item $f$ is a homotopy equivalence;
\item the functor $f_*\colon TC(\B) \to TC(\A)$ is an isomorphism; and
\item the map $f_* \colon H_*(|Q|) \to H_*(|P|)$ is a graph isomorphism.
\end{enumerate}
It is clear that  $\stackrel{\sim}{\to}$ is a preorder relation on the class of $M$-HDAs.

\subsection{Homeomorphic abstraction \cite{weakmor}}  Consider two $M\textrm{-}$HDAs $\A = (P,I,F,\lambda)$ and $\B = (Q,J,G,\mu)$. We say that $\B$ is a \emph{homeomorphic abstraction} of $\A$, or that $\A$ is a \emph{homeomorphic refinement} of $\B$, if there exists a weak morphism $f$ from $\B$ to $\A$ that is a homeo\-morphism and satisfies $f(J) = I$ and $f(G) = F$. Such a weak morphism can be considered a label-preserving $\mbox{T-}$homotopy equivalence in the sense of Gaucher and Goubault \cite{GaucherT, GaucherGoubault}. We use the notation $\B \stackrel{\approx}{\to} \A$ to indicate that $\B$ is a homeomorphic abstraction of $\A$. The relation $\stackrel{\approx}{\to}$ is a preorder on the class of $M$-HDAs.

\begin{theor} \label{homeoabs}
Suppose that $\B \stackrel{\approx}{\to} \A$ and that $\A$ is weakly regular. Then $\B\stackrel{\sim}{\to} \A$. Moreover, $\A$ is (co)accessible if and only if $\B$ is (co)accessible.
\end{theor}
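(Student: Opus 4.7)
The plan is to verify the four conditions defining $\B \stackrel{\sim}{\to}\A$ for the weak morphism $f\colon |Q| \to |P|$ furnished by the homeomorphic abstraction, and then to derive (co)accessibility. Condition (2) is immediate since a homeomorphism is a homotopy equivalence, and condition (4) is the second part of Theorem \ref{morpoint}. What remains is the extremal vertex correspondence in condition (1) and the trace-category isomorphism in condition (3); both will follow from a single key step.

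The key step is to recognize the inverse homeomorphism $f^{-1}\colon |P|\to|Q|$ as itself a weak morphism---this is where weak regularity of $\A$ (and hence, by Proposition \ref{weakreg}, of $\B$) enters. I would first show that $f_0\colon Q_0\to P_0$ is bijective: injectivity is immediate, and for surjectivity, given $w\in P_0$ write the preimage $f^{-1}([w,()])$ as $[x,u]$ with $x$ a cube of minimal dimension containing it, so that $u\in(0,1)^{\deg(x)}$; applying $f$ via the factorization $f\circ|x_\sharp| = |x_\flat|\circ\phi_x$, together with the standard observation that points $[y,v]$ with $v$ strictly interior and $\deg(y)>0$ are never vertex points in $|P|$, forces $\deg(x)=0$. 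For the dihomeomorphism condition on $f^{-1}$, given a morphism $\xi\colon\llbracket 0,k_1\rrbracket\otimes\cdots\otimes\llbracket 0,k_n\rrbracket\to P$, I would use weak regularity of $Q$ to assemble the preimages under $f^{-1}$ of the constituent cubes of $\xi$, along their shared faces, into a morphism $\chi$ fitting into a factorization $f^{-1}\circ|\xi| = |\chi|\circ\phi$. Combined with bijectivity of $f_0$, the equalities $f(J)=I$ and $f(G)=F$, and the compatibility $\overline{\lambda}\circ f^{\mathbb I} = \overline{\mu}$, this makes $f^{-1}$ a weak morphism of HDAs from $\A$ to $\B$.

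With $f^{-1}$ a weak morphism, condition (3) follows from the functoriality of $(-)_*$ \cite[2.5, 3.6]{weakmor}, since $(f^{-1})_*$ is two-sided inverse to $f_*$. For the vertex correspondence in condition (1): if $v\in m_0(Q)$ but $f_0(v)$ admits an outgoing edge $e\in P_1$, the factorization $f^{-1}\circ|e_\sharp|=|\chi|\circ\phi$ produces a path $\chi$ in $Q$ of length $\geq 1$ starting at $v$, contradicting $v\in m_0(Q)$; the symmetric argument using $f$, together with the duals for $m_1$, yields $f_0(m_k(Q)) = m_k(P)$ for $k=0,1$. Finally, (co)accessibility follows from two applications of Proposition \ref{coacc}---one to $f$, one to $f^{-1}$---whose vertex-existence hypotheses are automatic by bijectivity of $f_0$. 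The main obstacle I anticipate is the construction of $\chi$ above: showing that preimages of the cubes of $\xi$ glue coherently along shared faces into a genuine precubical morphism, rather than merely producing a disjoint family of cubes, is precisely where weak regularity must be used in an essential way.
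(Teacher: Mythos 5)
Your argument fails at its key step, and the failure is not a repairable detail: it is the whole point of the notion of weak morphism. You claim $f_0\colon Q_0\to P_0$ is bijective and, from this, that $f^{-1}$ is again a weak morphism. Both claims are false in general. Weak morphisms are allowed to \emph{subdivide} cubes: condition (2) of the definition only provides a factorization $f\circ|x_\sharp| = |x_\flat|\circ\phi_x$ through a grid $R_x = \llbracket 0,l_1\rrbracket\otimes\cdots\otimes\llbracket 0,l_n\rrbracket$ with possibly $l_j>1$, so a point $[x,u]$ with $u$ strictly interior can perfectly well land on a vertex point of $|P|$ (namely when $\phi_x(u)$ is a grid point of $R_x$). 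Your ``standard observation'' that interior points are never vertex points is valid for realizations of morphisms of precubical sets, but not for weak morphisms --- that is precisely what makes them weak. A minimal counterexample: take $Q = \llbracket 0,1\rrbracket$, $P=\llbracket 0,2\rrbracket$, and $f\colon [0,1]\to[0,2]$, $t\mapsto 2t$; with the label of the edge of $Q$ equal to the product of the two edge labels of $P$ in a free monoid, and endpoints as initial/final states, this is a homeomorphic abstraction between weakly regular HDAs (it is exactly the edge-merging that occurs in passing from Figure \ref{ex1} to Figure \ref{abs1}). Here the middle vertex of $P$ is not in the image of $f_0$, and $f^{-1}$ violates condition (1) of the definition of weak morphism, since it sends that vertex point to an interior point of an edge. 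Homeomorphic abstraction is a genuine preorder, not a symmetric relation, so everything you build on bijectivity of $f_0$ and on $f^{-1}$ being a weak morphism --- your derivations of conditions (1) and (3), and the ``automatic'' verification of the hypotheses of Proposition \ref{coacc} --- collapses. Weak regularity cannot rescue this: both HDAs in the counterexample are weakly regular.

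For comparison, the paper does not invert $f$ at all. Conditions (1)--(4) for the homeomorphism $f$ are quoted from earlier work: weak regularity enters via Proposition \ref{weakreg} (to conclude that $Q$ is weakly regular), the trace-category isomorphism and the correspondence of extremal vertices are \cite[4.8.1, 4.9.1]{weakmor}, and the homology-graph isomorphism is the second half of Theorem \ref{morpoint}, i.e. \cite[5.8]{hgraph}. For (co)accessibility the asymmetry is handled explicitly: to pass from $\A$ (co)accessible to $\B$, one lifts paths through the homeomorphism using \cite[Section 4.3, Remark (i)]{weakmor}; to pass from $\B$ to $\A$, precisely because $f_0$ need not be surjective, the paper considers the \emph{carrier} $c(a)\in Q$ of a vertex $a\in P_0$, the unique cube admitting $u\in\mathopen]0,1\mathclose[^{\deg(c(a))}$ with $f([c(a),u])=[a,()]$, and uses the grid $R_{c(a)}$ to show $f_0(c(a)_\sharp(0,\dots,0))\to_P a \to_P f_0(c(a)_\sharp(1,\dots,1))$, which is exactly the hypothesis of Proposition \ref{coacc}. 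Some version of this carrier argument is what you would need wherever you invoked surjectivity of $f_0$.
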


\begin{proof}
The first statement follows from Proposition \ref{weakreg}, \cite[4.8.1, 4.9.1]{weakmor}, and \cite[5.8]{hgraph}. Let $f$ be a weak morphism from $\B$ to $\A$ that is a homeo\-morphism and satisfies $f(J) = I$ and $f(G) = F$. Suppose that $\A$ is accessible. Consider a vertex $b \in Q_0$. Then there exists a path in $P$ from an initial state $w \in I$ to $f_0(b)$. Since $f_0(J) = I$, there exists a vertex $v\in J$ such that $f_0(v) = w$. By \cite[Section 4.3, Remark (i)]{weakmor}, there exists a path in $Q$ from $v$ to $b$. It follows that $\B$ is accessible. In the same way one shows that $\B$ is coaccessible if $\A$ is coaccessible. 

In order to show that $\A$ is (co)accessible if $\B$ is (co)accessible, we establish the condition of Proposition \ref{coacc}. Let $a$ be a vertex of $P$. Consider the carrier of $a$, i.e., the unique element $c(a) \in Q$ for which there exists an element $u \in \mathopen] 0,1\mathclose[^{\deg(c(a))}$ such that $f([c(a),u]) = [a,()]$ (cf. \cite[4.3.1]{weakmor}). If $\deg(c(a)) = 0$, then $f_0(c(a)) = a$ and thus $f_0(c(a)) \to_P a \to_P f_0(c(a))$. Suppose that $\deg(c(a)) = n > 0$. Since $f\circ |c(a)_{\sharp}| = |c(a)_{\flat}|\circ \phi_{c(a)}$, we have $[a,()] \in |c(a)_{\flat}|(|R_{c(a)}|)$. It follows that there exists a vertex $y \in R_{c(a)}$ such that $c(a)_{\flat}(y) = a$. Write $R_{c(a)} = \llbracket 0, k_1\rrbracket \otimes \cdots \otimes \llbracket 0, k_n\rrbracket $. Since $(0,\dots, 0) \to_{R_{c(a)}} y \to_{R_{c(a)}} (k_1, \dots , k_n)$, we have $c(a)_{\flat}(0,\dots, 0) \to_P a \to_P c(a)_{\flat}(k_1,\dots,k_n)$. By \cite[2.1.1]{weakmor}, $(\phi_{c(a)})_0(0,\dots , 0) = (0,\dots ,0)$ and $(\phi_{c(a)})_0(1,\dots ,1) = (k_1,\dots ,k_n)$. By \cite[2.2.1]{weakmor}, it follows that $f_0(c(a)_{\sharp}(0,\dots 0)) \to_P a \to_P f_0(c(a)_{\sharp}(1,\dots 1))$. 
\end{proof}

\subsection{Elementary collapses} \label{di1red}

Consider a weakly regular $M$-HDA $\A = (P,I, F,\lambda)$ and a regular element $x$ of degree $n \geq 2$ with free face $d^ k_ix$. Suppose that there exists no element $y \in P_1\setminus \{e^k_ix\}$ such that $d^k_1y = d_1^{1-k} \cdots d_1^{1-k}d_i^kx$. If $n\leq 3$, suppose furthermore that
$d_1^{1-k} \cdots d_1^{1-k}d_i^kx \notin I\cup F$. If $n = 2$, suppose finally that there exists an edge $y \in P_1\setminus \{d^k_ix\}$ such that $d_1^{1-k}y = d_1^{1-k}d_i^kx$. Consider the precubical subset $Q = P \setminus \star(d_i^kx)$ of $P$ and the $M$-HDA $\B = (Q,I,F,\lambda|_{Q_1})$.

\begin{theor} \label{freefacecollapse}
$\B \we \A$. Moreover, $\A$ is (co)accessible if and only if $\B$ is (co)accessible.
\end{theor}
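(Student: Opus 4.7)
The plan is to take the inclusion $\iota\colon |Q|\hookrightarrow|P|$ as the weak morphism witnessing $\B\we\A$, to verify the four defining conditions of topological abstraction in turn, and then to handle (co)accessibility separately by a rerouting argument.

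Conditions (1), (2), and (4) of the definition of $\we$ are comparatively routine. The identities $\iota(I)=I$ and $\iota(F)=F$ follow from $Q_0=P_0$. The extremal equality $m_j(Q)=m_j(P)$ is immediate when $n\geq 3$ since then $Q_{\leq 1}=P_{\leq 1}$, and for $n=2$ it is exactly Lemma \ref{minmax}, whose hypothesis is our auxiliary edge assumption. That $\iota$ is a homotopy equivalence is the standard consequence of collapsing a regular cube across a free face recalled at the end of Section \ref{precubs}. Invariance of the homology graph is a direct application of Theorem \ref{red1}, whose hypotheses coincide with ours.

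The substantive part is invariance of the trace category, which I will split by dimension. When $n\geq 4$, we have $Q_{\leq 2}=P_{\leq 2}$ and the (unnumbered) proposition in Section \ref{TC} giving this case applies directly. When $n=3$, I will invoke Theorem \ref{TC3}: any path of positive length reaching the vertex $d^{1-k}_1d^{1-k}_1d^k_ix$ on the relevant side must have $e^k_ix$ as its extremal edge by the uniqueness hypothesis, yielding the required $k$-divisibility; the length-zero case is vacuous because this vertex lies neither in $I\cup F$ (by hypothesis), nor in $m_{1-k}(P)$ (since $e^k_ix$ is incident to it on the blocking side), nor in $m_k(P)$ (since the regular $(n-1)$-cube $d^k_ix$ supplies edges of $P$ on the opposite side). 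When $n=2$, I will apply Theorem \ref{TC2}: condition (i) is exactly our auxiliary edge hypothesis, and condition (ii) follows from Example \ref{excancel}, which provides the right (or left) dihomotopy cancellation property for $[(e^k_ix)_{\sharp}]$, together again with the exclusion of the length-zero case (where regularity of $x$ is used to ensure that $d^{1-k}_1d^k_ix\neq d^k_1d^k_1x$).

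For the (co)accessibility statement, the direction from $\B$ to $\A$ follows from Proposition \ref{coacc} applied with $b=a$. The converse is immediate for $n\geq 3$ because $P_1=Q_1$, so every $P$-path is already a $Q$-path. The case $n=2$ is the main obstacle and requires rerouting: any interior occurrence of $d^k_ix$ in a $P$-path is flanked by $e^k_ix$ (by the uniqueness hypothesis on edges at $d^{1-k}_1d^k_ix$), and the pair may be swapped for the two complementary edges of the square $x$, all of which lie in $Q$, producing a $Q$-path with strictly fewer uses of $d^k_ix$. The delicate situation is when $d^k_ix$ is the first or last edge, which forces the relevant endpoint of the path to coincide with $d^{1-k}_1d^k_ix$; here one applies the auxiliary edge $y\neq d^k_ix$ supplied by the $n=2$ hypothesis, observing that regularity of $x$ combined with the uniqueness hypothesis forbids $y$ from being a loop, so that its other endpoint is distinct from $d^{1-k}_1d^k_ix$ and one can recurse using the interior argument to reach $I$ or $F$ inside $Q$.
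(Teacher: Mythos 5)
Your proposal is correct and follows essentially the same route as the paper's proof: the inclusion as the witnessing weak morphism, Theorem \ref{red1} for the homology graph, Lemma \ref{minmax} for the extremal vertices, the dimension split $n\geq 4$ / $n=3$ / $n=2$ handled via the unnumbered proposition, Theorem \ref{TC3}, and Theorem \ref{TC2} with Example \ref{excancel}, and, for (co)accessibility, Proposition \ref{coacc} in one direction together with the same square-rerouting mechanism (your direct edge swap is precisely what the paper's Lemma \ref{aprime} and its coaccessibility argument carry out, the latter phrased as a contradiction with a second appeal to the coaccessibility of $\A$ at the far endpoint of the auxiliary edge, which is exactly your recursion). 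The only blemish is cosmetic: in your exclusion of the length-zero case the roles of $m_k(P)$ and $m_{1-k}(P)$ are interchanged---the edge $e^k_ix$ incident to the vertex rules out membership in $m_k(P)$, while the edges supplied by the face $d^k_ix$ rule out membership in $m_{1-k}(P)$.
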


\begin{proof}
Once again, we only prove the theorem for $k=1$. The inclusion $\iota\colon |Q| \hookrightarrow |P|$ is a weak morphism of $M$-HDAs from $\B$ to $\A$ that is a homotopy equivalence. By Theorem \ref{red1}, $\iota_*\colon H_*(|Q|) \to H_*(|P|)$ is a graph isomorphism. If $n\geq 3$, then $Q_{\leq 1} = P_{\leq 1}$ and thus $m_j(Q) = m_j(P)$ $(j = 0,1)$. For $n=2$, this holds by Lemma \ref{minmax}. It follows that $\iota$ is a bijection on the sets of initial, final, maximal, and minimal vertices. If $n\geq 4$, then $Q_{\leq 2} = P_{\leq 2}$ and therefore the functor $\iota_*\colon TC(\B) \to TC(\A)$ is an isomorphism. As observed in Example \ref{excancel}, in the case $n = 2$, this follows from Theorem \ref{TC2}. Suppose that $n=3$. Consider a path $\omega \in P^{\mathbb I}$ from a vertex $v \in I \cup F \cup m_0(P) \cup m_1(P)$ to $d^0_1d^0_1d^1_ix$. By our hypotheses, $v \not= d^0_1d^0_1d^1_ix$. It follows that $\omega$ ends with the edge  $e^1_ix$. By Theorem \ref{TC3}, $\iota_*\colon TC(\B) \to TC(\A)$ is an isomorphism.

If $n\geq 3$, we have $Q_{\leq 1} = P_{\leq 1}$, and this implies that $\A$ is (co)accessible if and only if $\B$ is (co)accessible. Suppose that $n = 2$. Since $\iota_0\colon Q_0 \to P_0$ is the identity, Proposition \ref{coacc} implies that $\A$ is (co)accessible if $\B$ is (co)accessible. Suppose that $\A$ is accessible, and consider a vertex $b \in Q_0$. Then there exists a path $\omega \in P^{\mathbb I}$ from a vertex in $I$ to $b$. As explained in Example \ref{excancel}, we may apply Lemma \ref{aprime} to obtain a path $\omega ' \in Q^{\mathbb I}$ such that $\omega ' \sim \omega$. It follows that $\B$ is accessible. Suppose that $\A$ is coaccessible but  $\B$ is not. Then there exists a vertex $b \in Q_0$ such that there exists a path $\omega \in P^{\mathbb I}$ from $b$ to a final state but there is no path in $Q$ from $b$ to a final state. Since $\omega \notin Q^{\mathbb I}$, it begins with a path from $b$ to $d^0_1d^1_ix$, and a shortest such path is a path in $Q$. Let $y \in Q_1 = P_1\setminus\{d^1_ix\}$ be an edge such that $d^0_1y = d^0_1d^1_ix$. Then $y \not= e^ 1_ix$ and therefore $d^1_1y \not= d^0_1d^1_ix$. Since there is no path in $Q$ from $b$ to a final state but a path in $Q$ from $b$ to $d^1_1y$, there is no path in $Q$ from $d^1_1y$ to a final state. Since $\A$ is coaccessible, there exists a path $\nu \in P^{\mathbb I}$ from $d^1_1y$ to a vertex $c \in F$. Since $\nu \notin Q^{\mathbb I}$ and the only edge ending in $d^0_1d^1_ix$ is $e^1_ix = d^0_{3-i}x$, $\nu$ begins with a path from $d^1_1y$ to $d^0_1d^0_1x$, and if  $\alpha$ is a shortest such path, then $\alpha\in Q^{\mathbb I}$. On the other hand, $\nu$ terminates with a path from $d^1_1d^1_1x$ to $c$. Let $\beta $ be a shortest such path. Then $\beta\in Q^{\mathbb I}$. The concatenation $\alpha \cdot (d^0_ix)_{\sharp}\cdot (d^1_{3-i}x)_{\sharp}\cdot \beta$ is a path in $Q$ from $d^1_1y$ to $c$. This is a contradiction. Hence $\B$ is coaccessible if $\A$ is coaccessible.
\end{proof}

\subsection{Vertex star collapses} 

Let $x$ be a regular cube of degree $n \geq 2$ of an $M$-HDA $\A = (P,I, F,\lambda)$, and let $k_1, \dots, k_n \in\{0,1\}$ be numbers such that at least one $k_i = 0$, at least one $k_i = 1$, $d^{k_n}_1\cdots d^{k_1}_1x \notin I \cup F$, and $\star(d^{k_n}_1\cdots d^{k_1}_1x) \subseteq x_{\sharp}(\llbracket 0,1 \rrbracket ^{\otimes n})$. 
Consider the precubical subset $Q = P\setminus \star(d^{k_n}_1\cdots d^{k_1}_1x)$ of $P$ and the $M\mbox{-}$HDA $\B = (Q,I,F,\lambda|_{Q_1})$.

\begin{theor} \label{red2}
$\B \we \A$. Moreover, $\A$ is (co)accessible if and only if $\B$ is (co)accessible. 
\end{theor}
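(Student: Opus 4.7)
The plan is to exhibit the inclusion $\iota\colon |Q|\hookrightarrow |P|$ as a weak morphism witnessing $\B\we\A$, and then to establish (co)accessibility separately. First observe that the vertex $v=d^{k_n}_1\cdots d^{k_1}_1x$ lies outside $I\cup F\cup m_0(P)\cup m_1(P)$: it is not in $I\cup F$ by hypothesis, and since some $k_i=0$ while some $k_j=1$, the regularity of $x$ provides both an outgoing and an incoming edge at $v$ inside $x_{\sharp}$, so $v$ is neither maximal nor minimal. Consequently $I\cup F\subseteq Q_0$ and $\B$ is a well-defined $M$-HDA with $\iota$ a weak morphism of HDAs from $\B$ to $\A$.

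Next I verify the four conditions of Section \ref{topabdef}. The map $\iota_0$ is the identity on $Q_0\subseteq P_0$, hence fixes $I$ and $F$; Lemma \ref{mj} gives $m_j(Q)=m_j(P)$ for $j=0,1$, yielding condition (1). For condition (2), Lemma \ref{homotopylemma} applied to $C=P$ (where both hypotheses (i) and (ii) hold trivially) shows that $\iota$ is a homotopy equivalence. Conditions (3) and (4) are precisely the conclusions of Theorems \ref{TCV} and \ref{HV}.

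For the (co)accessibility equivalence, the direction $\B\Rightarrow\A$ follows from Proposition \ref{coacc}: the only vertex of $P_0$ missing from $Q_0$ is $v$, and within $x_{\sharp}(\llbracket 0,1\rrbracket^{\otimes n})$ we have $d^0_1\cdots d^0_1x\to_P v\to_P d^1_1\cdots d^1_1x$, where both $d^0_1\cdots d^0_1x$ and $d^1_1\cdots d^1_1x$ belong to $Q_0$ because, respectively, not all $k_i$ are zero and not all $k_i$ equal $1$. For the reverse direction, given any $b\in Q_0$, if $\A$ is accessible there exists a path in $P$ from some $v_0\in I\subseteq Q_0$ to $b$; by Theorem \ref{omegaprime} the functor $\iota_*\colon\vec\pi_1(Q)\to\vec\pi_1(P)$ is fully faithful, so this dihomotopy class lifts to a dihomotopy class of paths in $Q$ from $v_0$ to $b$, which is in particular represented by an actual path in $Q$. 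The coaccessible case is symmetric.

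The main obstacle, or rather the location of the real work, is of course the proofs of Theorems \ref{omegaprime}, \ref{TCV}, and \ref{HV} that are invoked wholesale here; granted those, the present theorem is essentially bookkeeping. The one slightly subtle point is that Proposition \ref{coacc} alone does not give accessibility of $\B$ from accessibility of $\A$, and it is the fullness (not merely the faithfulness) of $\iota_*$ from Theorem \ref{omegaprime}, together with the fact that $I,F\subseteq Q_0$, that closes the gap in that direction.
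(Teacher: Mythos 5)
Your proof is correct and takes essentially the same route as the paper's: the four conditions of \ref{topabdef} are checked via Lemma \ref{homotopylemma}, Lemma \ref{mj}, Theorem \ref{TCV}, and Theorem \ref{HV}, and (co)accessibility is handled exactly as in the paper, with Proposition \ref{coacc} giving one direction and the fullness of $\iota_*$ from Theorem \ref{omegaprime} giving the other. Your additional observations---that $d^{k_n}_1\cdots d^{k_1}_1x$ is neither maximal nor minimal, and that $d^0_1\cdots d^0_1x$ and $d^1_1\cdots d^1_1x$ remain in $Q_0$ by the regularity of $x$---are correct details the paper leaves implicit.
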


\begin{proof}
By Lemma \ref{homotopylemma}, the inclusion ${\iota \colon |Q| \hookrightarrow |P|}$ is a a weak morphism from $\B$ to $\A$ that is a homotopy equivalence. By Lemma \ref{mj}, $\iota$ is a bijection between the sets of initial, final, maximal, and minimal vertices. By Theorem \ref{TCV}, the functor $\iota_*\colon TC(\B) \to TC(\A)$ is an isomorphism. By Theorem \ref{HV}, $\iota_*\colon H_*(|Q|) \to H_*(|P|)$ is a graph isomorphism. It follows that $\B \we \A$. 

Since $d^0_1\cdots d^0_1x \to_P d^{k_n}_1\cdots d^{k_1}_1x \to_P d^1_1\cdots d^1_1x$, Proposition \ref{coacc} implies that $\A$ is (co)accessible if $\B$ is (co)accessible. Suppose that $\A$ is accessible, and let $b$ be a vertex in $Q$. Then there exists a path $\alpha\in P^{\mathbb I}$ from an initial state $v$ to $b$. By Theorem  \ref{omegaprime}, there exists a path in $Q$ to $v$ to $b$. Thus $\B$ is accessible. An analogous argument shows that $\B$ is coaccessible if $\A$ is coaccessible.
\end{proof}

\section{An example: Peterson's mutual exclusion algorithm} \label{example}

In this section, we use our results to construct a small HDA model for Peterson's mutual exclusion algorithm. The principal properties of Peterson's algorithm can be derived from this model and the local independence relation associated with the system.

\subsection{Peterson's algorithm}
Peterson's algorithm is a protocol designed to give two processes fair and mutually exclusive access to a shared resource \cite{Peterson}. The algorithm is based on three shared variables, namely the turn variable $t$ whose possible values are the process IDs---let us assume that these are $0$ and $1$---and  the boolean variables $b_0$ and $b_1$, both initially $0$. Process $i$ executes the protocol given by the program graph depicted in Figure \ref{peterPG}. 
\begin{figure}[] 
\begin{tikzpicture}[initial text={},on grid] 
     
 \node[state,accepting, initial by arrow, initial where=left, initial distance=0.2cm,minimum size=0pt,inner sep =1pt,fill=white] (q_0)   {\scalebox{0.85}{$\ell_0$}}; 
 
 \node[state,minimum size=0pt,inner sep =1pt,fill=white] (q_1) [below=of q_0,yshift=-1cm] {\scalebox{0.85}{$\ell_1$}};
    
 \node[state,minimum size=0pt,inner sep =1pt,fill=white] (q_2) [right=of q_1,xshift=1cm] {\scalebox{0.85}{$\ell_2$}};
   
 \node[state,minimum size=0pt,inner sep =1pt,fill=white] (q_3) [above=of q_2,yshift=1cm] {\scalebox{0.85}{$\ell_3$}};

    \path[->] 
    (q_0) edge[left,dotted,inner sep =5pt,out=260,in=80] node {\scalebox{0.85}{$b_i\!\coloneqq \!1$}} (q_1)
    (q_1) edge[above,dotted,inner sep =5pt,out=350,in=170] node {\scalebox{0.85}{$t\!\coloneqq \!1-i$}} (q_2)
    (q_2) edge[right,dotted,out=80,in=260] node {\scalebox{0.85}{$(b_{1-i}\!= \!0 \vee t\! =\! i)\colon\!crit$}} (q_3)
    (q_3) edge[above,dotted,out=170,in=350] node {\scalebox{0.85}{$b_i\!\coloneqq \!0$}} (q_0);

\end{tikzpicture} 

\caption{Program graph for process $i$ in Peterson's algorithm}\label{peterPG}
\end{figure}
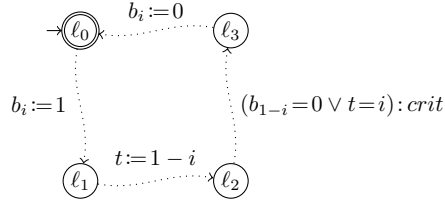
It first sets variable $b_i$ to $1$ in order to indicate that it intends to enter the critical section and access the shared resource. Then it gives priority to the other process by setting the turn variable $t$ to $1-i$. After this, it waits until the other process does not intend to enter the critical section or it is its own turn to do so. When one of these conditions is satisfied, it enters the critical section and accesses the shared resource. The corresponding action $crit$ has no effect on the shared variables $t, b_0,$ and $b_1$. After having used the shared resource, process $i$ leaves the critical section and sets variable $b_i$ to $0$ again. The procedure is now repeated arbitrarily often or forever.

\subsection{HDA models} An HDA model $\A$ of the accessible part of the system given by Peterson's algorithm is depicted in Figure \ref{peterHDA}. In the upper initial and final state, $t = 0$; in the lower one, $t = 1$. The monoid of labels is the free monoid $M = \Sigma ^*$ where $\Sigma $ consists of the actions of the processes indexed by the process IDs: $$\Sigma = \{b_i:=_i1, \; b_i:=_i0,\; t:=_i1-i,\;crit_i\; :\; i=0,1\}.$$

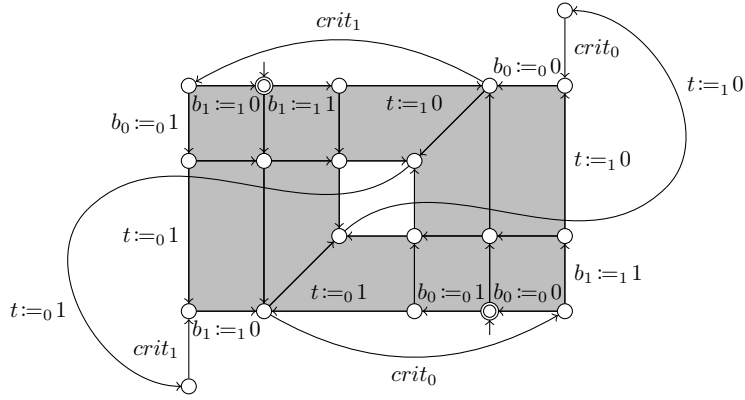
\begin{figure}[]
\begin{tikzpicture}[initial text={},on grid]

\path[draw, fill=lightgray] (0,0)--(1,0)--(1,-1)--(0,-1)--cycle;    
  
\path[draw, fill=lightgray] (1,0)--(2,0)--(2,-1)--(1,-1)--cycle;

\path[draw, fill=lightgray] (2,0)--(4,0)--(3,-1)--(2,-1)--cycle;

\path[draw, fill=lightgray] (0,-1)--(1,-1)--(1,-3)--(0,-3)--cycle;

\path[draw, fill=lightgray] (1,-1)--(2,-1)--(2,-2)--(1,-3)--cycle;

\path[draw, fill=lightgray] (3,-1)--(4,0)--(4,-2)--(3,-2)--cycle;

\path[draw, fill=lightgray] (4,0)--(5,0)--(5,-2)--(4,-2)--cycle;

\path[draw, fill=lightgray] (2,-2)--(3,-2)--(3,-3)--(1,-3)--cycle;

\path[draw, fill=lightgray] (3,-2)--(4,-2)--(4,-3)--(3,-3)--cycle;

\path[draw, fill=lightgray] (4,-2)--(5,-2)--(5,-3)--(4,-3)--cycle;

\draw[] (3,-1) to [out=225,in=45] (-1.35,-1.65); 

\draw[->] (-1.35,-1.65) to [out=225,in=180] (-0.1,-4);

\node at (-2,-3) {\scalebox{0.85}{$t\!\coloneqq_0\! 1$}};

\draw[] (2,-2) to [out=45,in=225] (6.35,-1.35); 

\draw[->] (6.35,-1.35) to [out=45,in=0] (5.1,1);

\node at (7,0) {\scalebox{0.85}{$t\!\coloneqq_1\! 0$}};

\node[state,minimum size=0pt,inner sep =2pt,fill=white] (q_0) at (0,0)  {}; 
    
\node[state,minimum size=0pt,inner sep =2pt,fill=white,accepting, initial,initial where=above,initial distance=0.2cm] (q_1) [right=of q_0,xshift=0cm] {};
   
\node[state,minimum size=0pt,inner sep =2pt,fill=white] (q_2) [right=of q_1,xshift=0cm] {};

\node[state,minimum size=0pt,inner sep =2pt,fill=white] (q_3) [right=of q_2,xshift=1cm] {};

\node[state,minimum size=0pt,inner sep =2pt,fill=white] (q_4) [right=of q_3,xshift=0cm] {};
   
\node[state,minimum size=0pt,inner sep =2pt,fill=white] (q_5) [below=of q_0,xshift=0cm] {};  

\node[state,minimum size=0pt,inner sep =2pt,fill=white] (q_6) [right=of q_5,xshift=0cm] {};

\node[state,minimum size=0pt,inner sep =2pt,fill=white] (q_7) [right=of q_6,xshift=0cm] {};

\node[state,minimum size=0pt,inner sep =2pt,fill=white] (q_8) [right=of q_7,xshift=0cm] {};

\node[state,minimum size=0pt,inner sep =2pt,fill=white] (q_9) [below=of q_7,xshift=0cm] {};

\node[state,minimum size=0pt,inner sep =2pt,fill=white] (q_10) [right=of q_9,xshift=0cm] {};

\node[state,minimum size=0pt,inner sep =2pt,fill=white] (q_11) [right=of q_10,xshift=0cm] {};

\node[state,minimum size=0pt,inner sep =2pt,fill=white] (q_12) [right=of q_11,xshift=0cm] {};

\node[state,minimum size=0pt,inner sep =2pt,fill=white] (q_17) [below=of q_12,xshift=0cm] {};

\node[state,minimum size=0pt,inner sep =2pt,fill=white,accepting,initial,initial where=below,initial distance=0.2cm] (q_16) [left=of q_17,xshift=0cm] {};

\node[state,minimum size=0pt,inner sep =2pt,fill=white] (q_15) [left=of q_16,xshift=0cm] {};

\node[state,minimum size=0pt,inner sep =2pt,fill=white] (q_14) [left=of q_15,xshift=-1cm] {};

\node[state,minimum size=0pt,inner sep =2pt,fill=white] (q_13) [left=of q_14,xshift=0cm] {};

\node[state,minimum size=0pt,inner sep =2pt,fill=white] (p_0) [above=of q_4,xshift=0cm] {};

\node[state,minimum size=0pt,inner sep =2pt,fill=white] (p_1) [below=of q_13,xshift=0cm] {};

    \path[->] 
    (p_0) edge[right] node {\scalebox{0.85}{$crit_0$}} (q_4)
    
    (q_0) edge[below] node {\scalebox{0.85}{$b_1\!\coloneqq_1\! 0$}} (q_1)
    (q_1) edge[below] node {\scalebox{0.85}{$b_1\!\coloneqq_1\! 1$}} (q_2)
    (q_2) edge[below] node {\scalebox{0.85}{$t\!\coloneqq_1\! 0$}} (q_3)
    (q_4) edge[above] node {\scalebox{0.85}{$b_0\!\coloneqq_0\! 0$}} (q_3)
    
   	(q_0) edge[left] node {\scalebox{0.85}{$b_0\!\coloneqq_0\! 1$}} (q_5)
   	(q_1) edge[above] node {} (q_6)
   	(q_2) edge[above] node {} (q_7)
   	(q_3) edge[above] node {} (q_8)
   	
   	(q_5) edge[above] node {} (q_6)
   	(q_6) edge[above] node {} (q_7)
   	(q_7) edge[above] node {} (q_8)
   	
   	(q_7) edge[above] node {} (q_9)
   	(q_10) edge[above] node {} (q_8)
   	(q_11) edge[above] node {} (q_3)
   	(q_12) edge[right] node {\scalebox{0.85}{$t\!\coloneqq_1\! 0$}} (q_4)
   	
   	(q_10) edge[above] node {} (q_9)
   	(q_11) edge[above] node {} (q_10)
   	(q_12) edge[above] node {} (q_11)
 
 	(q_5) edge[left] node {\scalebox{0.85}{$t\!\coloneqq_0\! 1$}} (q_13)
 	(q_6) edge[above] node {} (q_14)
 	(q_14) edge[above] node {} (q_9)
 	(q_15) edge[above] node {} (q_10)
 	(q_16) edge[above] node {} (q_11)
 	(q_17) edge[right] node {\scalebox{0.85}{$b_1\!\coloneqq_1\! 1$}} (q_12)
 	
 	(q_13) edge[below] node {\scalebox{0.85}{$b_1\!\coloneqq_1\! 0$}} (q_14)
 	(q_15) edge[above] node {\scalebox{0.85}{$t\!\coloneqq_0\! 1$}} (q_14)
 	(q_16) edge[above] node {\scalebox{0.85}{$b_0\!\coloneqq_0\! 1$}} (q_15)
 	(q_17) edge[above] node {\scalebox{0.85}{$b_0\!\coloneqq_0\! 0$}} (q_16)
 	
 	(p_1) edge[left] node {\scalebox{0.85}{$crit_1$}} (q_13)
 	
 	(q_3) edge[above, bend right] node {\scalebox{0.85}{$crit_1$}} (q_0)
 	(q_14) edge[below, bend right] node {\scalebox{0.85}{$crit_0$}} (q_17)
  	;
\end{tikzpicture}
\caption{HDA for Peterson's algorithm}\label{peterHDA}
\end{figure}

By Theorem \ref{freefacecollapse}, we obtain a topological abstraction of $\A$ by collapsing the lower left $2$-cube and its left vertical edge. Theorem \ref{red2} permits us then to collapse the star of the lower left vertex of the upper left square. Similarly, we can eliminate the two rightmost $2$-cubes. Merging edges we obtain the HDA in Figure \ref{peterHDA2}, which, by Theorem \ref{homeoabs}, is a
topological abstraction of the HDA for Peterson's algorithm.

\begin{figure}[]
\begin{tikzpicture}[initial text={},on grid]

\path[draw, fill=lightgray] (1,0)--(2,0)--(2,-1)--(1,-1)--cycle;

\path[draw, fill=lightgray] (2,0)--(4,0)--(3,-1)--(2,-1)--cycle;

\path[draw, fill=lightgray] (1,-1)--(2,-1)--(2,-2)--(1,-3)--cycle;

\path[draw, fill=lightgray] (3,-1)--(4,0)--(4,-2)--(3,-2)--cycle;

\path[draw, fill=lightgray] (2,-2)--(3,-2)--(3,-3)--(1,-3)--cycle;

\path[draw, fill=lightgray] (3,-2)--(4,-2)--(4,-3)--(3,-3)--cycle;

\draw[] (3,-1) to [out=225,in=45] (-0.35,-1.65); 

\draw[->] (-0.35,-1.65) to [out=225,in=180] (0.9,-3);

\node[align=left] at (-1,-2) {\scalebox{0.85}{$t\!\coloneqq_0\! 1;$}\\\scalebox{0.85}{$crit_1;$}\\\scalebox{0.85}{$b_1\!\coloneqq_1\! 0$}};

\draw[] (2,-2) to [out=45,in=225] (5.35,-1.35); 

\draw[->] (5.35,-1.35) to [out=45,in=0] (4.1,0);

\node[align=left] at (6.1,-1) {\scalebox{0.85}{$t\!\coloneqq_1\! 0;$}\\\scalebox{0.85}{$crit_0;$}\\\scalebox{0.85}{$b_0\!\coloneqq_0\! 0$}};

\node[state,minimum size=0pt,inner sep =2pt,fill=white,accepting,initial,initial where=above,initial distance=0.2cm] (q_1) at (1,0) {};
   
\node[state,minimum size=0pt,inner sep =2pt,fill=white] (q_2) [right=of q_1,xshift=0cm] {};

\node[state,minimum size=0pt,inner sep =2pt,fill=white] (q_3) [right=of q_2,xshift=1cm] {};

\node[state,minimum size=0pt,inner sep =2pt,fill=white] (q_6) [right=of q_5,xshift=0cm] {};

\node[state,minimum size=0pt,inner sep =2pt,fill=white] (q_7) [right=of q_6,xshift=0cm] {};

\node[state,minimum size=0pt,inner sep =2pt,fill=white] (q_8) [right=of q_7,xshift=0cm] {};

\node[state,minimum size=0pt,inner sep =2pt,fill=white] (q_9) [below=of q_7,xshift=0cm] {};

\node[state,minimum size=0pt,inner sep =2pt,fill=white] (q_10) [right=of q_9,xshift=0cm] {};

\node[state,minimum size=0pt,inner sep =2pt,fill=white] (q_11) [right=of q_10,xshift=0cm] {};

\node[state,minimum size=0pt,inner sep =2pt,fill=white,accepting,initial,initial where=below,initial distance=0.2cm] (q_16) [left=of q_17,xshift=0cm] {};

\node[state,minimum size=0pt,inner sep =2pt,fill=white] (q_15) [left=of q_16,xshift=0cm] {};

\node[state,minimum size=0pt,inner sep =2pt,fill=white] (q_14) [left=of q_15,xshift=-1cm] {};

    \path[->] 
    
    (q_1) edge[below] node {\scalebox{0.85}{$b_1\!\coloneqq_1\! 1$}} (q_2)
    (q_2) edge[below] node {\scalebox{0.85}{$t\!\coloneqq_1\! 0$}} (q_3)

   	(q_1) edge[left] node {\scalebox{0.85}{$b_0\!\coloneqq_0\! 1$}} (q_6)
   	(q_2) edge[above] node {} (q_7)
   	(q_3) edge[above] node {} (q_8)

   	(q_6) edge[above] node {} (q_7)
   	(q_7) edge[above] node {} (q_8)
   	
   	(q_7) edge[above] node {} (q_9)
   	(q_10) edge[above] node {} (q_8)
   	(q_11) edge[right] node {\scalebox{0.85}{$t\!\coloneqq_1\! 0$}} (q_3)

   	(q_10) edge[above] node {} (q_9)
   	(q_11) edge[above] node {} (q_10)

 	(q_6) edge[left] node {\scalebox{0.85}{$t\!\coloneqq_0\! 1$}} (q_14)
 	(q_14) edge[above] node {} (q_9)
 	(q_15) edge[above] node {} (q_10)
 	(q_16) edge[right] node {\scalebox{0.85}{$b_1\!\coloneqq_1\! 1$}} (q_11)

 	(q_15) edge[above] node {\scalebox{0.85}{$t\!\coloneqq_0\! 1$}} (q_14)
 	(q_16) edge[above] node {\scalebox{0.85}{$b_0\!\coloneqq_0\! 1$}} (q_15)

 	(q_3) edge[above, bend right] node {\scalebox{0.85}{$crit_1;b_1\!\coloneqq_1\! 0$}} (q_1)
 	(q_14) edge[below, bend right] node {\scalebox{0.85}{$crit_0;b_0\!\coloneqq_0\! 0$}} (q_16)
  	;
\end{tikzpicture}
\caption{A first topological abstraction of the HDA in Figure \ref{peterHDA}}\label{peterHDA2}
\end{figure}
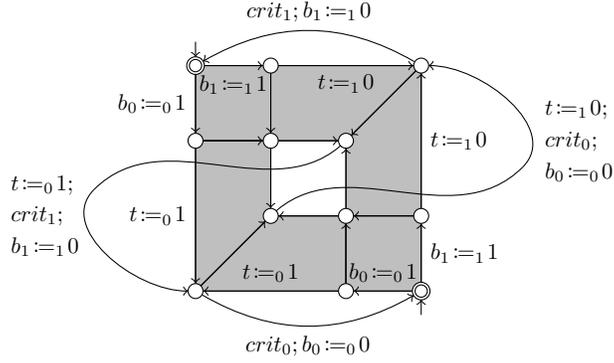

Let $\B = (P,I,F,\lambda)$ be the the HDA in Figure \ref{peterHDA2}, and let
$x$ be the upper right  $2\mbox{-}$cube of $\B$. This  $2$-cube is regular and has a free back face $d^1_ix$, which is its lower horizontal edge. Consider the precubical set $Q = P \setminus \star(d^ 1_ix)$ and the HDA $\C = (Q,I,F,\lambda|_{Q_1})$. We claim that $\C \we \B$. Since $e^1_ix = d^0_{3-i}x$ (the left vertical edge of $x$) is not the only edge ending in $d^0_1d^1_ix$, we cannot argue using Theorem \ref{freefacecollapse} and have  to use the results of Sections \ref{TC} and \ref{HG} to check the defining properties of topological abstraction individually. Homotopy equivalence is clear. Property (1) of \ref{topabdef} follows from Lemma \ref{minmax}. We are precisely in the situation of the example in Section \ref{anotherex} and have therefore an isomorphism of homology graphs. Note that this also follows from the fact that $a\to_P b$ for all $a,b \in P_0$. In order to show that we have an isomorphism of trace categories, we check the conditions of Theorem \ref{TC2}. It is clear that the first condition holds. Let $y$ denote the unique edge ending in $d^0_1d^0_1x$. Let $\omega \in Q^{\mathbb I}$ be a path from ${I \cup F \cup m_0(P) \cup m_1(P) \cup \{d^1_1d^ 1_1x\}}$ to $d^0_1d^1_ix$. Then $\omega$ passes through the upper initial state, and therefore $[\omega]$ is right-divisible in $\vec \pi_1(Q)$ by $[y_{\sharp}\cdot (e^1_{i}x)_{\sharp}]$ and hence by $[(e^1_ix)_{\sharp}]$. By Proposition \ref{dihocancel}, $y_{\sharp}\cdot (e^1_{i}x)_{\sharp}$ has the right dihomotopy cancellation property in $Q$. Since $y$ is the only edge ending in $d^0_1d^0_1x$, it follows that $(e^1_{i}x)_{\sharp}$ also has the right dihomotopy cancellation property in $Q$. Therefore the second condition of Theorem \ref{TC2} holds, and we have an isomorphism of trace categories. Thus $\C \we \B$. 

We can now use Theorem \ref{freefacecollapse} to eliminate the upper left square of $\C$ and its right vertical edge. By Theorem \ref{red2}, we can then collapse the star of the upper right vertex of the lower $2$-cube in the left column. Proceeding as before, we eliminate the remaining three $2$-cubes of $\B$. Merging edges  we then obtain our final topological abstraction of the HDA for Peterson's algorithm in Figure \ref{peterHDAmin}. It can be shown that this HDA model for Peterson's algorithm is minimal in the sense that it is a topological abstraction of $\A$ with a minimal number of cubes.

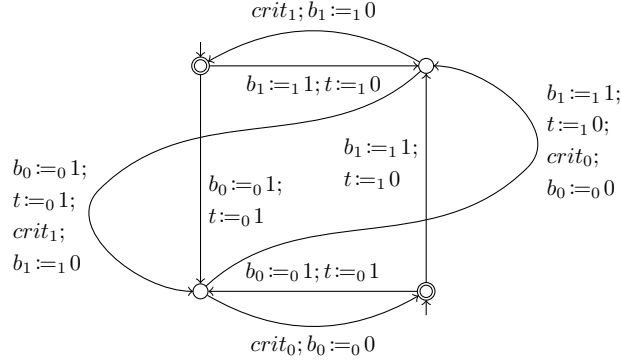
\begin{figure}[]
\begin{tikzpicture}[initial text={},on grid]

\draw[] (4,0) to [out=225,in=45] (-0.35,-1.65); 

\draw[->] (-0.35,-1.65) to [out=225,in=180] (0.9,-3);

\node[align=left] at (-1,-2) {\scalebox{0.85}{$b_0\!\coloneqq_0\! 1;$}\\\scalebox{0.85}{$t\!\coloneqq_0\! 1;$}\\\scalebox{0.85}{$crit_1;$}\\\scalebox{0.85}{$b_1\!\coloneqq_1\! 0$}};

\draw[] (1,-3) to [out=45,in=225] (5.35,-1.35); 

\draw[->] (5.35,-1.35) to [out=45,in=0] (4.1,0);

\node[align=left] at (6.1,-1) {\scalebox{0.85}{$b_1\!\coloneqq_1\! 1;$}\\\scalebox{0.85}{$t\!\coloneqq_1\! 0;$}\\\scalebox{0.85}{$crit_0;$}\\\scalebox{0.85}{$b_0\!\coloneqq_0\! 0$}};

\node[align=left] at (1.6,-1.8) {\scalebox{0.85}{$b_0\!\coloneqq_0\! 1;$}\\\scalebox{0.85}{$t\!\coloneqq_0\! 1$}}; 

\node[align=left] at (3.4,-1.3) {\scalebox{0.85}{$b_1\!\coloneqq_1\! 1;$}\\\scalebox{0.85}{$t\!\coloneqq_1\! 0$}};

\node[state,minimum size=0pt,inner sep =2pt,fill=white,accepting,initial,initial where=above,initial distance=0.2cm] (q_1) at (1,0) {};

\node[state,minimum size=0pt,inner sep =2pt,fill=white] (q_3) [right=of q_2,xshift=1cm] {};

\node[state,minimum size=0pt,inner sep =2pt,fill=white,accepting,initial,initial where=below,initial distance=0.2cm] (q_16) [left=of q_17,xshift=0cm] {};

\node[state,minimum size=0pt,inner sep =2pt,fill=white] (q_14) [left=of q_15,xshift=-1cm] {};

    \path[->] 
    
    (q_1) edge[below] node {\scalebox{0.85}{$b_1\!\coloneqq_1\! 1;t\!\coloneqq_1\! 0$}} (q_3)

   	(q_1) edge[right] node{}(q_14)

 	(q_16) edge[right] node {} (q_3)

 	(q_16) edge[above] node {\scalebox{0.85}{$b_0\!\coloneqq_0\! 1;t\!\coloneqq_0\! 1$}} (q_14)

 	(q_3) edge[above, bend right] node {\scalebox{0.85}{$crit_1;b_1\!\coloneqq_1\! 0$}} (q_1)
 	(q_14) edge[below, bend right] node {\scalebox{0.85}{$crit_0;b_0\!\coloneqq_0\! 0$}} (q_16)
  	;
\end{tikzpicture}
\caption{Final reduced HDA model for Peterson's algorithm}\label{peterHDAmin}
\end{figure}

\subsection{Properties of Peterson's algorithm} 

The most important property of Peterson's algorithm is, of course, mutual exclusion. This safety property can be expressed as the  $\A$-dihomotopy invariant property $$\bigcap _{i=0}^1\Sigma^* \setminus \left(\Sigma ^* \cdot \{crit_i\}\cdot (\Sigma\setminus \{b_i:=_i0\})^ * \cdot \{crit_{1-i}\}\cdot \Sigma^ *\right),$$
at least if we restrict our attention to successful finite executions of the system. We remark that in order to check that such a property is $\A$-dihomotopy invariant, it is not necessary to have full information on the HDA $\A$: it suffices  to know its local independence relation (see Example \ref{exproperty}). By Theorem \ref{dihoprop}, in order to verify that $\A$ has the above mutual exclusion property, it is enough to check it for the smaller HDA in Figure \ref{peterHDAmin}. 

Since $\A$ is coaccessible, the fact that it has the above mutual exclusion property implies that it has the mutual exclusion property for arbitrary finite and infinite executions. Indeed, otherwise mutual exclusion would fail to hold in some finite execution, which could be extended to a successful finite execution violating the mutual exclusion property. Note that the coaccessibility of $\A$ can be derived from the coaccessibility of the HDA in Figure \ref{peterHDAmin} using the results of the previous subsections and Proposition \ref{coacc}, which applies in the case of the elementary collapses where we could not use Theorem \ref{freefacecollapse}.

The coaccessibility of $\A$ also implies that the system given by Peterson's algorithm is deadlock-free. This also follows from the fact that the HDA in Figure \ref{peterHDAmin}---and hence $\A$---has no maximal vertices. 

Another important feature of Peterson's algorithm is starvation freedom: A process that requests access to the critical section will eventually obtain it. This liveness property can be inferred from $\A$-dihomotopy invariant properties, which may be verified for the HDA in Figure \ref{peterHDAmin} instead of for $\A$ itself. For successful finite executions, starvation freedom is the $\A$-dihomotopy invariant property
$$\bigcap \limits_{i=0}^1 \Sigma^* \setminus \left(\Sigma ^* \cdot \{b_i:=_i1\} \cdot (\Sigma \setminus \{crit_i\})^*\right).$$
For infinite executions, starvation freedom follows from the following $\A$-dihomotopy invariant properties where $a \in \Sigma \setminus \{crit_0,crit_1\}$ and $i \in \{0,1\}$:
$$\begin{array}{l} \bullet\;\; \Sigma^* \setminus \left(\Sigma^* \cdot \{a\} \cdot (\Sigma\setminus \{crit_0,crit_1\})^*\cdot \{a\}\cdot \Sigma^*\right)\\
\bullet\;\;  \Sigma^* \setminus \left(\Sigma ^* \cdot \{b_i:=_i1\} \cdot (\Sigma \setminus \{crit_i\})^*\cdot \{crit_{1-i}\} \cdot (\Sigma\setminus \{crit_i\})^*\cdot \{crit_{1-i}\}\cdot \Sigma^*\right)\end{array}$$

\end{document}